\documentclass[twoside,11pt]{article}
\usepackage[T1]{fontenc}
\usepackage{lmodern,microtype,mathtools,amsthm,float}
\usepackage[abbrvbib,preprint]{jmlr2e}
\usepackage[letterpaper,margin=1in]{geometry}
\usepackage{booktabs,array,tabularx,enumitem,needspace}
\usepackage{xcolor}
\usepackage{tikz}
\usepackage{algorithm}
\usepackage{algpseudocode}
\makeatletter
\renewenvironment{proof}[1][Proof]{\par\pushQED{\qed}\normalfont
\topsep6\p@\@plus6\p@\relax\trivlist
\item[\hskip\labelsep\bfseries #1\@addpunct{.}]\ignorespaces
}{\popQED\endtrivlist\@endpefalse}
\makeatother

\hypersetup{hidelinks,pdftitle={0.4633-Approximation for Online and Offline Non-Monotone DR-Submodular Maximization},pdfauthor={Vaneet Aggarwal and Donglei Du}}
\allowdisplaybreaks[2]
\numberwithin{theorem}{section}
\newcommand{\R}{\mathbb R}
\newcommand{\E}{\mathbb E}
\newcommand{\one}{\mathbf 1}
\newcommand{\zero}{\mathbf 0}

\newcommand{\ip}[2]{\langle #1,#2\rangle}
\newcommand{\norm}[1]{\lVert #1\rVert}

\newcommand{\cprof}{\mathfrak c}

\newcommand{\dd}{\,\mathrm dt}
\newcommand{\yesbox}{\ensuremath{\boxed{\scriptstyle\checkmark}}}
\newcommand{\nobox}{\ensuremath{\boxed{\scriptstyle\phantom{\checkmark}}}}
\DeclareMathOperator{\Reg}{Reg}
\DeclareMathOperator{\diam}{diam}
\DeclareMathOperator{\diag}{diag}

\DeclareMathOperator{\Unif}{Unif}
\title{0.4633-Approximation for Online and Offline\\
Non-Monotone DR-Submodular Maximization}
\author{\name Vaneet Aggarwal \email vaneet@purdue.edu\\
  \addr Purdue University
  \AND
  \name Donglei Du \email ddu@unb.ca\\
  \addr Faculty of Management, University of New Brunswick}

\firstpageno{1}
\begin{document}
\raggedbottom
\maketitle
\thispagestyle{plain}
\begin{abstract}
We obtain the approximation coefficient
$\alpha_0=54440750000/117486474169>0.463378$ for maximizing nonnegative,
non-monotone DR-submodular functions over compact convex down-closed
subsets of the unit cube, with queries allowed throughout the cube. This
improves the $4/9$ benchmark. A coordinatewise map,
exponential on small coordinates and linear-fractional on large ones,
turns the learner's state, a feasible point, into a feasible action whose value is at least $\alpha_0$
times that of any feasible point, up to a linear term. The proof is
analytical and reduces to one-dimensional inequalities. The linear term
has an unbiased estimator using at most one noisy-gradient query, so
online maximization reduces to linear learning. With conditionally
unbiased, uniformly bounded gradient responses, the expected approximate
regret is $O(\sqrt T)$ with projected ascent and $O(T^{3/4})$ with one
linear-maximization call per round. With uniformly bounded gradients and
one conditionally unbiased, uniformly bounded noisy-value response per
round, the respective rates are $O(T^{3/4})$ and $O(T^{7/8})$.
Online-to-batch conversion gives offline guarantees. With exact values
and gradients, contracting the objective toward a feasible interior point
yields an expected $(\alpha_0-\varepsilon)$-approximation for every
nonnegative $C^1$ DR-submodular objective, with oracle counts polynomial
in $n$ and $\varepsilon^{-1}$ and no gradient or smoothness parameter.
Sampling and standard rounding give $(\alpha_0-\varepsilon)$-approximation
for submodular set functions under one matroid or a fixed number of
knapsack constraints.
\end{abstract}
\begin{keywords}
DR-submodular maximization, submodular maximization, online optimization,
approximation algorithms, upper-linearizability
\end{keywords}
{\small\setcounter{tocdepth}{2}\tableofcontents}
\section{Introduction}\label{sec:introduction}
Diminishing-returns (DR) submodularity requires each marginal derivative
not to increase when any coordinate increases. We study maximization of
nonnegative DR-submodular functions over compact convex down-closed
subsets of the unit cube. Such a set remains feasible when any
coordinate of a feasible vector is decreased toward zero. The problem
includes multilinear relaxations of submodular set-function maximization
under matroid and knapsack constraints
\citep{bian2017continuous,buchbinder2024}. In the online setting, the
learner selects each action before observing that round's objective and
competes with the best fixed feasible point in hindsight.

For arbitrary compact convex down-closed subsets of the unit cube,
\citet{buchbinder2024} obtained an offline coefficient $401/1000$ for
smooth objectives, with a smoothness-dependent additive error and
uniform polynomial oracle complexity. In the online setting,
\citet{lu2026} showed that the problem is $1/e$-linearizable, which
reduces it to online linear optimization with coefficient $1/e$, and
\citet[Corollary~4.2]{geometry2026} established the coefficient $4/9$ for an
inequality of the form \eqref{eq:intro-comparison} below. Their online regret and the resulting offline query bounds
depend on the gradient scale.

\Needspace{5\baselineskip}
We improve the approximation coefficient for this class to
\begin{equation}\label{eq:intro-coefficient}
 \alpha_0=\frac{54440750000}{117486474169}=0.4633788\ldots.
\end{equation}
Figure~\ref{fig:coefficients} compares it with earlier coefficients and
with the known oracle upper bound. We obtain this coefficient by
showing that the problem is
upper-linearizable in the sense of \citet{pedramfar2024linear}: there
is a recovery map $h$ and a comparison vector $g(f,x)$ such that
\begin{equation}\label{eq:intro-comparison}
 \alpha_0f(o)-f(h(x))\le\alpha_0\ip{g(f,x)}{o-x}
\end{equation}
for every feasible state $x$ (the point the learner maintains) and every
feasible point $o$. The map $h$ turns the
learner's state into a feasible action and does not depend on the
objective; only $g$ uses the current objective. The online reduction
of \citet{pedramfar2024linear} therefore applies with either a
projected or a linear-maximization-based learner, preserving the
coefficient in both cases. Table~\ref{tab:comparison} compares these
guarantees with prior work. The known oracle upper bound for
this class is $\beta_*\approx0.47043868$, even with exact
values and full gradients; it applies offline and transfers to online
algorithms by repeating one objective
\citep[Corollary~6.4 and Proposition~7.3]{geometry2026}.

\paragraph{Technical novelty.}
We introduce a recovery map that treats small and large coordinates
differently. With $b=217763/250000$, a state coordinate $r\le1-b$, which
we call small, is mapped to $1-e^{-r}$, and a larger coordinate is
mapped to $r/(b+r)$. A single linear-fractional map would be infeasible
on small coordinates for $b<1$, while the exponential branch alone loses
too much on large coordinates; splitting the state into these two blocks
is what allows $b<1$ and hence a coefficient above $4/9$. Both branches
satisfy $0\le h(r)\le r$, so the recovered action is feasible under any
down-closed constraint. After subtracting a greedy affine lower bound in
an order matched to this map (Lemma~\ref{lem:normalization}), the large
block is compared through secants that move its coordinates one at a
time, as in the proof of \citet[Lemma~4.1]{geometry2026}, but keeping the
intermediate values that their bound discards (Lemma~\ref{lem:ordered}), and the small block is compared along
the path $t\mapsto1-e^{-tr}$ that ends at the recovered value
(Lemma~\ref{lem:two-comparisons}).

The two blocks are linked through one value on the boundary of the cube. After
normalization the objective vanishes when all small coordinates are zero
and all large coordinates are one. The large block can therefore be
compared with the small coordinates set to zero, at the cost of the
value $E(t)$ of the normalized objective on the face where all large
coordinates equal one, which the small-coordinate comparisons also produce,
and the small block can be compared with the large coordinates set to
zero; the inequality $F(o)\le F(o_S,\zero_L)+F(\zero_S,o_L)$ for the
normalized objective $F$ combines the two bounds
(Lemma~\ref{lem:two-comparisons}). We average these comparisons over the
path time $t\in[0,1]$. Early times use only the small-block comparison,
and the matching large-block comparison is used later with the same
total weight. Choosing the small-block weight proportional to the
remaining large-block weight makes the dependence of each block on the
other cancel after averaging, and the nonnegative slack of each
comparison, which the proof keeps, absorbs the difference in timing.

What remains are the intermediate values left by the large-block
comparison. A bound on their second derivative along the path controls how fast
they grow, which lets Lemma~\ref{lem:curvature} split them into layers indexed
by nested time intervals. A single inequality in the path time
(Lemma~\ref{lem:scalar}) then handles all layers at once, and
Appendix~\ref{app:scalar} proves it by elementary calculus. The resulting
vector $g$ is linear in the objective and averages partial derivatives
with bounded total weight, so sampling one term gives an unbiased
estimate from at most one gradient query.

\begin{table}[t]
\centering\small
\renewcommand{\arraystretch}{1.25}
\begin{tabularx}{\textwidth}{@{}>{\raggedright\arraybackslash}Xccccc@{}}
\toprule
Work & Online & Offline & \shortstack{Projection\\based}
 & \shortstack{Projection\\free} & \shortstack{Approximation\\coefficient} \\
\midrule
\citet{buchbinder2024} & \nobox & \yesbox & \nobox & \yesbox$^{*}$ & $401/1000$ \\
\citet{lu2026} & \yesbox & \yesbox & \yesbox & \yesbox & $1/e$ \\
\citet{geometry2026} & \yesbox & \yesbox & \yesbox & \yesbox$^{\dagger}$ & $4/9$ \\
\textbf{This work} & \yesbox & \yesbox & \yesbox & \yesbox & $\mathbf{54440750000/117486474169}$ \\
\bottomrule
\end{tabularx}
\caption{Leading approximation coefficients for nonnegative
DR-submodular objectives under down-closed convex constraints. The
cited results differ in their regularity assumptions and error terms;
for example, the guarantee of \citet{buchbinder2024} has an additive
error that depends on smoothness. The uniform $C^1$ guarantee of this
work is Theorem~\ref{thm:uniform}. An empty box means that the setting
is not claimed in the cited work. Projection columns specify access
to the feasible set. Offline entries for online methods use
online-to-batch conversion.
$^{*}$Linear optimization is required over intersections with additional
explicit polytopes. $^{\dagger}$Projection-free access follows by
substituting online Frank--Wolfe for the base learner, as justified in
Section~\ref{sec:related}.
}
\label{tab:comparison}
\end{table}

\begin{figure}[t]
\centering
\begin{tikzpicture}[x=1.22cm,y=1cm,font=\small]
 \fill[green!20] (0,-0.07) rectangle (10.338,0.07);
 \fill[gray!25] (10.338,-0.07) rectangle (11.044,0.07);
 \fill[red!15] (11.044,-0.07) rectangle (12,0.07);
 \draw[thick,->] (0,0) -- (12.4,0);
 \foreach \k/\lab in {0/0.36,2/0.38,4/0.40,6/0.42,8/0.44,10/0.46,12/0.48}{
  \draw (\k,-0.1) -- (\k,0.1);
  \node[below,font=\scriptsize] at (\k,-0.1) {$\lab$};}
 \fill (0.788,0) circle (2.2pt);
 \draw (0.788,0.1) -- (0.788,0.35);
 \node[above,align=center] at (0.788,0.35) {$1/e\approx0.368$\\ \citet{lu2026}};
 \fill (4.1,0) circle (2.2pt);
 \draw (4.1,0.1) -- (4.1,0.35);
 \node[above,align=center] at (4.1,0.35) {$0.401$\\ \citet{buchbinder2024}};
 \fill (8.444,0) circle (2.2pt);
 \draw (8.444,0.1) -- (8.444,0.35);
 \node[above,align=center] at (8.444,0.35) {$4/9\approx0.444$\\ \citet{geometry2026}};
 \fill[blue!70!black] (10.338,0) circle (3pt);
 \draw[blue!70!black,thick] (10.338,0.1) -- (10.338,1.45);
 \node[above,align=center,text=blue!70!black] at (10.338,1.45)
  {\textbf{Ours}: $\alpha_0\approx0.4634$};
 \draw[red!70!black,very thick] (11.044,-0.22) -- (11.044,0.22);
 \draw[red!70!black] (11.044,-0.22) -- (11.044,-0.6);
 \node[below,align=center,text=red!70!black] at (11.044,-0.6)
  {upper bound $\beta_*\approx0.4704$\\ \citet{geometry2026}};
\end{tikzpicture}
\caption{Approximation coefficients for nonnegative DR-submodular
maximization over compact convex down-closed sets. Dots mark previously
achieved coefficients; our coefficient $\alpha_0$ is marked in blue. No
algorithm making polynomially many value and gradient queries achieves a
coefficient above $\beta_*$ \citep{geometry2026}; the gray interval
between $\alpha_0$ and $\beta_*$ remains open.}
\label{fig:coefficients}
\end{figure}

\Needspace{5\baselineskip}
\paragraph{Contributions.}
\begin{enumerate}[leftmargin=*,itemsep=0.45em]
\item \textbf{An improved affine comparison with an analytical proof.}
Theorem~\ref{thm:structural} in Section~\ref{sec:structural} proves
the coefficient $\alpha_0$ for every nonnegative continuously
differentiable DR-submodular objective on the cube, using the new
two-branch recovery map. The map and the comparison vector are explicit, and the inequality holds
simultaneously for every state and comparator. The scalar inequality
behind the coefficient is proved by one-dimensional calculus in
Appendix~\ref{app:scalar}.

\item \textbf{A field implementation with at most one gradient query.}
Lemma~\ref{lem:one-query} in Section~\ref{subsec:one-gradient}
constructs a conditionally unbiased estimate of the comparison vector
from one noisy-gradient response. It gives the second-moment bounds
used by projected ascent and the bounded-response guarantees used by
the projection-free learner.

\item \textbf{Online and offline guarantees with either constraint oracle.}
The established online reduction, with projected ascent or online
Frank--Wolfe as its base learner, gives regret orders $\sqrt T$ and
$T^{3/4}$ with one gradient query or two exact value queries per round, and $T^{3/4}$ and
$T^{7/8}$ with one noisy value
(Corollaries~\ref{thm:online}--\ref{cor:online-values}).
Corollary~\ref{cor:offline-conversion} gives the offline guarantees
by online-to-batch conversion. In the exact-oracle model,
Theorem~\ref{thm:uniform} contracts the objective toward a feasible
interior point and gives an expected $(\alpha_0-\varepsilon)$-approximation for
every nonnegative $C^1$ DR-submodular objective, with polynomial oracle
counts and no gradient, width, or smoothness parameter. It uses ordinary
linear maximization, or, alternatively, projection onto the
width-normalized feasible set.

\item \textbf{Discrete consequences at the improved coefficient.}
Corollaries~\ref{cor:discrete-matroid} and~\ref{cor:discrete-knapsack}
in Section~\ref{subsec:discrete} combine multilinear sampling with
standard rounding to give $(\alpha_0-\varepsilon)$-approximation
under one matroid or a fixed number of knapsack constraints, for every
$\varepsilon>0$; the knapsack complexity is polynomial in $n$ for fixed
$\varepsilon$ and a fixed number of knapsacks. The projection-free matroid implementation uses
set-value and independence oracles. Remark~\ref{rem:matroid-gap}
states the corresponding upper bound under matroid and cardinality
constraints.

\item \textbf{Sharp geometric recovery.} We also determine the sharp coefficient $2\sqrt{3}-3$ for maps $h$ with $0\le h(x)\le x$ that do not depend on the objective, measured against the concave closure $f^+(x)$, the largest average of $f$ over distributions with mean $x$. Theorem~\ref{gcon:main-existence} gives bounded comparison vectors at this coefficient; computing such a vector with few queries remains open.

\end{enumerate}

\section{Related Work}\label{sec:related}
The comparison in Table~\ref{tab:comparison} concerns general
DR-submodular objectives under down-closed convex constraints.

\paragraph{Continuous approximation and coordinate comparisons.}
\citet{bian2017continuous} developed structural inequalities and
approximation algorithms for non-monotone DR-submodular maximization.
\citet[Theorem~1.1]{buchbinder2024} obtained
$401/1000\,\mathrm{OPT}-O(\delta LD^2)$ for an $L$-smooth objective
on a down-closed convex set of diameter $D$, in time polynomial in
dimension and $\delta^{-1}$. Their assumption (meta-solvability) permits
linear optimization over intersections with additional explicit
polytopes. Theorem~\ref{thm:uniform} gives a randomized expected
$(\alpha_0-\varepsilon)\mathrm{OPT}$ guarantee with no additive
smoothness term, using only ordinary linear maximization, with query
counts independent of gradient magnitudes and of any smoothness
modulus.
\citet[Corollary~4.2]{geometry2026} established coefficient $4/9$
for arbitrary compact convex down-closed subsets of the cube using a
linear-fractional recovery map. Their geometry-dependent bound is
stronger than 4/9 when the feasible set is known to contain a point
$\lambda\one$ with $\lambda>0$. Our approximation coefficient $\alpha_0$ improves on $4/9$ for
arbitrary compact convex down-closed sets, using down-closedness alone. Lemma~\ref{lem:ordered} carries out the computation in the proof of
their comparison \citep[Lemma~4.1]{geometry2026} on a slice of the large
block and keeps the intermediate values that their final bound
discards. Lemma~\ref{lem:normalization} uses the greedy affine lower
bound \citep[Proposition~3.2]{bach2013}.

\paragraph{Online reductions and base learners.}
\citet{pedramfar2024linear} introduced upper-linearizable functions,
a class that includes concave functions and several classes of
monotone and non-monotone up-concave functions. Their Online
Maximization by Quadratization (OMBQ)
\citep[Theorem~1]{pedramfar2024linear} converts online linear
optimization algorithms into algorithms for this class, and our
comparison places the present problem in it.
We apply this reduction with projected ascent \citep{zinkevich2003}
and online Frank--Wolfe \citep[Theorem~4.4]{hazan2012projection}.
The former permits conditionally unbiased feedback with bounded second
moments; the latter uses an almost-sure norm bound and one ordinary
linear maximization call per round. Both regret bounds hold for reward
vectors that depend on the learner's past states.

\citet{lu2026} obtained coefficient $1/e$ with one gradient query per
round, by showing that the problem is $1/e$-linearizable and hence
reducible to online linear optimization. \citet[Sections~5 and~7]{geometry2026}
obtained coefficient $4/9$ online and offline, including one-gradient
sampling. We extend their sampling argument, which draws one coordinate and one
point on a segment,
\citep[Lemma~5.1]{geometry2026} to the signed coefficients arising
from the affine normalization and path averaging. 

The reduction of \citet{lu2026} accepts any base learner for online
linear optimization, projection-based or projection-free, and their
static regret bound uses a projection-free learner that calls a
separation oracle \citep[Section~5]{lu2026}. The projection-free
entry for \citet{geometry2026} marked in Table~\ref{tab:comparison}
uses the same property: their base learner enters only through its
regret for linear rewards, and their $4/9$ comparison supplies these
rewards. Under a uniform bound on the gradients the rewards are
bounded, so online Frank--Wolfe can replace the base learner with the
same leading approximation coefficient.

\paragraph{Value feedback and offline conversion.}
The value-feedback reduction of
\citet[Theorems~5--6]{pedramfar2024linear} replaces gradient
observations by randomized value observations. The cube-domain
implementation of \citet[Lemma~5.3]{geometry2026} applies even when
the feasible set has empty interior. Section~\ref{subsec:value-feedback}
uses that implementation with our comparison vector. For a fixed
objective, online-to-batch (OTB) conversion selects a played action
uniformly at random, giving approximation error equal to regret
divided by the number of rounds
\citep[Theorem~8]{pedramfar2024linear}.

\paragraph{Discrete approximation and oracle limits.}
Multilinear extensions express set-function maximization as a continuous
DR-submodular problem. Randomized swap rounding preserves the
multilinear value in expectation under a matroid constraint
\citep{chekuri2010swap}. For a fixed number of knapsack constraints,
\citet[Section~2]{kulik2013knapsack} transfer a continuous approximation
to a feasible set with arbitrarily small coefficient loss.
\citet[Section~8]{geometry2026} applied these reductions at coefficient
$4/9$. Applying them to our comparison vector yields coefficient
$\alpha_0-\varepsilon$, with direct set-value sampling and an
implementation using ordinary linear optimization over the matroid
polytope.

For the continuous DR-submodular model considered here, \citet[Corollary~6.4 and Proposition~7.3]{geometry2026}
prove the oracle upper bound $\beta_*\approx0.47043868$, offline and
online, using the symmetry-gap framework of \citet{vondrak2013}.

\section{Problem Definition and Background}\label{sec:background}
We use the online model of \citet{geometry2026}: the learner commits
a feasible action before obtaining feedback about the current objective.
The comparison is with a single feasible point chosen in hindsight.

\subsection{Objectives, constraints, and feedback}\label{sec:online-model}
Let $n$ be the dimension and $Q=[0,1]^n$. Vector inequalities and products
are coordinatewise. Write $[n]=\{1,\ldots,n\}$, $e_i$ for the $i$th
coordinate vector, and $\one_A$ for the indicator vector of $A\subseteq[n]$.
For a set $B\subseteq\R$, $\one_B(t)$ is its indicator function, and
$\one\{\cdot\}$ equals one when its condition holds and zero otherwise.
The symbols $\zero$ and $\one$ denote the all-zero and all-one vectors
in the indicated space. The notation $\partial_i f$ denotes the partial
derivative in coordinate $i$. Inner products and norms are Euclidean, and
$r_+=\max\{r,0\}$ for a real number $r$. Boundary derivatives are
continuous extensions of the interior derivatives.
We use the differential definition of DR-submodularity.
\begin{definition}[DR-submodularity]\label{def:dr}
A continuously differentiable ($C^1$) function $f:Q\to\R$ is DR-submodular if
\[
 x\le y\quad\Longrightarrow\quad \nabla f(x)\ge\nabla f(y)
 \qquad(x,y\in Q).
\]
\end{definition}
Every objective is nonnegative on $Q$, although its partial derivatives
may have either sign. DR-submodularity implies concavity along segments
with coordinatewise nonnegative or nonpositive direction, since the
directional derivative is nonincreasing along such a segment. In particular,
for comparable $x,y$, directional tangency gives
$f(y)\le f(x)+\ip{\nabla f(x)}{y-x}$.

The feasible region $K\subseteq Q$ is nonempty, compact, convex, and
down-closed: $x\in K$ and $0\le y\le x$ imply $y\in K$. Thus
$\zero\in K$. Let $D$ be a supplied upper bound on $\diam(K)$;
$D=\sqrt n$ always suffices. We consider two forms of constraint access.
A Euclidean projection oracle returns
$\Pi_K(v)=\arg\min_{z\in K}\norm{v-z}$. An ordinary linear
maximization oracle (LMO) returns a maximizer of $\ip c z$ over $z\in K$
for any $c\in\R^n$. Constraint-oracle calls are counted separately from objective queries,
and tie-breaking is fixed and Borel measurable.

An oblivious adversary fixes $f_1,\ldots,f_T$ before the learner's
randomness. At round $t$, the learner selects $p_t\in K$, earns reward
$f_t(p_t)$, and then queries $f_t$ on $Q$. In the noisy-gradient model,
the response $\widehat\nabla f_t(y)$ at a queried point $y$ has conditional mean
$\nabla f_t(y)$ and conditional second moment at most $B_1^2$.
Conditioning includes all previous observations and random choices, the
current objective, and the chosen query point. Exact gradients are a
special case. For the standard projection-free learner, we additionally
assume $\norm{\widehat\nabla f_t(y)}\le B_1$ almost surely; the
projected learner only needs the stated second moment.
Section~\ref{subsec:value-feedback} also considers exact values and
conditionally unbiased noisy values. As in \citet[Sections~5.2--5.3]{geometry2026}, the query domain is
$Q$, while the rewarded action must belong to $K$.

For $\alpha\in(0,1]$, expected static approximate regret is
\begin{equation}\label{eq:regret-definition}
 \Reg_\alpha(T)=\alpha\max_{o\in K}\sum_{t=1}^T f_t(o)
 -\E\sum_{t=1}^T f_t(p_t).
\end{equation}
The comparator is one feasible point used in all rounds.
For exact-gradient bounds, define the endpoint gradient bound
\begin{equation}\label{eq:gradient-envelope}
 H_i(f)=\max\{|\partial_i f(\zero)|,|\partial_i f(\one)|\},\qquad
 H(f)=\left(\sum_{i=1}^n H_i(f)^2\right)^{1/2}.
\end{equation}
DR-submodularity gives
$\partial_i f(\zero)\ge\partial_i f(y)\ge\partial_i f(\one)$ for
$y\in Q$, hence $|\partial_i f(y)|\le H_i(f)$
\citep[Lemma~A.4]{geometry2026}.
The exact-gradient results use a supplied bound
$H(f_t)\le\overline H$. If $\norm{\nabla f_t(y)}\le B$ for every
$y\in Q$, then $H(f_t)^2\le2B^2$ by summing the two squared endpoint
norms. Thus either type of bound can be used to set the learning rate.

For the uniform offline guarantee we use the following exact-oracle
model. The objective $f$ is fixed,
nonnegative, $C^1$, and DR-submodular on $Q$, with
$\mathrm{OPT}=\max_{x\in K}f(x)$. Exact values and full gradients are
available on $Q$, together with an ordinary LMO over $K$. A coefficient
$\alpha$ is uniformly achievable in this model when, for every
$\varepsilon\in(0,1)$, the output $X\in K$ satisfies
\begin{equation}\label{eq:uniform-model}
 \E f(X)\ge(\alpha-\varepsilon)\mathrm{OPT}
\end{equation}
with oracle counts polynomial in $n$ and $\varepsilon^{-1}$, uniformly
over objective and gradient magnitudes and the coordinate widths
$\max_{x\in K}x_i$; no
smoothness parameter is assumed. Counts refer to exact-real oracle
calls, with the scalar arithmetic and sampling operations of the field
construction. The structural comparison and online results use the
$C^1$ assumptions stated above.

\subsection{Reduction to online linear optimization}\label{sec:linearizable}
An objective-independent map $h$ assigns a feasible action to each
state $x\in K$. We use the following specialization of the
upper-linearizable comparison of \citet{pedramfar2024linear}, in
which one vector bounds the action's loss against every comparator.
\begin{definition}[Upper-linearizable comparison]\label{def:linearizable}
For $0<\alpha\le1$, an objective-independent map $h:K\to K$ and a field
$g(f,x)\in\R^n$ give a comparison with coefficient $\alpha$ if, for every nonnegative $C^1$ DR-submodular
$f:Q\to\R$,
\begin{equation}\label{eq:linearizable}
 \alpha f(o)-f(h(x))\le\alpha\ip{g(f,x)}{o-x}
 \qquad(x,o\in K).
\end{equation}
For fixed $f$, the map $x\mapsto g(f,x)$ is called the comparison
field. Its value is independent of the comparator $o$.
\end{definition}
The normalization in \eqref{eq:linearizable} absorbs the scale of the
field. A right-hand side $\beta'\ip{\widetilde g(f,x)}{o-x}$ with
$\beta'>0$, as in \citet{pedramfar2024linear}, takes the displayed form
after setting $g=(\beta'/\alpha)\widetilde g$.

Summing \eqref{eq:linearizable} bounds the approximate reward gap by
$\alpha$ times the linear regret against the same fixed comparator.
The vectors $g(f_t,x_t)$ depend on the learner's states, so the
base-learner bound must allow reward vectors that depend on the past. Conditional
unbiasedness then permits replacement of these vectors by their
estimates in the expected bound, as proved in
Proposition~\ref{thm:online-transfer}.

\section{Structural Result}\label{sec:structural}
We construct a downward map $h$ and a field $g(f,x)$ satisfying
$f(o)\le\alpha_0^{-1}f(h(x))+\ip{g(f,x)}{o-x}$ for every pair of
cube points $x,o$.

The parameters are
\begin{equation}\label{eq:parameters}
\begin{gathered}
 b=\frac{217763}{250000},\qquad s=1-b=\frac{32237}{250000},\qquad
 t_0=\frac14,\\
 \Gamma=1+b+\frac{1}{4b}=\frac{117486474169}{54440750000},\qquad
 \alpha_0=\Gamma^{-1}=\frac{54440750000}{117486474169}>0.463378.
\end{gathered}
\end{equation}
Here $s=1-b$ is both the coordinate threshold and the rate in the
factors $e^{\pm st}$ that appear in the averaging weights below; $t_0$ and the constants in \eqref{eq:beta-c}
below enter only the averaging schedule.
For a scalar coordinate $r\in[0,1]$, the recovery is
\begin{equation}\label{eq:action-map}
 h(r)=\begin{cases}1-e^{-r},&0\le r\le s,\\
 r/(b+r),&s<r\le1.
 \end{cases}
\end{equation}
At $r=s$ the exponential branch applies, so the map jumps upward just
to the right of $s$; only Borel measurability of $h$ is used later.
Its coordinatewise extension is
$h(x)=(h(x_1),\ldots,h(x_n))$.
The first branch satisfies $1-e^{-r}\le r$. For the second branch,
$b+r>1$, so $r/(b+r)\le r$. Thus $0\le h(x)\le x$, and down-closedness
implies $h(x)\in K$ whenever $x\in K$.

\subsection{Construction of the field}\label{sec:field}
The construction has the form
\[
 f=\ell+F,\qquad
 g(f,x)=d+\int_0^1\mathcal J(t)\,dt,
\]
where $\ell(z)=f(\zero)+\ip dz$ is an affine normalizer, $F\ge0$ is the
normalized residual, and $\mathcal J(t)$ is a path contribution.
The definitions below come in three
groups: the normalization $\ell$ and the coordinate orders; the paths,
secants, and gradient combinations used by the coordinate comparisons;
and the scalar averaging weights that combine those comparisons.

Fix $x\in Q$ and $f$. All coordinate orders and scalar weights below
depend on $x$ but are independent of $f$; a subscript restricts a vector
or gradient to the indicated coordinates. The two blocks are
\[
 S=\{j\in[n]:x_j\le s\},\qquad L=[n]\setminus S,\qquad
 u=x_S,\quad v=x_L.
\]
We call the coordinates in $S$ small and those in $L$ large. A pair
of block vectors is merged in the original coordinate order when used
as a function argument. For $t\in[0,1]$, the small-coordinate path and
recovered large coordinates are
\begin{equation}\label{eq:paths-small}
 a_j(t)=1-e^{-tu_j}\quad(j\in S),\qquad
 p_i=\frac{v_i}{b+v_i}\quad(i\in L).
\end{equation}
A prime denotes differentiation with respect to $t$.
Then $h(x)=(a(1),p)$ and $a_j'(t)=u_j(1-a_j(t))$.

We use an affine normalization based on the greedy lower bound
\citep{bach2013}, with an order chosen for the present recovery.
The coefficients are $\gamma_i=h(x_i)-\alpha_0x_i$, and the permutation
$\pi(1),\ldots,\pi(n)$ orders them nonincreasingly, with ties broken
by index. The prefixes are $C_0=\varnothing$ and
$C_\iota=\{\pi(1),\ldots,\pi(\iota)\}$. We call the points $\one_{C_\iota}$ ($0\le\iota\le n$) the chain
vertices. For $z\in Q$, the normalizer is
\begin{equation}\label{eq:normalizer}
 d_{\pi(\iota)}=f(\one_{C_\iota})-f(\one_{C_{\iota-1}}),\qquad
 \ell(z)=f(\zero)+\ip d z,\qquad F(z)=f(z)-\ell(z).
\end{equation}
Lemma~\ref{lem:normalization} proves that $F$ remains nonnegative and
DR-submodular and has the boundary zeros used below.
One dummy coordinate with state $1$ is added, independently of whether
$L$ is empty. The functions $f,\ell,F$ are extended independently of it,
so the increments $d_i$ in \eqref{eq:normalizer} of the original
coordinates are unchanged and its own increment is zero. As a large
coordinate it precedes the original small coordinates in the order
$\pi$. In the order of the large coordinates defined below it is placed
first, including in a tie, and its weight in \eqref{eq:q-weights} is
$q=1$.
Its field component is zero, so removing it gives the original-dimensional
field. The padding makes the coefficient sum $c_1=1+1/b$ in
\eqref{eq:V} independent of the state; all oracle queries omit the dummy
coordinate.
From here on $L$ includes the dummy coordinate, so the formulas below
have $m=|L|\ge1$. If $S$ is empty, all its
vector terms are absent.

The large coordinates use a second order, independent of $\pi$.
Their scalar weights are
\begin{equation}\label{eq:q-weights}
 \begin{aligned}
 q_i&=\max\{v_i,1-v_i\},& \cprof(q)&=1+q/b,\\
 \vartheta(q)&=q(1-q)/b,& \Lambda(q)&=1+b+\vartheta(q)=\cprof(q)(b+1-q).
 \end{aligned}
\end{equation}
The coordinates in $L$ are relabeled as $1,\ldots,m$ in nonincreasing
order of $q_i$, with the dummy coordinate first and ties among original
coordinates broken by their original indices. For
$0\le i\le m$, the vector $P_i^-$ has its first $i$ coordinates equal
to $p$ and all remaining coordinates zero. The vector $P_i^+$ has the
same first $i$ coordinates and all remaining coordinates equal to one. Hence
$P_0^-=\zero_L$, $P_0^+=\one_L$, and $P_m^-=P_m^+=p$.
The associated secants and field components are
\begin{align}
 A_i^L(t)&=\frac{F(a(t),P_i^-)-F(a(t),P_{i-1}^-)}{p_i},
 &B_i^L(t)&=\frac{F(a(t),P_{i-1}^+)-F(a(t),P_i^+)}{1-p_i},
 \label{eq:large-secants}\\
 g_i^L(t)&=(1-p_i)A_i^L(t)+p_iB_i^L(t).
 \label{eq:large-field}
\end{align}
Both denominators are positive: $p_i>s$ and $1-p_i\ge b/(1+b)$.
The proof also uses the paths
\begin{equation}\label{eq:prefix-functions}
 Y_i(t)=F(a(t),P_i^+),\quad E(t)=Y_0(t),\quad
 y(t)=Y_m(t),\quad S_0(t)=F(a(t),\zero_L).
\end{equation}

On the face where every large coordinate equals one, put
$e(r)=F(r,\one_L)$ for $r\in[0,1]^S$ and
$Q_t(r)=e(a(t)+(1-a(t))r)$. If $k=|S|>0$, the coordinates
$j_1,\ldots,j_k$ are ordered nonincreasingly by $u_j$, with ties
broken by index. Their prefixes are $B_0=\varnothing$ and
$B_\iota=\{j_1,\ldots,j_\iota\}$ for $1\le\iota\le k$. The increment vector is
\[
 (\nu_t)_{j_\iota}=Q_t(\one_{B_\iota})-Q_t(\one_{B_{\iota-1}}).
\]
The gradient combination used in the field is
\begin{equation}\label{eq:V}
 V(t)=\sum_{i=1}^m\chi_i\nabla_S F(a(t),P_i^+),\qquad
 \chi_i=\frac{q_i-q_{i+1}}b\ (i<m),\qquad \chi_m=\cprof(q_m).
\end{equation}
The coefficients $\chi_i$ are nonnegative, and their sum is
$c_1=\cprof(q_1)=1+1/b$, because the padding gives $q_1=1$.

The averaging schedule is independent of the objective and of the
state. Its constants are
\begin{equation}\label{eq:beta-c}
\begin{gathered}
 t_1=\frac{319}{500},\qquad \beta=bt_0=\frac{217763}{1000000},\qquad
 \varpi_0=\frac{1524341}{24250000},\qquad
 \varpi_1=\frac{24171693}{9409000},\\
 w_1=\frac{755091259}{1617960000},\qquad
 w_2=\frac{11}{40},\qquad w_3=\frac{121}{1000}.
\end{gathered}
\end{equation}
The averaging density $w$, the density $\varpi$ with which the direct
large-block comparison \eqref{eq:direct-comparison} of
Lemma~\ref{lem:two-comparisons} below is averaged, and their tail
$W$ are
\begin{align}
 w(t)&=\begin{cases}
 0,&0\le t<t_0,\\
 b+w_1(t-t_0)+w_2(t-t_0)^2,&t_0\le t\le t_1,\\
 w(t_1)+w_3(t-t_1),&t_1\le t\le1,
 \end{cases}\nonumber\\
 \varpi(t)&=[\varpi_0+\varpi_1(t-t_0)]\one_{[t_0,t_1]}(t),
 &W(t)&=\int_t^1[w(r)+\varpi(r)]\,dr.
 \label{eq:wW}
\end{align}
The constants satisfy the exact mass identities $\int_0^1\varpi=\beta$
and $\int_0^1w=1-\beta$. With $t_1-t_0=97/250$, the second is equivalent to
\[
 w_1=\frac{1-b-w_2[(t_1-t_0)^3/3+(t_1-t_0)^2(1-t_1)]-w_3(1-t_1)^2/2}
                {(t_1-t_0)^2/2+(t_1-t_0)(1-t_1)}.
\]
Write $\kappa_0(t)=b\one_{[0,t_0)}(t)$. The two small-coordinate weights
are
\begin{equation}\label{eq:lambda-kappa}
 \kappa(t)=bW(t),\qquad \lambda(t)=w(t)+\kappa_0(t)-\kappa(t).
\end{equation}
Thus $W=1$, $\kappa=\kappa_0=b$, and $\lambda=0$ before $t_0$, while
$\lambda=w-bW\ge0$ after $t_0$ because $w$ increases from $b$ and $W$
decreases from one. The measure $\mu(dt)=[w(t)+\varpi(t)]\,dt$ is a
probability measure, all weights are nonnegative, and
\begin{equation}\label{eq:mass-identities}
\begin{gathered}
 \int_0^1\varpi=\beta=\int_0^1\kappa_0,\qquad
 \int_0^1w=1-\beta,\qquad \int_0^1(\lambda+\kappa)=1,\\
 \lambda+\kappa=w+\kappa_0,\qquad \kappa\ge\kappa_0,\qquad
 \int_t^1\varpi(r)\,dr\ge\int_t^1\kappa_0(r)\,dr=b(t_0-t)_+.
\end{gathered}
\end{equation}
The tail inequality holds because the mass $\beta$ of $\varpi$ lies in
$[t_0,t_1]$. The small block thus averages with $(\lambda+\kappa)\,dt$
and the large block with $\mu$. Both are probability measures, but they
place the mass $\beta$ at different times; the proof of
Theorem~\ref{thm:structural} bounds this difference by the
nonnegative remainders of Lemma~\ref{lem:two-comparisons} below.

With $c_1=1+1/b$, put $\rho_1=c_1\lambda+\kappa$. Since
$\int_0^1\lambda=1-\int_0^1\kappa=1-b\int_0^1W$ by
\eqref{eq:mass-identities} and $c_1-1=1/b$,
\[
 \int_0^1\rho_1=c_1-(c_1-1)b\int_0^1W=c_1-\int_0^1W\ge c_1-1=\frac1b .
\]
As $e^{-st}\ge e^{-s}$ and $e^{-s}>1-s=b$, it follows that
$\int_0^1e^{-st}\rho_1(t)\,dt\ge e^{-s}/b>1$. Since $\rho_1>0$ almost
everywhere, there is a unique
$z\in(0,1)$ with
\[
 \int_0^ze^{-st}\rho_1(t)\,dt=1.
\]
Appendix~\ref{app:scalar} proves $z>3/4>t_1$. Define
the weight $\psi(t)=\rho_1(t)\one_{[0,z]}(t)$ and the two auxiliary
functions
\begin{align}
 \theta_+(t)&=e^{st}\left[\Gamma e^{-s}-b
                  +b\int_0^t\rho_1(r)e^{-sr}\,dr\right],
 \label{eq:theta-plus}\\
 \theta_-(t)&=\Gamma e^{s(t-1)},
 &\theta(t)&=\min\{\theta_+(t),\theta_-(t)\}.
 \label{eq:theta-minus}
\end{align}
They cross exactly at $z$, because
$\theta_+(t)-\theta_-(t)=be^{st}\bigl[\int_0^t\rho_1(r)e^{-sr}\,dr-1\bigr]$
is negative before $z$ and positive after $z$. The remaining weights are
\begin{equation}\label{eq:mixing}
 U(t)=\rho_1(t)-\psi(t)=\rho_1(t)\one_{(z,1]}(t),\qquad
 \sigma(t)=\theta(t)-\rho_1(t).
\end{equation}
The comparison field is defined by its two coordinate blocks:
\begin{equation}\label{eq:field}
 \begin{aligned}
 g(f,x)&=d+\int_0^1\mathcal J(t)\,dt,\\
 \mathcal J_S(t)&=\lambda(t)(1-a(t))V(t)
       +\kappa(t)(1-a(t))\nabla_S F(a(t),\zero_L)-U(t)\nu_t,\\
 \mathcal J_L(t)&=[w(t)+\varpi(t)]g^L(t).
 \end{aligned}
\end{equation}
Products with $1-a(t)$ are coordinatewise. Each gradient term carries
this factor because the small-coordinate comparison is written at
$a(t)+(1-a(t))o_S$. The increments $\nu_t$ need no such factor, since
$Q_t$ already evaluates $e$ at $a(t)+(1-a(t))r$.
On $[0,t_0)$ only the term with $\kappa$ is present, since
$\lambda=U=0$ there. The direct large-block comparison~\eqref{eq:direct-comparison} of
Lemma~\ref{lem:two-comparisons} is averaged with density $\varpi$.
The normalizer $\ell$ and the residual $F$ serve to define and analyze
the field. The implementation in Section~\ref{subsec:one-gradient}
samples an equivalent raw-derivative representation directly and
evaluates neither.

The field just defined bounds every comparator with the same vector.
\begin{theorem}[Affine comparison on the cube]\label{thm:structural}
For every nonnegative continuously differentiable DR-submodular
$f:Q\to\R$, the field $g(f,x)$ in \eqref{eq:field} satisfies
\begin{equation}\label{eq:structural}
 f(o)\le\Gamma f(h(x))+\ip{g(f,x)}{o-x}
 \qquad(x,o\in Q).
\end{equation}
For fixed $x$, $g(f,x)$ is linear in $f$. Both $h$ and $g(f,\cdot)$
are Borel measurable.
\end{theorem}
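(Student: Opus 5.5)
The proof reduces \eqref{eq:structural} to the residual $F$, then combines two block comparisons averaged over the path time, and finally closes with the layered curvature bound and the scalar inequality.

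\emph{Step 1: pass to $F$.} Write $f=\ell+F$ with $\ell(z)=f(\zero)+\ip dz$. The $\ell$-part of \eqref{eq:structural} is
\[
\ell(o)-\Gamma\ell(h(x))-\ip d{o-x}=f(\zero)(1-\Gamma)+\ip d{x-\Gamma h(x)}.
\]
Since $x-\Gamma h(x)=-\Gamma\gamma$, this equals $-(\Gamma-1)f(\zero)-\Gamma\ip d\gamma$. Because $d$ is the greedy chain in the order $\pi$ that sorts $\gamma$ nonincreasingly, an Abel summation gives $\ip d\gamma=\sum_\iota(\gamma_{\pi(\iota)}-\gamma_{\pi(\iota+1)})f(\one_{C_\iota})+\dots$. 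Nonnegativity of $f$ and the sign pattern of $\gamma$ (small coordinates come first or last according to the sign of $h(r)-\alpha_0 r$) should then make the total $\le0$. This is the part of Lemma~\ref{lem:normalization} that I will invoke. That lemma also gives $F\ge0$, DR-submodularity of $F$, and $F(\zero_S,\one_L)=0$. So it suffices to prove
\[
F(o)\le\Gamma F(h(x))+\ip{\textstyle\int_0^1\mathcal J}{o-x}.
\]

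\emph{Step 2: block comparisons.} Use $F(o)\le F(o_S,\zero_L)+F(\zero_S,o_L)$ and average over $t$. For each $t$, apply Lemma~\ref{lem:two-comparisons}. The small-block comparison along $a(t)+(1-a(t))o_S$ uses the gradients $V(t)$ and $\nabla_SF(a(t),\zero_L)$ with weights $\lambda,\kappa$, and produces $S_0(t)$, the face values $E(t)$, the terms with $\nu_t$ and $U$, and a nonnegative remainder. The direct large-block comparison \eqref{eq:direct-comparison}, weighted by $\varpi$, together with the ordered secant comparison of Lemma~\ref{lem:ordered}, weighted by $w$, yields $\ip{g^L(t)}{o_L-v}$ plus $F(a(t),p)=y(t)$, the intermediate values $Y_i(t)$ and $E(t)$. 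Integrating against $(\lambda+\kappa)\,dt$ and $\mu$, both probability measures by \eqref{eq:mass-identities}, the linear terms assemble exactly into $\ip{\int\mathcal J}{o-x}$. The cross-block dependence cancels because $\kappa=bW$ is matched to the remaining large-block mass. The timing mismatch between $\kappa_0$ and $\varpi$ is absorbed via the tail inequality $\int_t^1\varpi\ge b(t_0-t)_+$ and the nonnegative remainders.

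\emph{Step 3: the leftover scalar terms.} What remains is a bound of the form $\int(\text{weights})\,(Y_i,E,y)\,dt\le\Gamma y(1)$, with $F(h(x))=y(1)$. Using the second-derivative control along the path, Lemma~\ref{lem:curvature} decomposes the $Y_i$ into layers over nested time intervals. Each layer reduces to a one-dimensional inequality in $t$ involving $\theta_\pm$, $\rho_1$, $\psi$ and $\sigma$. The crossing point $z$ is chosen exactly so that $\theta=\min(\theta_+,\theta_-)$ dominates, which is Lemma~\ref{lem:scalar}. Invoking it finishes the inequality.

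\emph{Step 4: linearity and measurability.} Every ingredient of $g$ ($d$, the secants, the gradients, $\nu_t$) is linear in $f$, and the weights do not depend on $f$, so $g$ is linear in $f$. Measurability follows because $h$ is piecewise continuous and $g(f,\cdot)$ is built from finitely many order/threshold cases on Borel sets, each continuous in $x$, integrated in $t$.

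I expect the main obstacle to be the bookkeeping in Step 2. One must verify that the weights $\lambda$, $\kappa$, $U$ and $\varpi$ make the $E(t)$ and $S_0(t)$ terms cancel, or at least have the right sign, after integration by parts in $t$. This is where the specific schedule \eqref{eq:wW} enters. I would first check this cancellation in the one-large-coordinate case.
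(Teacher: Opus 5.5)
Your outline has the paper's architecture, but it defers exactly the assembly that constitutes the proof. Two of the steps you do describe would not go through as stated.

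First, Step~2 cannot begin by applying $F(o)\le F(o_S,\zero_L)+F(\zero_S,o_L)$ to the whole comparison. The $V(t)$ term of the field \eqref{eq:field} is produced only by the joint identity \eqref{eq:tangent-comparison}, which bounds $F(o)$ itself. After a global split, the small block could only be compared at $\zero_L$ through \eqref{eq:small-only-comparison}. That gives a different field, and it leaves the cross-block terms $\int(\lambda+\kappa)\Delta_{\zero}+\int\lambda\Delta_{\zero}'$ outside the $\kappa=bW$ cancellation. The paper instead uses \eqref{eq:tangent-comparison}$\times\lambda$, \eqref{eq:join-comparison}$\times(\kappa-\kappa_0)$, \eqref{eq:small-only-comparison}$\times\kappa_0$, \eqref{eq:direct-comparison}$\times\varpi$, and \eqref{eq:rectangle-comparison} only with mass $\beta$. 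After rewriting $\mathcal C(t)=C_I+E+\sum_i\Lambda(q_i)D_i-b\Delta_{\zero}$, the large-block weight $w+\varpi$ and Fubini give $-b\int W\Delta_{\zero}'$. This cancels the $\int\kappa\Delta_{\zero}'$ produced by centering the $\nabla_SF(a(t),\zero_L)$ terms, because $\kappa=bW$. The leftover $\int\kappa_0\Delta_{\zero}$ is absorbed by the remainders, using $\mathcal S_3+\mathcal S_5-\mathcal S_2=\Delta_{\zero}-\Delta_o-\mathcal L_o(a(t))$ and the tail inequality in \eqref{eq:mass-identities}. Note also that the linear terms do not yet assemble into $\int\mathcal J$ at this stage: the component $-U\nu_t$ is not produced by Lemma~\ref{lem:two-comparisons}.

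Second, what remains after averaging is not of the form $\int(\text{weights})(Y_i,E,y)\,dt\le\Gamma y(1)$. The right side still contains the comparator-dependent terms $E^*-\int_0^1\rho_1Q_t(o_S)\,dt$, and no scalar lemma applies until they are removed. This is where $\psi$, $U$, $z$ and the field term $-U\nu_t$ enter:
\begin{enumerate}
\item Split $\rho_1=\psi+U$.
\item On the $\psi$ part, \eqref{eq:boundary-survival} and $\int_0^1e^{-st}\psi=1$ pay for $E^*$. This normalization is what defines $z$; the crossing of $\theta_\pm$ at $z$ is a consequence, not the reason for the choice.
\item On the $U$ part, write $Q_t(o_S)=E(t)+\ip{\nu_t}{o_S}+\mathcal Y(t)$ with $\mathcal Y\ge0$, and use $\ip{\nu_t}u+sE(t)\ge0$ from the proof of \eqref{eq:boundary-greedy}.
\item Finally, $\theta'=s\theta+b\psi$, $\theta(1)=\Gamma$ and $E(0)=0$ turn $\Gamma E(1)-\int\rho_1(E'+E)$, together with the $b\int UE$ left by the $U$ part, into $\int\sigma(E'+sE)$. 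Here $E'+sE\ge0$ by \eqref{eq:boundary-drift}.
\end{enumerate}
You must also split off the initial values $I_i=Y_i(0)$ and pay for them by $C_I\le\Gamma I_m$ (summation by parts, $\vartheta$ decreasing on $[1/2,1]$, $\Lambda\le\Gamma$). This is needed because Lemma~\ref{lem:curvature} decomposes $E$ and $K_i=Y_i-I_i-E$, not $Y_i$. Only then does the claim reduce to $I_K-\Gamma K_m(1)\le\int_0^1\sigma(E'+sE)\,dt$, which the layers and Lemma~\ref{lem:scalar} prove level by level.

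A smaller point on Step~1: your sign heuristic is off. Every $\gamma_i\ge0$, and what the restoration inequality uses is $\max_i\gamma_i\le(2b+1)^{-2}<1-\alpha_0$. Citing Lemma~\ref{lem:normalization} is enough there.
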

The proof of Theorem~\ref{thm:structural} is given in
Section~\ref{sec:comparison-proof}, after the normalization and path
comparisons of Section~\ref{sec:coordinate} and the scalar averaging
inequality and ordered representation of Section~\ref{sec:curvature}.

For this fixed recovery map, equality is attained by a two-coordinate
cut.
\begin{remark}[Midpoint tightness and the comparison field]
\label{rem:cut-tightness}
Let $f(z)=z_1+z_2-2z_1z_2$ and
$K=\{z\in\R_{\ge0}^2:z_1+z_2\le1\}$.
At $x=(1/2,1/2)$, both coordinates use the linear-fractional
branch of~\eqref{eq:action-map}. Writing $p=(2b+1)^{-1}$ gives
\begin{equation}\label{eq:cut-midpoint-tightness}
 h(x)=(p,p),\qquad
 f(h(x))=2p(1-p)=\frac{4b}{(2b+1)^2}=\alpha_0,
 \qquad f(e_1)=f(e_2)=1.
\end{equation}
The two comparators $e_1,e_2$ have average $x$.
Since $f(z)\le1$ throughout the cube, their average objective value
is the largest possible among all distributions of comparators with
mean $x$.
Averaging~\eqref{eq:structural} over them cancels the linear term,
so $1\le\Gamma f(h(x))=\Gamma\alpha_0=1$.
Thus $\alpha_0$ is the largest coefficient possible in this
comparison for the fixed recovery map. For $b>1/2$, the bound
$4b/(2b+1)^2$ forced by this example decreases strictly with $b$, since its derivative
is $4(1-2b)/(2b+1)^3$.
Theorem~\ref{gcon:main-existence} shows that the largest universal
coefficient among deterministic objective-independent downward maps
is $2\sqrt3-3$, attained by the same two-branch map at $b=\sqrt3/2$.
The vectors in its part~(ii) come from a non-constructive
supporting-hyperplane argument and have no known query-efficient
estimator; the algorithm here uses
$b=217763/250000$, for which the explicit field has the one-query sampler
of Lemma~\ref{lem:one-query}. The three coefficients in this discussion
are ordered as
\[
 \alpha_0\approx0.463379\;<\;2\sqrt3-3\approx0.464102\;<\;
 \beta_*\approx0.470439:
\]
the coefficient attained by the explicit field, the largest universal
coefficient for deterministic objective-independent downward recovery
relative to the concave closure, and the general oracle upper bound of
Remark~\ref{rem:smooth-hardness}.

At the state $x=(u,1/2)$, with one small coordinate $0<u\le s$ and one
large coordinate, the same construction
gives a different vector from the objective gradient:
\begin{equation}\label{eq:cut-gradient-field}
 \begin{gathered}
 \nabla f(x)=(0,1-2u),\qquad
 g(f,x)=\bigl(2\mathcal M_S(u)-1,\;2\mathcal M_L(u)-1\bigr),\\
 \mathcal M_S(u)=\int_0^1(\lambda(t)+\kappa(t))e^{-ut}\,dt,\qquad
 \mathcal M_L(u)=\int_0^1e^{-ut}\,\mu(dt),
 \end{gathered}
\end{equation}
where $\mu$ is the averaging measure defined after~\eqref{eq:lambda-kappa}.
Indeed, the affine normalizer is $\ell(z)=z_2-z_1$. Substituting
$f-\ell=2z_1(1-z_2)$ into~\eqref{eq:field}, the large secants equal
$-2a(t)$, the small-block gradients $\nabla_SF(a(t),l)$ at $l=p$, $\zero_L$, and $\one_L$
are $2(1-p)$, $2$, and $0$, and $\cprof(1/2)(1-p)=1$ because
$p=(2b+1)^{-1}$; the identities in \eqref{eq:mass-identities}
then yield~\eqref{eq:cut-gradient-field}.
Since both $\lambda+\kappa$ and $\mu$ have total mass one,
$\mathcal M_S(u),\mathcal M_L(u)\ge e^{-u}\ge1-u\ge b>1/2$, so the field's
first coordinate is positive although the first gradient coordinate is
zero. Thus the field supplies a positive linear reward in the $e_1$
direction, whereas the gradient supplies zero reward in that direction.
\end{remark}

\subsection{Normalization and coordinate inequalities}\label{sec:coordinate}
The greedy lower bound uses the order of $h(x_i)-\alpha_0x_i$.
This order places the large coordinates first. Lemma~\ref{lem:normalization}
shows that it gives $F(\zero)=F(\zero_S,\one_L)=F(\one)=0$ and
$\ell(h(x))\ge\alpha_0\ell(x)$.
\begin{lemma}[Affine normalization]\label{lem:normalization}
The function $F$ in \eqref{eq:normalizer} is nonnegative and
DR-submodular on $Q$, and
\[
 F(\zero)=F(\zero_S,\one_L)=F(\one)=0,\qquad
 \ell(h(x))\ge\alpha_0\ell(x).
\]
\end{lemma}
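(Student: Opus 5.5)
The lemma has three parts: DR-submodularity and nonnegativity of $F$, the boundary zeros, and the inequality $\ell(h(x))\ge\alpha_0\ell(x)$. For the first part, $F=f-\ell$ with $\ell$ affine, so $\nabla F=\nabla f-d$. The DR condition of Definition~\ref{def:dr} is therefore inherited immediately.

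\textbf{Nonnegativity.} $F\ge0$ is the greedy affine lower bound $f(z)\ge f(\zero)+\ip{d}{z}$ of \citet[Proposition~3.2]{bach2013}, valid for any chain order $\pi$. I would either cite it or reprove it as follows. Order the coordinates of $z$ decreasingly and write $z=\sum_k\lambda_k\one_{A_k}$ with nested $A_k$. By concavity along nonnegative directions and DR-submodularity, $f(z)$ dominates the Lovász extension of the vertex values, i.e.\ the corresponding convex combination of $f(\one_{A_k})$ and $f(\zero)$. The Lovász extension is $f(\zero)$ plus the support function of the base polytope of the submodular set function $A\mapsto f(\one_A)-f(\zero)$, evaluated at $z\ge\zero$. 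Every greedy vector, in particular $d$, lies in that base polytope, so the Lovász extension is at least $f(\zero)+\ip dz$. I expect this to be the step needing the most care, because it must be verified for continuous DR-submodular $f$ rather than a multilinear extension. Everything else is elementary.

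\textbf{Boundary zeros.} $F(\zero)=0$ holds by definition. $F(\one)=f(\one)-f(\zero)-\sum_\iota d_{\pi(\iota)}=0$ by telescoping. Telescoping also gives $F(\one_{C_\iota})=0$ at every chain vertex. So $F(\zero_S,\one_L)=0$ follows once $L$ is exactly a prefix of $\pi$, which requires $\gamma_i>\gamma_j$ for all $i\in L$ and $j\in S$; the dummy coordinate is placed first by the stated tie rule.

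For large coordinates, $\phi(r)=r/(b+r)-\alpha_0r$ is concave on $(s,1]$. Its infimum is therefore the smaller of the two endpoint values:
\begin{itemize}
\item at $r\downarrow s$ it tends to $s(1-\alpha_0)$, since $b+s=1$;
\item at $r=1$ it equals $1/(1+b)-\alpha_0\approx0.0711$.
\end{itemize}
For small coordinates, $\psi(r)=1-e^{-r}-\alpha_0r$ is increasing on $[0,s]$ because $e^{-s}>\alpha_0$. So its maximum is $1-e^{-s}-\alpha_0 s$. This is strictly below $s(1-\alpha_0)$ because $1-e^{-s}<s$. Numerically it is also below $1/(1+b)-\alpha_0$, being about $0.0612$. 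Hence large coefficients strictly exceed small ones, and the prefix property holds.

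\textbf{The inequality $\ell(h(x))\ge\alpha_0\ell(x)$.} I would write
\[
\ell(h(x))-\alpha_0\ell(x)=(1-\alpha_0)f(\zero)+\sum_\iota \gamma_{\pi(\iota)}d_{\pi(\iota)}
\]
and apply Abel summation with $d_{\pi(\iota)}=f(\one_{C_\iota})-f(\one_{C_{\iota-1}})$. This rewrites the right-hand side as
\[
\sum_{\iota<n}(\gamma_{\pi(\iota)}-\gamma_{\pi(\iota+1)})f(\one_{C_\iota})+\gamma_{\pi(n)}f(\one)+(1-\alpha_0-\gamma_{\pi(1)})f(\zero).
\]
Each term is nonnegative, because $f\ge0$ and all three kinds of coefficient are nonnegative:
\begin{itemize}
\item $\gamma_{\pi(\iota)}-\gamma_{\pi(\iota+1)}\ge0$ because $\pi$ orders the $\gamma_i$ nonincreasingly.
\item $\gamma_i\ge0$ for every $i$. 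For small coordinates this follows from concavity of $1-e^{-r}$ together with $1-e^{-s}\ge\alpha_0 s$. For large coordinates it follows from the endpoint values of the concave $\phi$ computed above.
\item $\gamma_i\le h(x_i)-\alpha_0x_i\le(1-\alpha_0)x_i\le1-\alpha_0$, using $h(r)\le r$.
\end{itemize}
The dummy coordinate has zero increment, so it does not affect this computation.
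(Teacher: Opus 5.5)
Your proposal is correct and follows the paper's proof: the greedy lower bound of \citet{bach2013} extended to the cube, strict separation of the coefficients $\gamma_i$ between the two blocks using the concave large-coordinate branch, and Abel summation along the chain $C_0\subset\cdots\subset C_n$. The differences are minor. The paper extends $f(\one_B)\ge\ell(\one_B)$ to the cube by coordinatewise multilinear interpolation, which avoids the Lov\'asz-extension step you flagged. It shows $\gamma_L(1)>\gamma_L(s)=s(1-\alpha_0)$ exactly rather than comparing with $1/(1+b)-\alpha_0$ numerically. It also uses the sharper bound $\gamma_i\le(2b+1)^{-2}$, whereas your bound $\gamma_i\le(1-\alpha_0)x_i\le1-\alpha_0$ already suffices for the coefficient of $f(\zero)$.
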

\begin{proof}
The restriction $B\mapsto f(\one_B)-f(\zero)$ is a normalized
submodular set function. The greedy lower bound
\citep[Proposition~3.2]{bach2013} therefore gives
$f(\one_B)\ge f(\zero)+\sum_{i\in B}d_i=\ell(\one_B)$ for every
$B\subseteq[n]$. Coordinatewise concavity extends this bound to the
cube: interpolating between the two endpoints in each coordinate gives
\begin{equation}\label{eq:greedy-cube-lower}
 f(z)\ge\sum_{B\subseteq[n]}f(\one_B)
       \prod_{i\in B}z_i\prod_{i\notin B}(1-z_i)
 \ge f(\zero)+\sum_{i=1}^nd_i z_i=\ell(z).
\end{equation}
The last inequality applies $f(\one_B)\ge\ell(\one_B)$ to each term and
uses that $\ell$ is affine and the product weights have mean $z$. Thus $F\ge0$. Subtracting an affine function
preserves the order of partial derivatives and hence DR-submodularity.
The definition of $d$ also gives $F(\one_{C_\iota})=0$ for every chain vertex.

It remains to check the chosen order and the restoration inequality.
The parameters satisfy $b>\sqrt3/2$ and $\alpha_0<b$. Since
$\Gamma=1+b+1/(4b)=(2b+1)^2/(4b)$, we have $\alpha_0=4b/(2b+1)^2$.
For $0\le u\le s$, we have $1-e^{-u}=\int_0^ue^{-r}\,dr$ and
$b=1-s\le e^{-s}\le e^{-r}\le1$ for $0\le r\le u$, so
$bu\le1-e^{-u}\le u$. Subtracting $\alpha_0u$ gives
\[
 0\le (b-\alpha_0)u
 \le1-e^{-u}-\alpha_0u\le(1-\alpha_0)u.
\]
The last inequality is strict for $u>0$. On $[s,1]$, the large-coordinate formula, extended to the closed
interval, $\gamma_L(v)=v/(b+v)-\alpha_0v$, is concave and satisfies
\[
 \gamma_L(s)=s(1-\alpha_0),\qquad
 \gamma_L(1)-s(1-\alpha_0)
 =\frac{b^2(4b^2-3)}{(1+b)(2b+1)^2}>0.
\]
Here $\gamma_L(s)=s-\alpha_0s$ because $b+s=1$, and
$\gamma_L(1)-s(1-\alpha_0)=\frac1{1+b}-(1-b)-b\alpha_0
=\frac{b^2}{1+b}-\frac{4b^2}{(2b+1)^2}$, which equals the displayed
fraction; it is positive because $4b^2>3$.
Concavity therefore gives $\gamma_L(v)\ge s(1-\alpha_0)$ on this
interval. For a small coordinate, the strict inequality above gives
$\gamma_i<(1-\alpha_0)s$ when $u_i>0$, and $\gamma_i=0$ when $u_i=0$.
The two blocks are therefore strictly separated, so the index tie rule
never interleaves them, and every coordinate in $L$ precedes every
coordinate in $S$ in the order $\pi$. In particular, $(\zero_S,\one_L)$ is a chain
vertex, proving the remaining required zero.

The exact identity $\alpha_0=4b/(2b+1)^2$ also gives
\begin{equation}\label{eq:normalization-square}
 \frac1{(2b+1)^2}-\gamma_L(v)
 =\frac{b(2v-1)^2}{(2b+1)^2(b+v)}\ge0.
\end{equation}
Equality holds at $v=1/2\in[s,1]$. For a large coordinate,
$\gamma_i=\gamma_L(v_i)$, so $s(1-\alpha_0)\le\gamma_i\le(2b+1)^{-2}$. For
a small coordinate, the bounds above give
$0\le\gamma_i\le(1-\alpha_0)s=\gamma_L(s)\le(2b+1)^{-2}$. Thus
$0\le\gamma_i\le(2b+1)^{-2}$ for every $i$.
With $\widehat\gamma_\iota=\gamma_{\pi(\iota)}$ and $\widehat\gamma_{n+1}=0$, expanding the chain increments and collecting each chain value gives
\begin{align*}
 \sum_{\iota=1}^n\widehat\gamma_\iota d_{\pi(\iota)}
 &=\sum_{\iota=1}^n\widehat\gamma_\iota
       [f(\one_{C_\iota})-f(\one_{C_{\iota-1}})]\\
 &=\sum_{\iota=1}^n(\widehat\gamma_\iota-\widehat\gamma_{\iota+1})f(\one_{C_\iota})
       -\widehat\gamma_1 f(\zero).
\end{align*}
Since $\ell(h(x))-\alpha_0\ell(x)=(1-\alpha_0)f(\zero)+\sum_i\gamma_id_i$,
it follows that
\[
 \ell(h(x))-\alpha_0\ell(x)
 =(1-\alpha_0-\widehat\gamma_1)f(\zero)
   +\sum_{\iota=1}^n(\widehat\gamma_\iota-\widehat\gamma_{\iota+1})f(\one_{C_\iota})\ge0.
\]
The right side is nonnegative because $f\ge0$ at every chain vertex and
every coefficient is nonnegative: the order gives
$\widehat\gamma_\iota-\widehat\gamma_{\iota+1}\ge0$, and
$1-\alpha_0-\widehat\gamma_1\ge1-\alpha_0-(2b+1)^{-2}
=4b^2/(2b+1)^2>0$. Equal consecutive coefficients contribute zero,
so the same computation includes all ties.
\end{proof}

The next lemma carries out the computation in the proof of
\citet[Lemma~4.1]{geometry2026} for the slice $F(a(t),\cdot)$ of the
large block. The branch $r/(b+r)$ of \eqref{eq:action-map} is their
recovery map with parameter $b$, and the large coordinates are processed
in the same order, the nonincreasing order of $q_i$. We stop before
their final bound: that proof bounds the intermediate upper-prefix
values below by zero, whereas the constant term below keeps them as
$Y_i(t)$, because Lemmas~\ref{lem:scalar} and~\ref{lem:curvature} use them.
\begin{lemma}[Ordered large-coordinate comparison]\label{lem:ordered}
For every $o_L\in[0,1]^L$ and $t\in[0,1]$,
\begin{equation}\label{eq:large-comparison}
 F(a(t),o_L)\le \mathcal C(t)+\ip{g^L(t)}{o_L-v},
\end{equation}
where
\begin{equation}\label{eq:intercept-exact}
 \mathcal C(t)=\Lambda(q_m)y(t)-bS_0(t)-\vartheta(q_1)E(t)
       -\sum_{i=1}^{m-1}[\vartheta(q_{i+1})-\vartheta(q_i)]Y_i(t).
\end{equation}
\end{lemma}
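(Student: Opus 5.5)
Fix $t$ and write $G(w)=F(a(t),w)$ for $w\in[0,1]^L$. By Lemma~\ref{lem:normalization}, $G$ is nonnegative and DR-submodular. In this notation $G(P_i^\pm)$ are the quantities entering \eqref{eq:large-secants}, $Y_i(t)=G(P_i^+)$, $S_0(t)=G(\zero_L)$ and $y(t)=G(p)$. I plan to follow the hybrid argument of \citet[Lemma~4.1]{geometry2026}, using the coordinates of $L$ in their $q$-order $1,\ldots,m$.

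\emph{Step 1: hybrid chain.} Let $Z_i$ have coordinates $p_1,\ldots,p_i$ followed by $o_{i+1},\ldots,o_m$. Then $Z_0=o_L$, $Z_m=p$, and
\[
 G(o_L)=y(t)+\sum_{i=1}^m[G(Z_{i-1})-G(Z_i)].
\]
In the $i$th term only coordinate $i$ moves, from $p_i$ to $o_i$. Put $\delta_i=o_i-p_i$. Apart from coordinate $i$, the point $Z_i$ lies between $P_{i-1}^-$ and $P_{i-1}^+$.

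If $\delta_i\ge0$, concavity in coordinate $i$ bounds the term by $\delta_i\partial_iG$ at coordinate value $p_i$. That derivative is at most the secant of $G$ over $[0,p_i]$ with the other coordinates fixed. DR-submodularity then lowers the other coordinates to those of $P_{i-1}^-$, so the term is at most $\delta_iA_i^L$.

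If $\delta_i<0$, the same derivative is at least the secant over $[p_i,1]$, which in turn is at least that secant with the other coordinates raised to those of $P_{i-1}^+$. This gives the bound $\delta_iB_i^L$.

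DR-submodularity also gives $A_i^L\ge B_i^L$. Hence in both cases the term is at most $\max\{\delta_iA_i^L,\delta_iB_i^L\}$. Since $\delta_i\in[-p_i,1-p_i]$, a direct check shows
\[
 \max\{\delta_iA_i^L,\delta_iB_i^L\}\le\delta_i g_i^L(t)+p_i(1-p_i)(A_i^L-B_i^L).
\]
Summing over $i$ and rewriting $\delta_i=(o_i-v_i)+(v_i-p_i)$ yields \eqref{eq:large-comparison} with the provisional intercept
\[
 \widetilde{\mathcal C}=y+\sum_i\bigl[p_i(1-p_i)(A_i^L-B_i^L)+(v_i-p_i)g_i^L\bigr].
\]
It then suffices to show $\widetilde{\mathcal C}\le\mathcal C(t)$.

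\emph{Step 2: rewrite in chain increments.} I will substitute $p_iA_i^L=G(P_i^-)-G(P_{i-1}^-)$ and $(1-p_i)B_i^L=G(P_{i-1}^+)-G(P_i^+)$, and use $p_i=v_i/(b+v_i)$. Then $\widetilde{\mathcal C}-y$ becomes
\[
 \sum_i\bigl(\alpha_i\,\Delta_i^-+\beta_i\,\Delta_i^+\bigr),
\]
where $\Delta_i^-=G(P_i^-)-G(P_{i-1}^-)$ and $\Delta_i^+=G(P_{i-1}^+)-G(P_i^+)$. Up to bookkeeping the coefficients are
\[
 \alpha_i=\frac{b(2-b-v_i)}{b+v_i},\qquad
 \beta_i=\frac{v_i(2v_i+bv_i-b+v_i^2)}{b(b+v_i)}
\]
(the second sign to be rechecked). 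The goal is to express these through the symmetric quantities $q_i$, $\cprof(q_i)$ and $\vartheta(q_i)$. The $\vartheta$ terms are invariant under $v\leftrightarrow1-v$, which is why the $q$-order is the natural one.

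\emph{Step 3: remove the lower chain (main obstacle).} The lower increments must end up with total coefficient exactly $b$ on $\sum_i\Delta_i^-=y-S_0$, which produces $-bS_0$ and part of $\Lambda(q_m)y$. To get there, split each $\alpha_i=b+(\alpha_i-b)$. For the excess, use the DR comparison $\Delta_i^-\ge-\Delta_i^+$. This holds because the increment of $G$ in coordinate $i$ from $0$ to $p_i$ at lower other coordinates dominates the increment from $p_i$ to $1$ at upper other coordinates after rescaling by the secant lengths, and the same inequality $A_i^L\ge B_i^L$ makes this precise. Which direction of the inequality to use depends on the sign of the excess coefficient, and this is where the constraint $v_i>s$ (equivalently $p_i>s$) enters.

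After this transfer only upper increments remain. Abel summation then rewrites
\[
 \sum_i c_i\,\Delta_i^+=c_1E-c_mY_m+\sum_{i<m}(c_{i+1}-c_i)Y_i.
\]
I expect the coefficients to collapse to $c_i=\vartheta(q_i)$ plus terms absorbed into $\Lambda(q_m)y$, which would match \eqref{eq:intercept-exact} exactly. The $q$-ordering guarantees that the differences $\vartheta(q_{i+1})-\vartheta(q_i)$ have the sign needed wherever an inequality, rather than an identity, was used.

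The delicate point is choosing, coordinate by coordinate and according to whether $v_i\ge1/2$, exactly how much of $\alpha_i$ to move onto the upper chain. The aim is for every remaining coefficient to depend on $v_i$ only through $q_i$. I will first try to verify this on a single coordinate ($m=1$, together with the dummy coordinate), and then confirm that the telescoping with the dummy term $q_1=1$ reproduces $-\vartheta(q_1)E(t)$.
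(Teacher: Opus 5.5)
Your Step~1 is correct and is essentially the paper's starting point. Per coordinate, your bound $\delta_ig_i^L+p_i(1-p_i)(A_i^L-B_i^L)$ equals $(1-p_i)A_i^Lo_i-p_i(1-o_i)B_i^L$. That is exactly the summand of the signed endpoint inequality \eqref{eq:signed-endpoint}, which the paper imports from the proof of \citet[Lemma~4.1]{geometry2026}; you have re-derived it. Likewise, your provisional intercept simplifies to $\widetilde{\mathcal C}=y+\sum_i[v_i(1-p_i)A_i^L-p_i(1-v_i)B_i^L]$, which is the paper's centered intercept.

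The gap is in Steps~2--3, and it comes from miscomputed coefficients. In $\widetilde{\mathcal C}-y$ the coefficient of $A_i^L$ is $p_i(1-p_i)+(v_i-p_i)(1-p_i)=v_i(1-p_i)$. The coefficient of $B_i^L$ is $-p_i(1-p_i)+(v_i-p_i)p_i=-p_i(1-v_i)$. Using $(1-p_i)/p_i=b/v_i$ and $p_i/(1-p_i)=v_i/b$, this gives, for every $i$,
\[
 \alpha_i=\frac{v_i(1-p_i)}{p_i}=b,\qquad
 \beta_i=-\frac{p_i(1-v_i)}{1-p_i}=-\frac{v_i(1-v_i)}b=-\vartheta(q_i).
\]
These are not your expressions; yours give $\alpha_i\ne b$ for every $v_i>s$. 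Consequently
\[
 \widetilde{\mathcal C}=y+b(y-S_0)+\sum_{i=1}^m\vartheta(q_i)(Y_i-Y_{i-1}).
\]
Your own Abel summation, together with $1+b+\vartheta(q_m)=\Lambda(q_m)$, then gives $\widetilde{\mathcal C}=\mathcal C(t)$ as an identity. So no transfer between the lower and upper chains is needed. The lemma uses neither the $q$-order nor $v_i>s$, beyond $0<p_i<1$; the ordering matters only later, for instance in \eqref{eq:initial-payment}.

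Step~3 as proposed would not work in any case. Your ``DR comparison'' $\Delta_i^-\ge-\Delta_i^+$ amounts to $p_iA_i^L+(1-p_i)B_i^L\ge0$, and DR-submodularity does not give this. For the normalized cut of Remark~\ref{rem:cut-tightness}, both large secants equal $-2a(t)<0$, so the inequality fails there. What DR-submodularity does give is $A_i^L\ge B_i^L$, that is, $\Delta_i^-/p_i\ge\Delta_i^+/(1-p_i)$.

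Since Step~3 also ends with an expectation (``I expect the coefficients to collapse'') rather than a verification, the proposal as written does not prove the lemma. Replacing Steps~2--3 by the exact computation above completes it.
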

\begin{proof}
Summing display~(18) in the proof of \citet[Lemma~4.1]{geometry2026}
over the large coordinates shows that every DR-submodular function $J$
on the large-coordinate cube satisfies the signed endpoint inequality
\begin{equation}\label{eq:signed-endpoint}
 J(o_L)\le J(p)-\sum_{i=1}^m p_iB_i^L(J)
       +\sum_{i=1}^m[(1-p_i)A_i^L(J)+p_iB_i^L(J)]o_i.
\end{equation}
Here the secants are those in \eqref{eq:large-secants} with $J$ in
place of $F(a(t),\cdot)$.
For $J=F(a(t),\cdot)$, centering the affine term at $v$ leaves
\[
 y(t)+\sum_{i=1}^m v_i(1-p_i)A_i^L(t)
       -\sum_{i=1}^m p_i(1-v_i)B_i^L(t).
\]
The first sum equals $b(y(t)-S_0(t))$, since $v_i(1-p_i)/p_i=b$ and the
lower-prefix value differences telescope from $S_0(t)$ to $y(t)$.
For the second sum, $p_i(1-v_i)/(1-p_i)=\vartheta(q_i)$, so its
contribution is $\sum_{i=1}^m\vartheta(q_i)(Y_i(t)-Y_{i-1}(t))$.
Collecting the coefficient of each $Y_i(t)$ gives
\begin{align*}
 \sum_{i=1}^m\vartheta(q_i)(Y_i(t)-Y_{i-1}(t))
 &=\sum_{i=1}^m\vartheta(q_i)Y_i(t)
      -\sum_{i=0}^{m-1}\vartheta(q_{i+1})Y_i(t)\\
 &=\vartheta(q_m)y(t)-\vartheta(q_1)E(t)
      -\sum_{i=1}^{m-1}[\vartheta(q_{i+1})-\vartheta(q_i)]Y_i(t).
\end{align*}
Substituting both sums into the centered intercept gives
\eqref{eq:intercept-exact}, since $1+b+\vartheta(q_m)=\Lambda(q_m)$.
\end{proof}

For the small-coordinate comparisons, fix $o=(o_S,o_L)$ and write
\[
 \widetilde o_S=a(t)+(1-a(t))o_S,\qquad
 E^*=e(o_S),\qquad \eta(t)=e^{-st}.
\]
The three boundary inequalities are
\begin{align}
 Q_t(o_S)&\ge\eta(t)E^*,\label{eq:boundary-survival}\\
 Q_t(o_S)&\ge bE(t)+\ip{\nu_t}{o_S-u},\label{eq:boundary-greedy}\\
 E'(t)+sE(t)&\ge0.\label{eq:boundary-drift}
\end{align}
When $S$ is empty, normalization gives $E(t)=E^*=0$ and all three
inequalities hold with zero small-coordinate terms. For nonempty $S$,
the bounds follow as below.
For \eqref{eq:boundary-survival}, the path definition and $u_j\le s$
give $0\le\widetilde o_S-o_S=a(t)(1-o_S)\le(1-\eta(t))(1-o_S)$.
If $t>0$, the point
$r=o_S+(\widetilde o_S-o_S)/(1-\eta(t))$ belongs to the small cube and
$\widetilde o_S=\eta(t)o_S+(1-\eta(t))r$. Concavity on the nonnegative
direction $r-o_S$, followed by $e(r)\ge0$, gives
$e(\widetilde o_S)\ge\eta(t)e(o_S)$. At $t=0$, the two sides coincide.

The map $r\mapsto a(t)+(1-a(t))r$ is affine with positive diagonal
coefficients. Consequently $Q_t$ is nonnegative and DR-submodular.
The greedy lower bound and its coordinatewise extension in
\eqref{eq:greedy-cube-lower}, applied to $Q_t$, give
$Q_t(o_S)\ge E(t)+\ip{\nu_t}{o_S}$.
The ordered small coordinates $j_1,\ldots,j_k$ and their prefixes
$B_0,\ldots,B_k$ determine the constant after centering at $u$.
Expansion of the increments and collection of each
$Q_t(\one_{B_\iota})$ coefficient give
\[
 E(t)+\ip{\nu_t}{u}
 =(1-u_{j_1})E(t)
   +\sum_{\iota=1}^{k-1}(u_{j_\iota}-u_{j_{\iota+1}})Q_t(\one_{B_\iota})
   +u_{j_k}Q_t(\one_{B_k})\ge bE(t).
\]
The order makes each coefficient nonnegative, the function values are
nonnegative, and $u_{j_1}\le s=1-b$. Adding
$\ip{\nu_t}{o_S-u}$ proves \eqref{eq:boundary-greedy}.
Along the path, $a_j'=u_j(1-a_j)$ gives
$E'(t)=\ip{u}{(1-a(t))\nabla e(a(t))}$.

For \eqref{eq:boundary-drift}, the set
$N_t=\{j\in S:\partial_j e(a(t))<0\}$ identifies a nonnegative
direction that increases precisely these coordinates to one.
Tangency in that direction yields
\[
 0\le e(a(t)+(1-a(t))\one_{N_t})
 \le E(t)+\sum_{j\in N_t}(1-a_j(t))\partial_j e(a(t)).
\]
Thus
\begin{equation}\label{eq:negative-budget}
 \sum_{j\in S}(1-a_j(t))(-\partial_j e(a(t)))_+\le E(t).
\end{equation}
Since $E'(t)=\sum_{j\in S}u_j(1-a_j(t))\partial_je(a(t))$ and
$0\le u_j\le s$,
\[
 E'(t)\ge-\sum_{j\in S}u_j(1-a_j(t))(-\partial_je(a(t)))_+
 \ge-s\sum_{j\in S}(1-a_j(t))(-\partial_je(a(t)))_+\ge-sE(t),
\]
where the last step is \eqref{eq:negative-budget}.

Normalization also gives a lower bound. For every small vector
$r\in[0,1]^S$ and every large vector $l\in[0,1]^L$,
\begin{equation}\label{eq:floor}
 F(r,l)-F(\zero_S,l)\ge F(r,\one_L)\ge0.
\end{equation}
The first inequality compares, by DR, the small-coordinate increments
from $\zero_S$ to $r$ at the large-coordinate values $l\le\one_L$, and uses
$F(\zero_S,\one_L)=0$; the second is $F\ge0$. This bound allows the
large block to be compared without any small coordinate, after
subtracting the same boundary value $E(t)$ that appears in the
small-coordinate comparisons.

The next lemma combines the large-block bound of
Lemma~\ref{lem:ordered} with the small-coordinate path. The proof of
Theorem~\ref{thm:structural} keeps every slack, so we state them as
identities with explicit nonnegative remainders. For small vectors $y$
and large vectors $l$, put
\[
 \Delta_l(y)=F(y,l)-F(\zero_S,l)-e(y),\qquad
 \Delta_o=\Delta_{o_L},\qquad \Delta_{\zero}=\Delta_{\zero_L}.
\]
Since $F(\zero_S,\one_L)=0$,
\[
 \Delta_l(y)=[F(y,l)-F(\zero_S,l)]-[F(y,\one_L)-F(\zero_S,\one_L)].
\]
By DR-submodularity, the small-block increment $F(y,l)-F(\zero_S,l)$ is
nonincreasing in $l$, and the $y$-gradient
$\nabla_SF(y,l)-\nabla_SF(y,\one_L)$ of $\Delta_l$ is nonnegative because
$l\le\one_L$. Hence $\Delta_l$ is nondecreasing in $y$ and
nonincreasing in $l$, and $\Delta_l(\zero_S)=\Delta_{\one_L}=0$. Hence
$\Delta_o$ and $\Delta_{\zero}-\Delta_o$ are nonnegative and nondecreasing; the
bound \eqref{eq:floor} is the statement $\Delta_l\ge0$. Along the large
prefixes, the upper and lower increments
\[
 \mathcal X_i^+(y)=\Delta_{P_i^+}(y)-\Delta_{P_{i-1}^+}(y),\qquad
 \mathcal X_i^-(y)=\Delta_{P_{i-1}^-}(y)-\Delta_{P_i^-}(y)
\]
vanish at $\zero_S$ and are nonnegative and nondecreasing, and so is
\[
 \Xi_o(y)=\sum_{i=1}^m\frac{q_i-v_i(1-o_i)}b\mathcal X_i^+(y)
          +\sum_{i=1}^m\frac b{v_i}o_i\mathcal X_i^-(y),
\]
whose coefficients are nonnegative because $q_i\ge v_i$. The
monotonicity of $\mathcal X_i^\pm$ follows in the same way: since
$P_i^+\le P_{i-1}^+$ and $P_{i-1}^-\le P_i^-$, their $y$-gradients
$\nabla_SF(y,P_i^+)-\nabla_SF(y,P_{i-1}^+)$ and
$\nabla_SF(y,P_{i-1}^-)-\nabla_SF(y,P_i^-)$ are nonnegative. Let
$\mathcal L_o(y)\ge0$ be the gap in \eqref{eq:signed-endpoint} for
$J=F(y,\cdot)$ at $o_L$. For a large vector $P$, the tangency remainder
\[
 \mathcal T_P(t)=F(a(t),P)+\ip{(1-a(t))\nabla_SF(a(t),P)}{o_S}
                -F(\widetilde o_S,P)
\]
is nonnegative by concavity along the nonnegative direction
$(1-a(t))o_S$.
\begin{lemma}[Path comparisons with remainders]\label{lem:two-comparisons}
For every comparator $o=(o_S,o_L)$ and every $t\in[0,1]$,
\begin{align}
 F(o)+\mathcal S_1(t)={}&\mathcal C(t)+\ip{g^L(t)}{o_L-v}+\ip{(1-a(t))V(t)}{o_S}
     \nonumber\\
   &+(c_1-1)E(t)+E^*-c_1Q_t(o_S),\label{eq:tangent-comparison}\\
 F(o)+\mathcal S_2(t)={}&\mathcal C(t)+\ip{g^L(t)}{o_L-v}
     +\ip{(1-a(t))\nabla_SF(a(t),\zero_L)}{o_S}\nonumber\\
   &+E^*-Q_t(o_S),\label{eq:join-comparison}\\
 F(o_S,\zero_L)+\mathcal S_3(t)={}&S_0(t)
     +\ip{(1-a(t))\nabla_SF(a(t),\zero_L)}{o_S}+E^*-Q_t(o_S),
     \label{eq:small-only-comparison}\\
 F(\zero_S,o_L)+\mathcal S_4(t)={}&\mathcal C(t)-E(t)+\ip{g^L(t)}{o_L-v},
     \label{eq:direct-comparison}\\
 F(o)+\mathcal S_5={}&F(o_S,\zero_L)+F(\zero_S,o_L),
     \label{eq:rectangle-comparison}
\end{align}
with the nonnegative remainders
\begin{align*}
 \mathcal S_1(t)&=\mathcal L_o(\widetilde o_S)+\Xi_o(\widetilde o_S)-\Xi_o(a(t))
   +\Delta_o(\widetilde o_S)-\Delta_o(o_S)
   +\sum_{i=1}^m\chi_i\mathcal T_{P_i^+}(t),\\
 \mathcal S_2(t)&=\mathcal L_o(a(t))+\Delta_o(\widetilde o_S)-\Delta_o(o_S)
   +(\Delta_{\zero}-\Delta_o)(\widetilde o_S)-(\Delta_{\zero}-\Delta_o)(a(t))
   +\mathcal T_{\zero_L}(t),\\
 \mathcal S_3(t)&=\Delta_{\zero}(\widetilde o_S)-\Delta_{\zero}(o_S)+\mathcal T_{\zero_L}(t),\qquad
 \mathcal S_4(t)=\Delta_o(a(t))+\mathcal L_o(a(t)),\\
 \mathcal S_5&=(\Delta_{\zero}-\Delta_o)(o_S).
\end{align*}
\end{lemma}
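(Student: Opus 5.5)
Each display is an exact identity in which the remainder is \emph{defined} so that the equation holds. The plan is therefore: verify each identity by substitution, starting with the simplest, then check nonnegativity of the remainders. Nonnegativity uses only facts collected before the statement:
\begin{itemize}
\item $\Delta_o$, $\Delta_{\zero}-\Delta_o$, $\mathcal X_i^\pm$ and $\Xi_o$ are nonnegative and nondecreasing in the small argument;
\item $\mathcal L_o\ge0$ and $\mathcal T_P\ge0$;
\item $\widetilde o_S\ge o_S$ and $\widetilde o_S\ge a(t)$ coordinatewise.
\end{itemize}
We also read Lemma~\ref{lem:ordered} as an identity. For any small vector $y$, $F(y,o_L)+\mathcal L_o(y)$ equals the intercept $\mathcal C$ plus $\ip{g^L}{o_L-v}$, with all secants and prefix values evaluated at $y$; this holds because $\mathcal L_o(y)$ is by definition the gap in \eqref{eq:signed-endpoint}.

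\textbf{Rectangle, direct and small-only identities.} Identity \eqref{eq:rectangle-comparison} is immediate. Using $F(\zero)=0$, we have $(\Delta_{\zero}-\Delta_o)(o_S)=F(o_S,\zero_L)-F(o)+F(\zero_S,o_L)$.

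For \eqref{eq:direct-comparison}, write
\[
F(\zero_S,o_L)=F(a(t),o_L)-\Delta_o(a(t))-e(a(t)),
\]
use $e(a(t))=E(t)$, and apply the identity form of Lemma~\ref{lem:ordered} at $y=a(t)$.

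For \eqref{eq:small-only-comparison}, write
\[
F(o_S,\zero_L)=F(\widetilde o_S,\zero_L)-[\Delta_{\zero}(\widetilde o_S)-\Delta_{\zero}(o_S)]-[e(\widetilde o_S)-e(o_S)].
\]
Then use $e(\widetilde o_S)=Q_t(o_S)$ and $e(o_S)=E^*$, and expand $F(\widetilde o_S,\zero_L)$ by the definition of $\mathcal T_{\zero_L}(t)$, which brings in $S_0(t)$.

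\textbf{Join identity.} Identity \eqref{eq:join-comparison} follows by adding \eqref{eq:small-only-comparison} and \eqref{eq:direct-comparison} and substituting \eqref{eq:rectangle-comparison}. The right sides differ from the target by $S_0(t)-E(t)=\Delta_{\zero}(a(t))$. A direct comparison of remainders gives
\[
\mathcal S_3+\mathcal S_4+\mathcal S_5-\mathcal S_2=\Delta_{\zero}(a(t)),
\]
so the identity holds.

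Nonnegativity of $\mathcal S_2,\dots,\mathcal S_5$ then reads off from the monotonicity facts: each bracket is a nondecreasing nonnegative function evaluated at a larger point minus a smaller point, or is $\mathcal L_o$, $\mathcal T$, or a value of $\Delta$.

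\textbf{Tangent identity (main obstacle).} Identity \eqref{eq:tangent-comparison} requires the full coefficient bookkeeping. I would apply the identity form of Lemma~\ref{lem:ordered} at $y=\widetilde o_S$ rather than at $a(t)$, then move everything back to $a(t)$. The key observation is that for any large vectors $l,l'$,
\[
F(y,l)-F(y,l')=\Delta_l(y)-\Delta_{l'}(y)+F(\zero_S,l)-F(\zero_S,l').
\]
Hence the secants $A_i^L,B_i^L$ change between $a(t)$ and $\widetilde o_S$ exactly by $-[\mathcal X_i^-(\widetilde o_S)-\mathcal X_i^-(a(t))]/p_i$ and $[\mathcal X_i^+(\widetilde o_S)-\mathcal X_i^+(a(t))]/(1-p_i)$.

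Inserting the weights $(1-p_i)(o_i-v_i)$ and $p_i(o_i-v_i)$, and using $v_i(1-p_i)/p_i=b$ and $p_i(1-v_i)/(1-p_i)=\vartheta(q_i)$, the secant part should collapse to $-[\Xi_o(\widetilde o_S)-\Xi_o(a(t))]$ plus terms involving only the upper-prefix values. The coefficient $(q_i-v_i(1-o_i))/b$ in $\Xi_o$ is what this algebra should produce.

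The intercept $\mathcal C$ at $\widetilde o_S$ contains $Y_i(\widetilde o_S)$, $S_0(\widetilde o_S)$ and $E(\widetilde o_S)$. I would rewrite these through $\Delta$ and $e$, and check that the net weights on $F(\widetilde o_S,P_i^+)$ are exactly the $\chi_i$ in \eqref{eq:V}; the $e$-terms should telescope to $c_1Q_t(o_S)$. Tangency $\mathcal T_{P_i^+}$ then converts $\sum_i\chi_iF(\widetilde o_S,P_i^+)$ into $\sum_i\chi_iF(a(t),P_i^+)+\ip{(1-a(t))V(t)}{o_S}$. The leftover $F(\zero_S,\cdot)$ and $\Delta_o$ terms produce $(c_1-1)E(t)+E^*$ and $\Delta_o(\widetilde o_S)-\Delta_o(o_S)$.

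Matching every coefficient here, particularly the telescoping of $\vartheta(q_{i+1})-\vartheta(q_i)$ against $\chi_i=(q_i-q_{i+1})/b$ and $\chi_m=\cprof(q_m)$, is the step I expect to need the most care. I would organise it by collecting the coefficient of each $Y_i$ separately. Once the identity is in place, $\mathcal S_1\ge0$ follows from $\mathcal L_o\ge0$, $\chi_i\ge0$, $\mathcal T\ge0$, and monotonicity of $\Xi_o$ and $\Delta_o$.
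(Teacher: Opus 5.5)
Your proofs of \eqref{eq:rectangle-comparison}, \eqref{eq:direct-comparison} and \eqref{eq:small-only-comparison}, and your nonnegativity arguments, are correct and match the paper. Your route to \eqref{eq:join-comparison} is shorter than the paper's, which expands $F(\widetilde o_S,o_L)$ through $\Psi(l)=F(\widetilde o_S,l)-F(a(t),l)$. You instead add the three identities already proved, using $S_0(t)-E(t)=\Delta_{\zero}(a(t))$ and $\mathcal S_3+\mathcal S_4+\mathcal S_5-\mathcal S_2=\Delta_{\zero}(a(t))$. Both relations check out, and the paper later uses the equivalent $\mathcal S_3+\mathcal S_5-\mathcal S_2=\Delta_{\zero}(a(t))-\Delta_o(a(t))-\mathcal L_o(a(t))$. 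For \eqref{eq:tangent-comparison} your plan is the paper's.

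The key step of that plan, however, has a sign error. At a small vector $y$,
\[
 (1-p_i)B_i^L=F(y,P_{i-1}^+)-F(y,P_i^+)=\Delta_{P_{i-1}^+}(y)-\Delta_{P_i^+}(y)+\mathrm{const}=-\mathcal X_i^+(y)+\mathrm{const}.
\]
Write $\delta\mathcal X_i^\pm=\mathcal X_i^\pm(\widetilde o_S)-\mathcal X_i^\pm(a(t))$. Then $B_i^L$ changes by $-\delta\mathcal X_i^+/(1-p_i)$, not $+\delta\mathcal X_i^+/(1-p_i)$.

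This matters. The only $o_L$-dependent part of the coefficient of $\delta\mathcal X_i^+$ comes from $p_io_iB_i^L$ in the linear term. With your sign it is $+v_io_i/b$, while $-[\Xi_o(\widetilde o_S)-\Xi_o(a(t))]$ carries $-v_io_i/b$ there, so the collapse you predict cannot occur. With the correct sign, the weights $(1-p_i)o_i$ and $-p_i(1-o_i)$ of $A_i^L$ and $B_i^L$ on the right side of \eqref{eq:signed-endpoint} give the total change
\[
 \Psi(p)-\sum_{i=1}^m\frac b{v_i}o_i\,\delta\mathcal X_i^-+\sum_{i=1}^m\frac{v_i(1-o_i)}b\,\delta\mathcal X_i^+
 =\Psi(p)+\sum_{i=1}^m\frac{q_i}b\,\delta\mathcal X_i^+-[\Xi_o(\widetilde o_S)-\Xi_o(a(t))].
\]
Summation by parts with $q_1=1$ then turns the first two terms into $\sum_i\chi_i\Psi(P_i^+)-(c_1-1)\Psi(\one_L)$, as in the paper.

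Two smaller bookkeeping corrections:
\begin{enumerate}
\item If you re-expand $\mathcal C$ at $\widetilde o_S$ and handle the linear term separately, the linear-term change alone also contains the lower-prefix sum $b\sum_i\delta\mathcal X_i^-$. That sum cancels only against the intercept, so your claim that the ``secant part'' collapses on its own is not right as stated. Using linearity of \eqref{eq:signed-endpoint} in $J=F(a(t),\cdot)+\Psi$, as the paper does, avoids this.
\item $E^*$, $(c_1-1)E(t)$ and $-c_1Q_t(o_S)$ are all values of $e$; the $F(\zero_S,\cdot)$ values cancel in every difference. The terms $E^*-Q_t(o_S)$ come from $F(o)=F(\widetilde o_S,o_L)+E^*-Q_t(o_S)-[\Delta_o(\widetilde o_S)-\Delta_o(o_S)]$. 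The terms $(c_1-1)E(t)-(c_1-1)Q_t(o_S)$ come from $-(c_1-1)\Psi(\one_L)$.
\end{enumerate}
With these fixes your argument proves the lemma.
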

The last three identities treat the blocks separately: the small block
with the large coordinates at zero, the large block with the small
coordinates at zero, and the rectangle inequality that adds the two.
\begin{proof}
Each remainder is nonnegative because $\widetilde o_S\ge a(t)$ and
$\widetilde o_S\ge o_S$, the functions $\Delta_o$, $\Delta_{\zero}-\Delta_o$, and
$\Xi_o$ are nondecreasing, and $\mathcal L_o,\mathcal T_P\ge0$. The
identities are rearrangements. The definition of $\mathcal L_o$ and the
centered intercept in the proof of Lemma~\ref{lem:ordered} give
\begin{equation}\label{eq:large-gap}
 F(a(t),o_L)=\mathcal C(t)+\ip{g^L(t)}{o_L-v}-\mathcal L_o(a(t)).
\end{equation}
For \eqref{eq:rectangle-comparison}, express the four values through
$\Delta_{\zero}$ and $\Delta_o$ and use $F(\zero)=0$. For
\eqref{eq:direct-comparison}, the definition of $\Delta_o$ gives
$F(\zero_S,o_L)=F(a(t),o_L)-E(t)-\Delta_o(a(t))$; substitute
\eqref{eq:large-gap}.

For the remaining identities, $F(y,l)-e(y)=\Delta_l(y)+F(\zero_S,l)$ gives
\begin{equation}\label{eq:join-start}
 F(o)=F(\widetilde o_S,o_L)+E^*-Q_t(o_S)
      -[\Delta_o(\widetilde o_S)-\Delta_o(o_S)],
\end{equation}
and likewise
$F(o_S,\zero_L)=F(\widetilde o_S,\zero_L)+E^*-Q_t(o_S)
-[\Delta_{\zero}(\widetilde o_S)-\Delta_{\zero}(o_S)]$. The definition of
$\mathcal T_{\zero_L}$, written as
$F(\widetilde o_S,\zero_L)=S_0(t)+\ip{(1-a(t))\nabla_SF(a(t),\zero_L)}{o_S}
-\mathcal T_{\zero_L}(t)$, proves \eqref{eq:small-only-comparison}.
Put $\Psi(l)=F(\widetilde o_S,l)-F(a(t),l)$. Then
$\Psi(\zero_L)-\Psi(o_L)=(\Delta_{\zero}-\Delta_o)(\widetilde o_S)
-(\Delta_{\zero}-\Delta_o)(a(t))$. Write
$F(\widetilde o_S,o_L)=F(a(t),o_L)+\Psi(\zero_L)-[\Psi(\zero_L)-\Psi(o_L)]$.
Substitute \eqref{eq:large-gap} for $F(a(t),o_L)$ and
$\Psi(\zero_L)=F(\widetilde o_S,\zero_L)-S_0(t)
=\ip{(1-a(t))\nabla_SF(a(t),\zero_L)}{o_S}-\mathcal T_{\zero_L}(t)$, and
insert the result in \eqref{eq:join-start}. This gives
\[
\begin{aligned}
 F(o)={}&\mathcal C(t)+\ip{g^L(t)}{o_L-v}
 +\ip{(1-a(t))\nabla_SF(a(t),\zero_L)}{o_S}+E^*-Q_t(o_S)\\
 &-\mathcal L_o(a(t))-\mathcal T_{\zero_L}(t)
 -[(\Delta_{\zero}-\Delta_o)(\widetilde o_S)-(\Delta_{\zero}-\Delta_o)(a(t))]
 -[\Delta_o(\widetilde o_S)-\Delta_o(o_S)].
\end{aligned}
\]
The remainder terms sum to $-\mathcal S_2(t)$, which proves
\eqref{eq:join-comparison}.

For \eqref{eq:tangent-comparison}, apply the definition of
$\mathcal L_o$ to $F(\widetilde o_S,\cdot)=F(a(t),\cdot)+\Psi$. The right side
of \eqref{eq:signed-endpoint} is linear in the function; write
$\mathcal A_\Psi(o_L)$ for its value with $J=\Psi$. Then
\[
 F(\widetilde o_S,o_L)=\mathcal C(t)+\ip{g^L(t)}{o_L-v}+\mathcal A_\Psi(o_L)
                       -\mathcal L_o(\widetilde o_S).
\]
The secant differences of $\Psi$ are increments of $\mathcal X_i^\pm$:
$\Psi(P_i^+)-\Psi(P_{i-1}^+)=\mathcal X_i^+(\widetilde o_S)-\mathcal X_i^+(a(t))$
and
$\Psi(P_{i-1}^-)-\Psi(P_i^-)=\mathcal X_i^-(\widetilde o_S)-\mathcal X_i^-(a(t))$.
Since $(1-p_i)/p_i=b/v_i$ and $p_i/(1-p_i)=v_i/b$, the definition of
the secants gives
\[
 \mathcal A_\Psi(o_L)=\Psi(p)-\sum_{i=1}^m\frac b{v_i}o_i\bigl[\Psi(P_{i-1}^-)-\Psi(P_i^-)\bigr]
 +\sum_{i=1}^m\frac{v_i(1-o_i)}b\bigl[\Psi(P_i^+)-\Psi(P_{i-1}^+)\bigr].
\]
Writing $v_i(1-o_i)=q_i-[q_i-v_i(1-o_i)]$ and using the two increment
identities above and the definition of $\Xi_o$ gives
\[
 \mathcal A_\Psi(o_L)=\Psi(p)+\sum_{i=1}^m\frac{q_i}b[\Psi(P_i^+)-\Psi(P_{i-1}^+)]
       -[\Xi_o(\widetilde o_S)-\Xi_o(a(t))].
\]
With $Z_i=\Psi(P_i^+)$, so that $Z_0=\Psi(\one_L)$ and $Z_m=\Psi(p)$, summation
by parts gives
$\sum_{i=1}^m\frac{q_i}b(Z_i-Z_{i-1})=\frac{q_m}bZ_m
+\sum_{i=1}^{m-1}\frac{q_i-q_{i+1}}bZ_i-\frac{q_1}bZ_0$. Since $q_1=1$,
$1/b=c_1-1$ and $1+q_m/b=\chi_m$,
\[
 \Psi(p)+\sum_{i=1}^m\frac{q_i}b[\Psi(P_i^+)-\Psi(P_{i-1}^+)]
 =\sum_{i=1}^m\chi_i\Psi(P_i^+)-(c_1-1)\Psi(\one_L).
\]
Finally $\Psi(\one_L)=Q_t(o_S)-E(t)$, and
$\Psi(P_i^+)=\ip{(1-a(t))\nabla_SF(a(t),P_i^+)}{o_S}-\mathcal T_{P_i^+}(t)$ by
the definition of the tangency remainder. Substituting these expressions
in \eqref{eq:join-start} proves \eqref{eq:tangent-comparison}.
\end{proof}

\subsection{Scalar averaging and ordered path representations}\label{sec:curvature}

Section~\ref{sec:comparison-proof} applies the next lemma to step
profiles $q(t)$ built from the nested intervals of
Lemma~\ref{lem:curvature}. Each such profile takes the values
$q_1\ge\cdots\ge q_m$ in order, so it is nonincreasing.
\begin{lemma}[Scalar averaging inequality]\label{lem:scalar}
The weights in \eqref{eq:beta-c}--\eqref{eq:mixing} satisfy
\eqref{eq:mass-identities}. The functions $\theta_+$ and $\theta_-$ cross exactly once,
at $z\in(3/4,1)$; the function $\theta$ is absolutely continuous, and
\begin{equation}\label{eq:theta-identities}
\begin{gathered}
 \theta(0)=\Gamma e^{-s}-b,\qquad\theta(1)=\Gamma,\qquad
 \theta'(t)=s\theta(t)+b\psi(t)\ \text{for a.e. }t,\\
 0\le\psi(t)\le\rho_1(t),\qquad
 \int_0^1e^{-st}\psi(t)\,dt=1.
\end{gathered}
\end{equation}
Define
\begin{equation}\label{eq:budget}
 \mathcal B(\tau)=e^{-s\tau}\int_0^\tau[\theta(t)-\rho_1(t)]e^{st}\,dt.
\end{equation}
Then for every $0\le\tau\le T\le1$ and every nonincreasing measurable
$q:[\tau,T]\to[1/2,1]$,
\begin{equation}\label{eq:scalar}
 \int_\tau^T\left\{\Lambda(q(t))W(t)+\cprof(q(t))\lambda(t)-\Gamma\right\}dt
 \le\mathcal B(\tau).
\end{equation}
\end{lemma}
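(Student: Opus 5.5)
The plan has two parts. The identities are direct consequences of the definitions, and the inequality \eqref{eq:scalar} reduces to a one-variable calculus problem on four explicit intervals.

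\emph{Identities.} I would first verify \eqref{eq:mass-identities} by integrating the piecewise polynomials $w$ and $\varpi$ in \eqref{eq:wW} with the exact rationals of \eqref{eq:beta-c}; the displayed formula for $w_1$ is just $\int_0^1 w=1-\beta$ solved for $w_1$. The remaining relations follow from $\kappa=bW$ and the definition of $\lambda$. Next I would locate $z$: by the sign computation for $\theta_+-\theta_-$ given after \eqref{eq:theta-minus}, it suffices to show $\int_0^{3/4}e^{-st}\rho_1(t)\,dt<1$. On $[0,t_0)$ we have $\rho_1=b$, and after $t_0$ we have $\rho_1=c_1w-(c_1-1)bW$, which is explicit, so this is a finite numerical check with rational enclosures of $e^{-st}$. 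The identities \eqref{eq:theta-identities} then follow quickly:
\begin{itemize}
\item $\theta=\theta_+$ on $[0,z]$ and $\theta=\theta_-$ on $[z,1]$;
\item $\theta_+'=s\theta_++b\rho_1$ and $\theta_-'=s\theta_-$;
\item hence $\theta'=s\theta+b\psi$ almost everywhere, with $\psi=\rho_1\one_{[0,z]}$;
\item $\theta(0)=\theta_+(0)=\Gamma e^{-s}-b$ and $\theta(1)=\theta_-(1)=\Gamma$;
\item $\int_0^1e^{-st}\psi\,dt=1$ is exactly the definition of $z$.
\end{itemize}
Absolute continuity holds because $\theta$ is the minimum of two absolutely continuous functions.

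\emph{Pointwise reduction.} For \eqref{eq:scalar}, I would first discard the monotonicity of $q$ and bound the integrand pointwise. For fixed $t$, the map $q\mapsto\Lambda(q)W+\cprof(q)\lambda=(1+b)W+\lambda+[q(1-q)W+q\lambda]/b$ is a concave quadratic. Its unconstrained maximizer is $q^*=\tfrac12(1+\lambda/W)\ge\tfrac12$. Define $G(t)$ as the maximum over $q\in[1/2,1]$, attained at $\min\{q^*,1\}$:
\begin{itemize}
\item $G=(1+b)W+\lambda+(W+\lambda)^2/(4bW)$ when $\lambda\le W$;
\item $G=(1+b)W+\lambda(1+1/b)$ otherwise.
\end{itemize}
Since $\lambda=0$ on $[0,t_0)$, there $G=(1+b)W+W/(4b)=\Gamma$, so the integrand vanishes identically before $t_0$. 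The target becomes $\int_\tau^T(G-\Gamma)\,dt\le\mathcal B(\tau)$ for all $\tau\le T$.

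\emph{The inequality.} Using $\mathcal B'=-s\mathcal B+\theta-\rho_1$ and $\mathcal B(0)=0$, I would set $D(\tau)=\mathcal B(\tau)-\int_\tau^1(G-\Gamma)$ and reduce the worst-case choice of $T$ to two facts:
\begin{itemize}
\item $\mathcal B\ge0$ wherever $\int_\tau^T(G-\Gamma)$ can be positive;
\item a sign pattern for $G-\Gamma$: negative on an initial part of $[t_0,1]$ and possibly positive only late, or the reverse.
\end{itemize}
Then the supremum over $T$ is attained at a sign change of $G-\Gamma$ or at $T=1$, and \eqref{eq:scalar} reduces to finitely many one-dimensional inequalities. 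I would split into the intervals $[0,t_0)$, $[t_0,t_1]$, $[t_1,z]$ and $[z,1]$. On each interval $w,W,\lambda$ are explicit polynomials and $\theta$ is an explicit exponential-polynomial, so each inequality is checked by expanding $e^{\pm st}$ with rigorous rational Taylor bounds and verifying polynomial sign conditions via derivatives.

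\emph{Main obstacle.} The hardest step is the window around $t_1$ and $z$, where the constants $w_1,w_2,w_3,\varpi_0,\varpi_1$ were evidently tuned and the slack is tiny. If the pointwise relaxation (dropping monotonicity of $q$) proves too lossy there, I would keep the constraint. Since $q$ is nonincreasing while $q^*(t)$ has a definite monotonic trend, the optimal $q$ is then constant on a final segment. That yields a sharper bound of the form $\max_{c}\int_{\tau'}^T[\Lambda(c)W+\cprof(c)\lambda-\Gamma]$, and I would check it by calculus in the two variables $(c,T)$.
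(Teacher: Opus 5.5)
Your treatment of the identities is fine and matches the paper: the mass identities, $z>3/4$ from $\int_0^{3/4}e^{-st}\rho_1\,dt<1$, and $\theta'=s\theta+b\psi$ from the two branches $\theta_\pm$. The gap is in the inequality \eqref{eq:scalar}. Your pointwise relaxation is not merely lossy near $t_1$ and $z$; it is false, and it already fails at $\tau=0$, where $\mathcal B(0)=0$. Before $t_0$ your $G$ equals $\Gamma$, attained at $q=1/2$. After $t_0$, taking $q=1$ and using $bc_1=1+b$,
\[
 G\ge\Lambda(1)W+\cprof(1)\lambda=(1+b)W+c_1(w-bW)=c_1w .
\]
Since $\int_{t_0}^1w=1-\beta$, this gives
\[
 \int_0^1(G-\Gamma)\,dt\ge\int_{t_0}^1(c_1w-\Gamma)\,dt
 =c_1(1-\beta)-\tfrac34\Gamma=\frac{13}{16b}-b\approx0.0617>0=\mathcal B(0).
\]
The relaxation uses $q=1/2$ early and $q$ near $1$ late. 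Indeed your maximizer $q^*=\frac12(1+\lambda/W)$ is nondecreasing, because $\lambda/W$ increases from $0$ at $t_0$, and that is exactly what the nonincreasing constraint forbids. For admissible profiles, the best value of the left side at $\tau=0$, $T=1$ is only about $-3\times10^{-4}$. So the monotonicity of $q$ carries the whole inequality. Your architecture built on $G$ (the sign pattern of $G-\Gamma$, the supremum over $T$, and the function $D$) has to be discarded, not refined.

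Your fallback points in the right direction, since the paper does reduce to constant profiles, but as stated it is neither proved nor correctly formulated. The reduction is to a constant on the \emph{whole} interval $[\tau,T]$, not on a final segment. Let $\bar q$ be the $W$-weighted mean of $q$. Because $\lambda/W$ is nondecreasing, $(q(t)-q(r))(\lambda(t)W(r)-\lambda(r)W(t))\le0$, which integrates to $\int_\tau^T\lambda q\le\bar q\int_\tau^T\lambda$. Jensen then gives $\int_\tau^TWq^2\ge\bar q^{\,2}\int_\tau^TW$. After this you still face a three-parameter problem in $(q,\tau,T)$, and the real work is lower-bounding $\mathcal B(\tau)$, which your plan treats as a routine check. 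The paper needs the following ingredients.
\begin{enumerate}
\item For $\tau=0$: $\int_0^TF_q$ is a concave quadratic in $q$, so its nonpositivity for all real $q$ is a discriminant condition $(2b+1)^2\,\mathcal U(T)\,T\ge Z(T)^2$. Here $\mathcal U(T)=\int_0^TW$ and $Z(T)=(1+b)\mathcal U(T)+\int_0^Tw+\beta$. This is checked by third-derivative sign tests on the ramp and on $[t_1,1]$.
\item For $\tau\le t_0$: the closed form $\mathcal B(t)=\mathfrak k\sinh(st)/s+b(\cosh(st)-1)/s^2$.
\item For $\tau\ge t_0$: $\mathcal B\ge0$, $\mathcal B(\tau)\ge\int_\tau^1(c_1w-\Gamma)$, and $\mathcal B(t_0)>21/320$. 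These come from an ODE and convexity argument for their difference on $[0,t_0]$, the ramp, $[t_1,z]$ and $[z,1]$.
\item A split at $q_0=157/200$: the integrand changes sign once for $q\ge q_0$, and is bounded pointwise by $7/80$ for $q\le q_0$.
\end{enumerate}
Finally, you have misplaced the tight spots; they are not near $t_1$ or $z$. One is the prefix intervals $[0,T]$ with $T$ near $1$, with slack about $3\times10^{-4}$. The other is $[\tau,1]$ with $q=1$ and $\tau\approx0.15$, with slack of order $10^{-6}$.
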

For $t\ge t_0$ the integrand in \eqref{eq:scalar} equals
$\cprof(q)[w(t)+(1-q)W(t)]-\Gamma$, since $\lambda=w-bW$ there and
$\Lambda(q)=\cprof(q)(b+1-q)$. Before $t_0$ it equals
$\Lambda(q)-\Gamma=-(q-1/2)^2/b\le0$, so the initial interval, on which
only the small-block comparison is used, contributes a nonpositive
amount to this inequality.
The proof is in Appendix~\ref{app:scalar}.

The next lemma represents the prefix paths by nested intervals, to
which Lemma~\ref{lem:scalar} applies. Its notation separates each path
into its initial value, the common boundary value, and an increment:
\begin{equation}\label{eq:path-decomposition}
 I_i=Y_i(0),\qquad K_i(t)=Y_i(t)-I_i-E(t),\qquad
 D_i(t)=K_i(t)-K_{i-1}(t).
\end{equation}
Here $I_i,K_i$ are indexed by $0\le i\le m$, and $D_i$ by
$1\le i\le m$. Normalization gives $I_0=E(0)=0$, so $K_0=0$ and
$K_i(0)=0$. Also $I_i=F(\zero_S,P_i^+)\ge0$ because $F\ge0$. For
$i\ge1$, DR and
$P_i^+\le P_{i-1}^+$ imply $Y_i'(t)\ge Y_{i-1}'(t)$.
Consequently $D_i'(t)=Y_i'(t)-Y_{i-1}'(t)\ge0$, and integrating
from zero gives $D_i\ge0$ as well.
In particular, $Y_i(t)=I_i+K_i(t)+E(t)\ge E(t)$ for every prefix.

Adding a common correction to the derivatives $Y_i'(t)$ gives functions
$R_i(t)$ that are ordered in $i$ and nonincreasing in $t$. Each set
$\{t:R_i(t)>\omega\}$ is then an interval starting at $0$, and these
intervals are nested in $i$. This nesting produces the nonincreasing
profiles in Lemma~\ref{lem:scalar}.
\begin{lemma}[Ordered integral representation]\label{lem:curvature}
For $0\le i\le m$, the functions
\begin{equation}\label{eq:R-functions}
 R_i(t)=Y_i'(t)+sE(t)+s\int_t^1[E'(r)+sE(r)]dr
\end{equation}
are nonnegative, nonincreasing in $t$, and nondecreasing in $i$. Put
$H_R=R_m(0)$ and, for a level $\omega\in[0,H_R]$,
$T_i(\omega)=\sup\{t\in[0,1]:R_i(t)>\omega\}$ (with $\sup\varnothing=0$)
and $\tau(\omega)=T_0(\omega)$. These endpoints are measurable and satisfy
$0\le\tau(\omega)\le T_1(\omega)\le\cdots\le T_m(\omega)\le1$. The corresponding paths are
\begin{equation}\label{eq:unit-layer}
 E_\tau(t)=
 \frac{e^{-s(t+\tau)}(e^{2s\min\{t,\tau\}}-1)}{2s},
 \qquad
 K_{i,\omega}(t)=\min\{t,T_i(\omega)\}-\min\{t,\tau(\omega)\}.
\end{equation}
Then the paths have the representations
\[
 E(t)=\int_0^{H_R}E_{\tau(\omega)}(t)\,d\omega,\qquad
 K_i(t)=\int_0^{H_R}K_{i,\omega}(t)\,d\omega.
\]
Moreover $E_\tau'+sE_\tau=r_{0,\tau}$ almost everywhere, where
$r_{0,\tau}(t)=e^{-s(\tau-t)}\one\{t<\tau\}$.
\end{lemma}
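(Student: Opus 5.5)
The plan is to prove the three monotonicity properties of $R_i$ first and then obtain the representations by a layer-cake argument followed by solving a linear ODE for $E$. Write $\phi(t)=E'(t)+sE(t)$; it is nonnegative by \eqref{eq:boundary-drift}. Then
\[
 R_i(t)=Y_i'(t)+sE(t)+s\int_t^1\phi(r)\,dr .
\]

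\emph{Order in $i$ and nonnegativity.} Before the curvature properties were introduced it was shown that DR and $P_i^+\le P_{i-1}^+$ give $Y_i'\ge Y_{i-1}'$. The correction term is common to all $i$, so $R_i\ge R_{i-1}$. For nonnegativity, once monotonicity in $t$ is established, $R_i(t)\ge R_i(1)=Y_i'(1)+sE(1)\ge Y_0'(1)+sE(1)=\phi(1)\ge0$.

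\emph{Monotonicity in $t$.} This is the main obstacle. Formally $R_i'=Y_i''+sE'-s\phi=Y_i''-s^2E$, so it suffices to show $Y_i''\le s^2E$. By the chain rule with $a_j'=u_j(1-a_j)$,
\[
 Y_i''=-\sum_j u_j^2(1-a_j)\partial_jF(a,P_i^+)+\sum_{j,k}u_ju_k(1-a_j)(1-a_k)\partial_{jk}F(a,P_i^+).
\]
The Hessian sum is nonpositive by DR, since all second partials are $\le0$. For the first sum, use $u_j\le s$ and $\partial_jF(a,P_i^+)\ge\partial_jF(a,\one_L)=\partial_je(a)$ to bound it by $s^2\sum_j(1-a_j)(-\partial_je(a))_+$. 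The negative budget \eqref{eq:negative-budget} bounds this by $s^2E$.

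Since $F$ is only $C^1$, I would make this rigorous without second derivatives. For $t_1<t_2$, write $Y_i'(t_2)-Y_i'(t_1)$ as $\sum_ju_j[(1-a_j(t_2))\partial_jF(a(t_2),P)-(1-a_j(t_1))\partial_jF(a(t_1),P)]$. Split off $[(1-a_j(t_2))-(1-a_j(t_1))]\,\partial_jF(a(t_2),P)$, which is bounded as above using the budget at $t_2$. The remaining term $(1-a_j(t_1))[\partial_jF(a(t_2),P)-\partial_jF(a(t_1),P)]$ is $\le0$ by DR because $a(t_2)\ge a(t_1)$. This yields $Y_i'(t_2)-Y_i'(t_1)\le s^2\int_{t_1}^{t_2}E$ up to an error controlled by continuity of $E$ and of $\nabla F$. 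Dividing into small steps and letting the mesh go to zero gives $R_i(t_2)\le R_i(t_1)$. Alternatively, one could mollify $F$ in a DR-preserving way (convolution keeps DR) and pass to the limit.

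\emph{Representations.} Each $R_i$ takes values in $[0,H_R]$, since $R_i(t)\le R_m(0)$. Because $R_i$ is nonincreasing, $\{t:R_i(t)>\omega\}$ is an interval starting at $0$ with right endpoint $T_i(\omega)$, and the order in $i$ gives $\tau\le T_1\le\cdots\le T_m$. Measurability holds because each $T_i$ is monotone in $\omega$. The layer-cake formula $R_i(t)=\int_0^{H_R}\one\{t<T_i(\omega)\}\,d\omega$ holds for a.e.\ $t$, and Fubini then gives $\int_0^t R_i=\int_0^{H_R}\min\{t,T_i(\omega)\}\,d\omega$. Since $K_i'=Y_i'-E'=R_i-R_0$ and $K_i(0)=0$, this yields $K_i=\int K_{i,\omega}\,d\omega$.

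For $E$, set $G(t)=\int_t^1\phi$. Then $R_0=-G'+sG$ with $G(1)=0$, so $G(t)=\int_t^1e^{s(t-r)}R_0(r)\,dr$ and $\phi=R_0-sG$. Together with $E(0)=0$ this gives $E(t)=\int_0^te^{-s(t-r)}\phi(r)\,dr$. The map $R_0\mapsto E$ is linear, so substituting the layer-cake form of $R_0$ and using Fubini reduces everything to $R_0=\one\{t<\tau\}$. In that case a direct computation gives $\phi_\tau(t)=\one\{t<\tau\}\,[1-(1-e^{-s(\tau-t)})]=r_{0,\tau}(t)$. Integrating, $E_\tau(t)=\int_0^{\min\{t,\tau\}}e^{-s(t-r)}e^{-s(\tau-r)}\,dr$, which is exactly \eqref{eq:unit-layer}. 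This also verifies $E_\tau'+sE_\tau=r_{0,\tau}$ almost everywhere.
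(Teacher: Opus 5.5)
Your proposal is correct and is essentially the paper's proof. The paper also uses a one-step DR comparison along $a(t_2)\ge a(t_1)$, together with $u_j\le s$, $\partial_jF(a,P)\ge\partial_je(a)$ and the negative budget, then takes a Riemann-sum limit to get $Y_i'(t_2)-Y_i'(t_1)\le s^2\int_{t_1}^{t_2}E$; the representations come from the same layer-cake identity and Volterra inversion. The differences are minor: you invoke the budget at the right endpoint, which gives a factor $s(e^{s\delta}-1)E(t_2)$ that is harmless in the mesh limit and needs no continuity of $\nabla F$, whereas the paper gets $\delta s^2E(t)$ at the left endpoint; and you obtain nonnegativity from monotonicity via $R_i(t)\ge R_i(1)\ge R_0(1)=E'(1)+sE(1)\ge0$ rather than from the paper's pointwise bound $Y'\ge-sE$.
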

\begin{proof}
For a fixed large-coordinate vector $P\le\one_L$, the path
$Y(t)=F(a(t),P)$ satisfies the following two bounds. DR gives
$\partial_jF(a(t),P)\ge\partial_je(a(t))$ for every $j\in S$.
Using $0\le u_j\le s$ and \eqref{eq:negative-budget}, we obtain
\begin{equation}\label{eq:common-derivative-lower}
 Y'(t)=\sum_{j\in S}u_j e^{-tu_j}\partial_jF(a(t),P)
 \ge-s\sum_{j\in S}e^{-tu_j}(-\partial_je(a(t)))_+
 \ge-sE(t).
\end{equation}
For $\delta>0$ with $t+\delta\le1$, DR along $a(t+\delta)\ge a(t)$ gives
\begin{align*}
 Y'(t+\delta)-Y'(t)
 &\le\sum_{j\in S}u_je^{-tu_j}(e^{-\delta u_j}-1)
                         \partial_jF(a(t),P)\\
 &\le\sum_{j\in S}u_je^{-tu_j}(1-e^{-\delta u_j})
                         (-\partial_je(a(t)))_+\\
 &\le \delta\sum_{j\in S}u_j^2e^{-tu_j}
                         (-\partial_je(a(t)))_+
 \le \delta s^2E(t).
\end{align*}
In the second line, terms with nonnegative $\partial_jF$ are dropped;
for the others, $(-\partial_jF)_+\le(-\partial_je)_+$.
The third line uses $1-e^{-\delta u_j}\le\delta u_j$.
For a partition of $[r,t]$ into cells $[\varsigma,\varsigma']$, summing this bound
over the cells, each taken at its left endpoint, gives
\[
 Y'(t)-Y'(r)\le s^2\sum_{[\varsigma,\varsigma']}(\varsigma'-\varsigma)E(\varsigma).
\]
Continuity of $E$ makes the right side converge to its Riemann integral
as the partition mesh tends to zero. Therefore
\begin{equation}\label{eq:integrated-curvature}
 Y'(t)-Y'(r)\le s^2\int_r^t E(\varsigma)\,d\varsigma.
\end{equation}
Subtracting \eqref{eq:R-functions} at $r$ from its value at $t$
gives, for $r<t$,
\[
\begin{aligned}
 R_i(t)-R_i(r)&=Y_i'(t)-Y_i'(r)+s[E(t)-E(r)]-s\int_r^t[E'(\varsigma)+sE(\varsigma)]\,d\varsigma\\
 &=Y_i'(t)-Y_i'(r)-s^2\int_r^tE(\varsigma)\,d\varsigma\le0,
\end{aligned}
\]
where the last step is \eqref{eq:integrated-curvature}.
Nonnegativity follows from \eqref{eq:common-derivative-lower} and
$E'(t)+sE(t)\ge0$. The prefix order gives
$Y_i'(t)\ge Y_{i-1}'(t)$, hence $R_i(t)\ge R_{i-1}(t)$.

For $\omega\ge0$, the endpoint is
$T_i(\omega)=\sup\{t\in[0,1]:R_i(t)>\omega\}$, with supremum zero
for the empty set. Each $T_i$ is nonincreasing in $\omega$ and hence
Borel measurable.
Because $R_i$ is nonincreasing, the set in braces is an
initial interval, up to its endpoint.
The indicator of this interval therefore equals
$\one\{t<T_i(\omega)\}$ for almost every $(t,\omega)$.
The elementary identity
$r=\int_0^\infty\one\{\omega<r\}\,d\omega$ for $r\ge0$ gives
\begin{equation}\label{eq:layer-identity}
 R_i(t)=\int_0^{H_R}\one\{t<T_i(\omega)\}\,d\omega
 \quad\text{for almost every }t.
\end{equation}
The pointwise order of the $R_i$ gives the order of the $T_i$.
Continuity of the first derivatives makes each $R_i$ continuous and
bounded. Their order in $i$ and monotonicity in $t$ give
$0\le R_i(t)\le R_m(0)=H_R$, which explains the common finite
integration range in \eqref{eq:layer-identity}.

The relation between these functions and the paths is linear. First,
$K_i'(t)=Y_i'(t)-E'(t)=R_i(t)-R_0(t)$ and $K_i(0)=0$, so integrating
\eqref{eq:layer-identity} from zero to $t$ gives
\[
 K_i(t)=\int_0^{H_R}
   [\min\{t,T_i(\omega)\}-\min\{t,T_0(\omega)\}]\,d\omega.
\]
Second, with $r_0(t)=E'(t)+sE(t)$, the equation for $R_0$ is
\[
 R_0(t)=r_0(t)+s\int_t^1r_0(r)dr.
\]
With $\Phi(t)=\int_t^1r_0(r)\,dr$, this reads $R_0=-\Phi'+s\Phi$ with
$\Phi(1)=0$. Since
$\frac{d}{dt}\bigl[e^{-st}\Phi(t)\bigr]=e^{-st}(\Phi'-s\Phi)(t)=-e^{-st}R_0(t)$,
integrating from $t$ to $1$ gives
$\Phi(t)=\int_t^1e^{-s(r-t)}R_0(r)\,dr$, and
$r_0=-\Phi'=R_0-s\Phi$ gives
\begin{equation}\label{eq:volterra-inverse}
 r_0(t)=R_0(t)-s\int_t^1e^{-s(r-t)}R_0(r)dr.
\end{equation} For one indicator
$R_0(t)=\one\{t<\tau\}$, formula \eqref{eq:volterra-inverse} becomes
\[
 r_{0,\tau}(t)=e^{-s(\tau-t)}\one\{t<\tau\}.
\]
Solving $E_\tau'+sE_\tau=r_{0,\tau}$ with $E_\tau(0)=0$ gives
\[
 E_\tau(t)=e^{-st}\int_0^{\min\{t,\tau\}}e^{sr}e^{-s(\tau-r)}dr,
\]
which evaluates to \eqref{eq:unit-layer}. Applying the same bounded
linear operations to \eqref{eq:layer-identity} and interchanging its
finite integrals yields
$E(t)=\int_0^{H_R}E_{T_0(\omega)}(t)\,d\omega$.
This proves both asserted representations.
\end{proof}

\subsection{Proof of Theorem~\ref{thm:structural}}\label{sec:comparison-proof}
The proof averages the identities of Lemma~\ref{lem:two-comparisons} in
time with the weights of Section~\ref{sec:field} and keeps their
remainders. Two features of the schedule make the averaged constant term
computable. The small-block weight $\kappa=bW$ is $b$ times the tail of
the large-block measure $\mu$, so the change of the cross-block
difference $\Delta_{\zero}$ along the small path enters the two blocks with
opposite signs and cancels after integration. The weight
$\psi$ removes the comparator boundary value $E^*$.
\begin{proof}[Proof of Theorem~\ref{thm:structural}]
Fix $o=(o_S,o_L)$, and abbreviate $\Delta_{\zero}(t)=\Delta_{\zero}(a(t))$ and
$\Delta_o(t)=\Delta_o(a(t))$. Since $F(\zero)=0$, the definition of
$\Delta_l$ gives $S_0(t)=\Delta_{\zero}(t)+E(t)$.

We first form a weighted sum. Multiply \eqref{eq:tangent-comparison} by
$\lambda(t)$, \eqref{eq:join-comparison} by $\kappa(t)-\kappa_0(t)$,
\eqref{eq:small-only-comparison} by $\kappa_0(t)$, and
\eqref{eq:direct-comparison} by $\varpi(t)$, integrate over $[0,1]$, and
add $\beta$ times \eqref{eq:rectangle-comparison}. All weights are
nonnegative by \eqref{eq:mass-identities}. Since
$\int(\lambda+\kappa-\kappa_0)=1-\beta$ and
$\int\kappa_0=\int\varpi=\beta$, after cancelling
$\beta F(o_S,\zero_L)+\beta F(\zero_S,o_L)$, which appears on both sides,
the left side is $F(o)+S_{\mathrm{int}}$, where
\[
 S_{\mathrm{int}}=\int_0^1\bigl[\lambda\mathcal S_1+(\kappa-\kappa_0)\mathcal S_2
       +\kappa_0\mathcal S_3+\varpi\mathcal S_4\bigr]dt+\beta\mathcal S_5\ge0.
\]
On the right side, the large-block comparisons carry the total weight
$\lambda+\kappa-\kappa_0+\varpi=w+\varpi$, so their linear terms form the
large part of $\ip{g(f,x)-d}{o-x}$. The small-block gradient terms carry
$\lambda(1-a(t))V(t)+\kappa(1-a(t))\nabla_SF(a(t),\zero_L)$. Centering
them at $u$ gives the small part of the field, except for its term
$-U\nu_t$, together with the constants
\[
 \ip{u}{(1-a(t))V(t)}=c_1E'(t)+\sum_{i=1}^m\cprof(q_i)D_i'(t),\qquad
 \ip{u}{(1-a(t))\nabla_SF(a(t),\zero_L)}=\Delta_{\zero}'(t)+E'(t).
\]
Both follow from $a_j'=u_j(1-a_j)$, which gives
$\ip{u}{(1-a(t))\nabla_SF(a(t),P)}=\frac{d}{dt}F(a(t),P)$ for every large
vector $P$. For the first, $\frac d{dt}Y_i=E'+\sum_{\iota\le i}D_\iota'$ and
$\sum_{i\ge\iota}\chi_i=\cprof(q_\iota)$; for the second, $S_0=\Delta_{\zero}+E$.
Next we rewrite $\mathcal C$. In \eqref{eq:intercept-exact}, substitute
$Y_i=I_i+K_i+E$ and $S_0=\Delta_{\zero}+E$. The coefficient of $E$ becomes
$\Lambda(q_m)-b-\vartheta(q_1)-\sum_{i=1}^{m-1}[\vartheta(q_{i+1})-\vartheta(q_i)]
=\Lambda(q_m)-b-\vartheta(q_m)=1$. Since
$\Lambda(q_{i+1})-\Lambda(q_i)=\vartheta(q_{i+1})-\vartheta(q_i)$ and
$I_0=K_0=0$, summation by parts gives, for $Z=I$ and for $Z=K$,
\[
 \Lambda(q_m)Z_m-\sum_{i=1}^{m-1}[\vartheta(q_{i+1})-\vartheta(q_i)]Z_i
 =\sum_{i=1}^m\Lambda(q_i)(Z_i-Z_{i-1}).
\]
Therefore
\[
 \mathcal C(t)=C_I+E(t)+\sum_{i=1}^m\Lambda(q_i)D_i(t)-b\Delta_{\zero}(t),\qquad
 C_I=\sum_{i=1}^m\Lambda(q_i)(I_i-I_{i-1}).
\]
Since $D_i(0)=\Delta_{\zero}(0)=0$ and $W(t)=\mu((t,1])$, Fubini's
theorem gives
\[
 \int_0^1D_i\,d\mu=\int_0^1WD_i'\,dt,\qquad
 \int_0^1\Delta_{\zero}\,d\mu=\int_0^1W\Delta_{\zero}'\,dt.
\]
Hence the term $-b\int W\Delta_{\zero}'$ from the large-block comparisons and the term
$\int\kappa\Delta_{\zero}'$ from the centered small-block gradient terms cancel, because
$\kappa=bW$, while the small-only comparison adds $\int\kappa_0\Delta_{\zero}$
through $S_0=\Delta_{\zero}+E$. Among the remaining terms, the coefficient of
$E$ collects $w+\varpi$ from $\mathcal C$, $(c_1-1)\lambda$ from
\eqref{eq:tangent-comparison}, $-\varpi$ from \eqref{eq:direct-comparison},
and $\kappa_0$ from $S_0=\Delta_{\zero}+E$; since $w+\kappa_0=\lambda+\kappa$,
\[
 (w+\varpi)+(c_1-1)\lambda-\varpi+\kappa_0=c_1\lambda+\kappa=\rho_1 .
\]
The derivative $E'$ receives $c_1\lambda+\kappa=\rho_1$ from the two
centered constants. The comparator value $E^*$ receives
$\lambda+(\kappa-\kappa_0)+\kappa_0$, whose integral is
$\int(\lambda+\kappa)=1$, and $Q_t(o_S)$ receives
$-c_1\lambda-(\kappa-\kappa_0)-\kappa_0=-\rho_1$. With
\begin{equation}\label{eq:IK}
 I_K=\int_0^1\sum_{i=1}^m
       [\Lambda(q_i)W(t)+\cprof(q_i)\lambda(t)]D_i'(t)\,dt,
\end{equation}
the weighted sum is the identity
\begin{align}
 F(o)+S_{\mathrm{int}}={}&C_I+\ip{g(f,x)-d}{o-x}+I_K
       +\int_0^1\kappa_0\Delta_{\zero}\,dt
       +\int_0^1U(t)\ip{\nu_t}{o_S-u}\,dt\nonumber\\
 &+\int_0^1\rho_1(t)[E'(t)+E(t)]\,dt+E^*
       -\int_0^1\rho_1(t)Q_t(o_S)\,dt.
 \label{eq:weighted-sum}
\end{align}

Next we treat the boundary terms. Put $r_0=E'+sE$, which is nonnegative by
\eqref{eq:boundary-drift}. The greedy minorant in the proof of
\eqref{eq:boundary-greedy} gives
$Q_t(o_S)=E(t)+\ip{\nu_t}{o_S}+\mathcal Y(t)$ with $\mathcal Y(t)\ge0$, and
the same proof gives $E(t)+\ip{\nu_t}u\ge bE(t)$, that is,
$\ip{\nu_t}u+sE(t)\ge0$ because $b=1-s$. By
\eqref{eq:boundary-survival} and $\int_0^1e^{-st}\psi(t)\,dt=1$, we have
$\int_0^1\psi(t)Q_t(o_S)\,dt\ge E^*$. Hence
\[
 B_{\mathrm{rem}}=\int_0^1U(t)\bigl[\ip{\nu_t}u+sE(t)+\mathcal Y(t)\bigr]dt
   +\Bigl[\int_0^1\psi(t)Q_t(o_S)\,dt-E^*\Bigr]\ge0.
\]
We now evaluate the boundary terms of \eqref{eq:weighted-sum}. Since
$Q_t(o_S)-\ip{\nu_t}{o_S-u}=E(t)+\ip{\nu_t}u+\mathcal Y(t)$,
$\rho_1=\psi+U$ and $1-s=b$,
\[
 \int_0^1\rho_1Q_t(o_S)\,dt-\int_0^1U\ip{\nu_t}{o_S-u}\,dt-E^*
 =B_{\mathrm{rem}}+b\int_0^1UE\,dt .
\]
Next, $\theta'=s\theta+b\psi$ gives $(\theta E)'=\theta r_0+b\psi E$.
Integrating over $[0,1]$ with $E(0)=0$ and $\theta(1)=\Gamma$,
\[
 \Gamma E(1)=\int_0^1\theta r_0\,dt+b\int_0^1\psi E\,dt .
\]
Finally, $\rho_1(E'+E)=\rho_1r_0+b\rho_1E$. Adding the three displays,
the terms in $E$ carry the weight $b(\psi+U-\rho_1)=0$, and
$\theta-\rho_1=\sigma$ gives
\begin{equation}\label{eq:boundary-integration}
 \begin{aligned}
 &\Gamma E(1)-\int_0^1\rho_1(E'+E)\,dt-\int_0^1U\ip{\nu_t}{o_S-u}\,dt
 +\int_0^1\rho_1Q_t(o_S)\,dt-E^*\\
 &\qquad=\int_0^1\sigma r_0\,dt+B_{\mathrm{rem}},
 \end{aligned}
\end{equation}
where $\sigma=\theta-\rho_1$ is the weight in \eqref{eq:mixing}.

We can now conclude. Since $F(h(x))=Y_m(1)=I_m+K_m(1)+E(1)$, subtracting
\eqref{eq:weighted-sum} from $\Gamma F(h(x))+\ip{g(f,x)-d}{o-x}$ and
applying \eqref{eq:boundary-integration} give
\begin{align}
 &\Gamma F(h(x))+\ip{g(f,x)-d}{o-x}-F(o)\nonumber\\
 &\quad=(\Gamma I_m-C_I)+(\Gamma K_m(1)-I_K)+\int_0^1\sigma r_0\,dt
       +\Bigl(S_{\mathrm{int}}-\int_0^1\kappa_0\Delta_{\zero}\,dt\Bigr)
       +B_{\mathrm{rem}}.
 \label{eq:total-intercept}
\end{align}
The first, fourth, and fifth terms are nonnegative. For the first, since
$I_0=0$, summation by parts gives
\begin{equation}\label{eq:initial-payment}
 C_I=\Lambda(q_m)I_m
       -\sum_{i<m}[\vartheta(q_{i+1})-\vartheta(q_i)]I_i
 \le\Gamma I_m,
\end{equation}
because $I_i\ge0$, $\vartheta$ decreases on $[1/2,1]$ so that
$\vartheta(q_{i+1})-\vartheta(q_i)\ge0$ along $q_1\ge\cdots\ge q_m\ge1/2$,
and $\Lambda(q_m)-\Gamma=-(q_m-1/2)^2/b\le0$. For the
fourth, write $\beta\mathcal S_5=\int_0^1\kappa_0\mathcal S_5\,dt$ and use
the identity
$\mathcal S_3+\mathcal S_5-\mathcal S_2=\Delta_{\zero}(t)-\Delta_o(t)-\mathcal L_o(a(t))$,
which follows from the definitions of the remainders (the terms
$\mathcal T_{\zero_L}$ and those evaluated at $\widetilde o_S$ and $o_S$
cancel). Since $(\kappa-\kappa_0)\mathcal S_2+\kappa_0(\mathcal S_3+\mathcal S_5)
=\kappa\mathcal S_2+\kappa_0(\mathcal S_3+\mathcal S_5-\mathcal S_2)$,
\[
 S_{\mathrm{int}}=\int_0^1\bigl[\lambda\mathcal S_1+\kappa\mathcal S_2
   +\kappa_0(\Delta_{\zero}-\Delta_o-\mathcal L_o(a(t)))
   +\varpi(\Delta_o+\mathcal L_o(a(t)))\bigr]dt.
\]
Since $\mathcal S_1\ge0$ and $\mathcal S_2\ge\mathcal L_o(a(t))$,
\[
 S_{\mathrm{int}}\ge\int_0^1\bigl[\kappa_0\Delta_{\zero}+(\varpi-\kappa_0)\Delta_o
 +(\kappa-\kappa_0+\varpi)\mathcal L_o(a(t))\bigr]dt,
\]
and the last term is nonnegative because $\kappa\ge\kappa_0$ and
$\mathcal L_o\ge0$. Hence
\begin{equation}\label{eq:retained-source}
 \begin{aligned}
 S_{\mathrm{int}}
 &\ge\int_0^1\kappa_0\Delta_{\zero}\,dt+\int_0^1(\varpi-\kappa_0)\Delta_o\,dt\\
 &=\int_0^1\kappa_0\Delta_{\zero}\,dt
   +\int_0^1\Bigl[\int_t^1(\varpi-\kappa_0)(r)\,dr\Bigr]\Delta_o'(t)\,dt
 \ge\int_0^1\kappa_0\Delta_{\zero}\,dt.
 \end{aligned}
\end{equation}
The equality integrates by parts with $\Delta_o(0)=0$, and the last step
uses $\Delta_o'\ge0$ and the tail inequality in \eqref{eq:mass-identities}.
The fifth term is $B_{\mathrm{rem}}\ge0$. By \eqref{eq:total-intercept},
it remains to prove
\begin{equation}\label{eq:intercept-reduced}
 I_K-\Gamma K_m(1)\le\int_0^1\sigma(t)[E'(t)+sE(t)]\,dt.
\end{equation}
Lemma~\ref{lem:curvature} expresses both sides over the same range of
levels, where the scalar inequality compares their integrands.

For a level $\omega\in[0,H_R]$ in Lemma~\ref{lem:curvature},
write $T_i=T_i(\omega)$, $\tau=T_0$, and $T=T_m$.
For almost every $t$,
\[
 K_{i,\omega}'(t)-K_{i-1,\omega}'(t)
 =\one\{t<T_i\}-\one\{t<T_{i-1}\}
 =\one\{T_{i-1}\le t<T_i\}.
\]
These disjoint intervals partition $[\tau,T)$, apart from endpoints;
an interval is empty when its two endpoints coincide.
Also $K_{m,\omega}(1)=T-\tau$.
The profile $q(t)$ equals $q_i$ on $[T_{i-1},T_i)$ and $q_m$
at $T$.
The index $i$ increases with $t$, so $q(t)$ is nonincreasing.
Thus the contribution of this level to $I_K-\Gamma K_m(1)$ is
\[
 \mathcal I_\omega=\int_\tau^T
 \{\Lambda(q(t))W(t)+\cprof(q(t))\lambda(t)-\Gamma\}dt.
\]
On the same level, the formula $E_\tau'+sE_\tau=r_{0,\tau}$ from the
proof of Lemma~\ref{lem:curvature} gives
\[
 \int_0^1\sigma(t)[E_\tau'(t)+sE_\tau(t)]dt
 =e^{-s\tau}\int_0^\tau\sigma(t)e^{st}dt.
\]
Since $\sigma=\theta-\rho_1$, Lemma~\ref{lem:scalar} gives
\begin{equation}\label{eq:height-comparison}
 \mathcal I_\omega\le\mathcal B(\tau)=e^{-s\tau}\int_0^\tau\sigma(t)e^{st}\,dt
 =\int_0^1\sigma(t)[E_\tau'(t)+sE_\tau(t)]\,dt.
\end{equation}
It remains to integrate \eqref{eq:height-comparison} over
$0\le\omega\le H_R$. Subtracting \eqref{eq:layer-identity} for $i-1$ from
that for $i$ gives, for almost every $t$,
\[
 D_i'(t)=R_i(t)-R_{i-1}(t)
 =\int_0^{H_R}\one\{T_{i-1}(\omega)\le t<T_i(\omega)\}\,d\omega ,
\]
and applying \eqref{eq:volterra-inverse} to \eqref{eq:layer-identity}
with $i=0$ gives $r_0(t)=\int_0^{H_R}r_{0,T_0(\omega)}(t)\,d\omega$. All
integrands are bounded on $[0,1]\times[0,H_R]$, so Fubini's theorem gives
\[
 I_K-\Gamma K_m(1)=\int_0^{H_R}\mathcal I_\omega\,d\omega,\qquad
 \int_0^1\sigma r_0\,dt
 =\int_0^{H_R}\int_0^1\sigma(t)r_{0,T_0(\omega)}(t)\,dt\,d\omega .
\]
Integrating \eqref{eq:height-comparison} over $\omega$ therefore proves
\eqref{eq:intercept-reduced}.
Substituting \eqref{eq:intercept-reduced} in
\eqref{eq:total-intercept}, with \eqref{eq:initial-payment},
\eqref{eq:retained-source}, and $B_{\mathrm{rem}}\ge0$, gives
\begin{equation}\label{eq:normalized-comparison}
 F(o)\le\Gamma F(h(x))+\ip{g(f,x)-d}{o-x}.
\end{equation}
The affine normalization satisfies
\begin{equation}\label{eq:affine-restoration}
 \ell(o)=\ell(x)+\ip d{o-x}
       \le\Gamma\ell(h(x))+\ip d{o-x}
\end{equation}
by Lemma~\ref{lem:normalization}. Adding
\eqref{eq:normalized-comparison} and \eqref{eq:affine-restoration}
proves \eqref{eq:structural} with the field \eqref{eq:field}.

Finally, each chain difference, secant, and gradient is linear in $f$.
Their orders and weights are objective-independent, so the field is
linear in $f$. Deterministic tie-breaking partitions the state cube
into finitely many Borel regions with fixed coordinate orders. The
formulas on each region are Borel functions of $(x,t)$. Their secant
denominators are bounded away from zero on the relevant large-coordinate
branch, and the objective and its derivatives are bounded on the cube.
Integrating over $t\in[0,1]$ therefore preserves Borel measurability.
The same observation applies to $h$.
\end{proof}

\section{Online Achievability}\label{sec:online}
We apply Online Maximization by Quadratization (OMBQ)
\citep[Theorem~1]{pedramfar2024linear} to Theorem~\ref{thm:structural}.
OMBQ turns an online linear-optimization learner into a learner with
bounded $\alpha$-regret whenever a comparison of the form
\eqref{eq:linearizable} holds.
The field estimator below supplies its linear feedback; projected ascent
and online Frank--Wolfe provide the two constraint-oracle implementations.

\subsection{A single noisy-gradient query}\label{subsec:one-gradient}
Each value difference in \eqref{eq:field} is a coordinate integral,
so its contribution can be estimated by sampling one partial derivative.
For a gradient response $\widehat\nabla f(y)$, let $\mathcal H$
contain the current objective, the query point $y$, and all observations
and random choices preceding the response. The oracle model is
\begin{equation}\label{eq:noisy-gradient-model}
 \E[\widehat\nabla f(y)\mid\mathcal H]=\nabla f(y),\qquad
 \E[\norm{\widehat\nabla f(y)}^2\mid\mathcal H]\le B_1^2,
\end{equation}
with exact gradients as a special case.

Throughout the construction, $x$ is a fixed state and $t$ is a path time.
For two points $z^-,z^+\in Q$ differing only in coordinate $i$, the
normalized segment average is
\[
 \mathcal D_i(z^-,z^+)
 =\int_0^1\partial_i f((1-r)z^-+rz^+)\,dr.
\]
This notation also applies when the two points coincide. If
$i=\pi(\iota)$ in the chain $C_0\subset\cdots\subset C_n$ of
\eqref{eq:normalizer}, its chain average is
$\mathcal D_i^N=\mathcal D_i(\one_{C_{\iota-1}},\one_{C_\iota})$.
Thus $\mathcal D_i^N=d_i$ by the fundamental theorem of calculus.

For $j\in S$, the set $S_j^-$ consists of the small coordinates preceding
$j$ in the nonincreasing order of $u$, with the index rule for ties from
Section~\ref{sec:field}. With $S_j=S_j^-\cup\{j\}$, the relevant segment is
\begin{align*}
 R_j^-(t)&=(a(t)+(1-a(t))\one_{S_j^-},\one_L),\\
 R_j^+(t)&=(a(t)+(1-a(t))\one_{S_j},\one_L),\qquad
 \mathcal D_j^S(t)=\mathcal D_j(R_j^-(t),R_j^+(t)).
\end{align*}
For a large coordinate $i$ in the ordering of
\eqref{eq:large-secants}, the two endpoint averages are
\[
 \mathcal D_i^-(t)=\mathcal D_i((a(t),P_{i-1}^-),(a(t),P_i^-)),\qquad
 \mathcal D_i^+(t)=\mathcal D_i((a(t),P_i^+),(a(t),P_{i-1}^+)).
\]
The coefficients $\chi_i$ of \eqref{eq:V} sum to $c_1$. All these points, orders, and coefficients depend
only on $x$ and $t$. Each relabeled large coordinate is identified with
its original coordinate when forming the vector estimator.
The auxiliary coordinate added in
Section~\ref{sec:field} determines these coefficients and is omitted
from every query point. The sampler below chooses only among the $n$
original coordinates. When $S$ is empty, its small-coordinate terms are
absent.

The large block of \eqref{eq:field} is averaged against the probability
measure $\mu(dt)=(w(t)+\varpi(t))\,dt$. For $j\in S$, put
\begin{equation}\label{eq:normalizer-coefficient}
 \psi_j=\int_0^1e^{-tu_j}\psi(t)\,dt
 \ge\int_0^1e^{-st}\psi(t)\,dt=1.
\end{equation}
Integrating the normalizer coefficient before sampling writes every
term of \eqref{eq:field} as a signed average of single partial
derivatives:
\begin{align}
 g_j(f,x)={}&(1-\psi_j)\mathcal D_j^N
 +\int_0^1e^{-tu_j}\Bigl[
 \lambda(t)\sum_{i=1}^m\chi_i\partial_j f(a(t),P_i^+)
 \nonumber\\
 &\hspace{17mm}+\kappa(t)\partial_j f(a(t),\zero_L)
 -U(t)\mathcal D_j^S(t)\Bigr]\,dt,
 &&j\in S,\label{eq:gradient-only-small}\\
 g_i(f,x)={}&\int_0^1\bigl[(1-p_i)\mathcal D_i^-(t)
                   +p_i\mathcal D_i^+(t)\bigr]\,\mu(dt),
 &&i\in L.\label{eq:gradient-only-large}
\end{align}
All terms, coefficients, and segment-sampling laws depend only on the state
and the fixed schedule.

Let $\mathfrak m_i$ denote the total absolute coefficient mass of the displayed
representation for coordinate $i$, counting each segment average as a
probability average of one raw partial derivative. For $j\in S$ write
$\eta_j(t)=e^{-tu_j}$. Since $\psi_j\ge1$ and $\lambda,\kappa,U,\chi_i\ge0$,
and since $\sum_i\chi_i=c_1$, $c_1\lambda+\kappa=\rho_1$ and
$\psi+U=\rho_1$,
\begin{align}
 \mathfrak m_j&=(\psi_j-1)+\int_0^1\eta_j(t)\bigl(c_1\lambda(t)+\kappa(t)+U(t)\bigr)\,dt
 \nonumber\\
 &=\int_0^1\eta_j(\psi+U)\,dt+\int_0^1\eta_j\rho_1\,dt-1
 =2\int_0^1\eta_j(t)\rho_1(t)\,dt-1.
 \label{eq:coordinate-coefficient-mass}
\end{align}
As in Section~\ref{sec:field}, $\int_0^1\rho_1=c_1-\int_0^1W$, and
$\int_0^1W\ge t_0$ because $W=1$ on $[0,t_0]$. With $\eta_j\le1$ and
$b>4/5$,
\[
 \mathfrak m_j\le2c_1-2t_0-1=\frac12+\frac2b<3 .
\]
For a large coordinate, $\mathfrak m_i=1$ because
$\mu$ has mass one. Thus we use the fixed sampling constant
\begin{equation}\label{eq:sampling-mass}
 C_{\mathrm{int}}=3.
\end{equation}
The sampling argument of
\citet[Lemma~5.1]{geometry2026} (draw a coordinate, then a term, then a uniform point on its segment)
applies to these signed coefficients.
Algorithm~\ref{alg:sampler} lists the resulting sampler step by step.
\begin{lemma}[One-query field implementation]\label{lem:one-query}
Fix an objective $f$ and a state $x$, and let $\mathcal G$ contain
$f$, $x$, and the history before the sampler draws its path time,
coordinate, and term. With sampler randomness independent of $\mathcal G$ and conditionally unbiased
responses as in \eqref{eq:noisy-gradient-model}, at most one
noisy-gradient query gives $\widehat g=\widehat g(f,x)$ satisfying
\begin{equation}\label{eq:one-query-moments}
 \E[\widehat g\mid\mathcal G]=g(f,x),\qquad
 \E[\norm{\widehat g}^2\mid\mathcal G]\le C_{\mathrm{int}}^2n^2B_1^2.
\end{equation}
With exact gradients, the conditional second moment is at most
$C_{\mathrm{int}}^2nH(f)^2$.
In particular, if $\norm{\nabla f(y)}\le B$ on $Q$, it is at most
$2C_{\mathrm{int}}^2nB^2$.
If the noisy response satisfies
$\norm{\widehat\nabla f(y)}\le B_1$ almost surely, then
$\norm{\widehat g}\le nC_{\mathrm{int}}B_1$ almost surely.
\end{lemma}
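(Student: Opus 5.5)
The plan is the standard importance-sampling construction of \citet[Lemma~5.1]{geometry2026}, applied to the signed representation \eqref{eq:gradient-only-small}--\eqref{eq:gradient-only-large}. We write $g_i(f,x)=\int c_i(\xi)\,\partial_i f(y_i(\xi))\,\nu_i(d\xi)$, where $\xi$ ranges over the index set formed by path time, term, and segment parameter. Here $\nu_i$ is a probability law and $|c_i|$ has total mass $\mathfrak m_i\le C_{\mathrm{int}}$. This is possible because each segment average $\mathcal D$ is itself the expectation of $\partial_i f$ at a uniform point of the segment. The law $\nu_i$ is taken proportional to the absolute coefficient density, so that $c_i(\xi)=\pm\mathfrak m_i$.

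The sampler draws $i$ uniformly from the $n$ original coordinates, then draws $\xi\sim\nu_i$ independently of $\mathcal G$, queries once at $y_i(\xi)$, and outputs
\[
 \widehat g=n\,\mathrm{sgn}(c_i(\xi))\,\mathfrak m_i\,\bigl(\widehat\nabla f(y_i(\xi))\bigr)_i\,e_i .
\]
Terms whose segment is degenerate still count as a single query. The only coordinate where no query might be needed is a vanishing coefficient, which explains the words ``at most one''.

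For unbiasedness, condition on $(i,\xi)$. These are independent of $\mathcal G$, and the query point is determined by them and by $x$, so \eqref{eq:noisy-gradient-model} replaces the response by $\partial_i f(y_i(\xi))$. Averaging over $\xi$ then returns $g_i(f,x)$, and averaging over $i$ cancels the factor $n$. One point needs care. The $\sigma$-field $\mathcal H$ in the oracle model contains $\mathcal G$ together with the sampler draws, so the tower property applies; measurability of $\xi\mapsto y_i(\xi)$ comes from the Borel statements in Theorem~\ref{thm:structural}. For the dummy coordinate there is nothing to sample, since its field component is zero and it is omitted from every query.

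For the second moment, $\E[\norm{\widehat g}^2\mid\mathcal G]=n\sum_i \mathfrak m_i^2\,\E_{\nu_i}\E[(\widehat\nabla f)_i^2\mid\cdot]$. In the noisy case we bound each coordinate square by the full squared norm, then use $\mathfrak m_i\le3$ from \eqref{eq:sampling-mass} and the moment bound $B_1^2$. Summing over $n$ coordinates gives $C_{\mathrm{int}}^2n^2B_1^2$.

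With exact gradients the estimate is sharper: $|\partial_i f(y)|\le H_i(f)$ by DR-monotonicity of derivatives \citep[Lemma~A.4]{geometry2026}, so the sum becomes $n\,C_{\mathrm{int}}^2\sum_iH_i(f)^2=C_{\mathrm{int}}^2nH(f)^2$. Combining this with $H(f)^2\le2B^2$ gives the stated bound under $\norm{\nabla f}\le B$.

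The almost-sure bound is immediate: $\norm{\widehat g}=n\,\mathfrak m_i|(\widehat\nabla f)_i|\le nC_{\mathrm{int}}B_1$.

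The main obstacle is verifying that \eqref{eq:gradient-only-small} really equals the small block of \eqref{eq:field} plus $d_j$, and so correctly expresses $g_j$ in terms of raw derivatives of $f$ rather than of $F$. Converting $\nabla_S F$ to $\nabla_S f$ subtracts $d_j$ times the total weight $\int e^{-tu_j}(c_1\lambda+\kappa)\,dt$. Here $(1-a_j)=e^{-tu_j}$, and $\sum\chi_i=c_1$. The $\nu_t$ increments also lose $d_j\int e^{-tu_j}U\,dt$, since the $Q_t$ increment of coordinate $j$ along the segment of length $e^{-tu_j}$ equals that segment average times the length.

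The net $d$-coefficient is therefore $1-\int e^{-tu_j}(c_1\lambda+\kappa-U)\,dt$, using $\rho_1=\psi+U$ and $c_1\lambda+\kappa=\rho_1$. But \eqref{eq:gradient-only-small} requires $1-\psi_j$, which needs the net coefficient to be $1-\int e^{-tu_j}(\rho_1-U)\,dt$. These agree only if the sign conventions and the $1-a$ factors of $\nu_t$ are tracked carefully, so I would recheck this bookkeeping before relying on \eqref{eq:coordinate-coefficient-mass}.
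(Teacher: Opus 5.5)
Your proposal follows the paper's argument: the same signed representation, a uniform coordinate draw, importance sampling of a term and of a uniform point on its segment, and one query. Normalizing by each coordinate's own mass $\mathfrak m_i$, rather than by $C_{\mathrm{int}}$ with a null outcome of probability $1-\mathfrak m_I/C_{\mathrm{int}}$, is harmless: $\mathfrak m_i\le C_{\mathrm{int}}$ keeps the multiplier at most $nC_{\mathrm{int}}$ and yields every stated moment and almost-sure bound.

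The bookkeeping you flag as the main obstacle is already finished. By definition $\rho_1=c_1\lambda+\kappa$, and $\psi+U=\rho_1$ pointwise, so your net coefficient $1-\int_0^1\eta_j(c_1\lambda+\kappa-U)\,dt$ is literally $1-\int_0^1\eta_j\psi\,dt=1-\psi_j$. This is exactly how the paper closes the step. For the large block you still need to write out $A_i^L=\mathcal D_i^--d_i$ and $B_i^L=\mathcal D_i^+-d_i$, and then use $\mu([0,1])=1$, which cancels $d_i$.
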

\begin{proof}
For $j\in S$, we have $1-a_j(t)=\eta_j(t)$ and $\nabla F=\nabla f-d$, so
$[(1-a(t))\nabla_SF(z)]_j=\eta_j(t)(\partial_jf(z)-d_j)$ at every point
$z$. The points $R_j^-(t)$ and $R_j^+(t)$ differ only in coordinate $j$,
where they equal $a_j(t)$ and $1$; the segment has length $\eta_j(t)$, and
\begin{equation}\label{eq:boundary-increment-gradient}
 (\nu_t)_j=F(R_j^+(t))-F(R_j^-(t))=\eta_j(t)(\mathcal D_j^S(t)-d_j).
\end{equation}
Substituting these expressions and $\sum_i\chi_i=c_1$ into
\eqref{eq:field} gives
\[
\begin{aligned}
 g_j(f,x)={}&d_j\Bigl[1-\int_0^1\eta_j\,(c_1\lambda+\kappa-U)\,dt\Bigr]\\
 &+\int_0^1\eta_j(t)\Bigl[\lambda(t)\sum_{i=1}^m\chi_i\partial_jf(a(t),P_i^+)
 +\kappa(t)\partial_jf(a(t),\zero_L)-U(t)\mathcal D_j^S(t)\Bigr]dt .
\end{aligned}
\]
Since $c_1\lambda+\kappa-U=\rho_1-U=\psi$, the bracket equals
$1-\psi_j$, and $d_j=\mathcal D_j^N$; this is
\eqref{eq:gradient-only-small}. For $i\in L$, the two points in $A_i^L$
differ only in coordinate $i$, from $0$ to $p_i$, and the two points in
$B_i^L$ from $p_i$ to $1$. Dividing by these segment lengths gives
\[
 A_i^L=\mathcal D_i^--d_i,\qquad B_i^L=\mathcal D_i^+-d_i,\qquad
 g_i^L=(1-p_i)\mathcal D_i^-+p_i\mathcal D_i^+-d_i,
\]
and $\mu([0,1])=1$ cancels $d_i$ in $g_i=d_i+\int_0^1g_i^L\,d\mu$. This
gives \eqref{eq:gradient-only-large}.
Equivalently, each coordinate has a signed measure $\zeta_i$ on $Q$,
obtained by spreading each segment average uniformly over its segment,
with
\begin{equation}\label{eq:field-measure-representation}
 g_i(f,x)=\int_Q\partial_i f(y)\,\zeta_i(dy),\qquad
 \|\zeta_i\|_{\mathrm{TV}}\le \mathfrak m_i\le C_{\mathrm{int}}.
\end{equation}
The sampler first chooses $I$ uniformly from $[n]$. Within the displayed
representation for $g_I$, it selects a term with probability equal to
its absolute coefficient measure divided by $C_{\mathrm{int}}$; with
remaining probability $1-\mathfrak m_I/C_{\mathrm{int}}$ it returns zero and makes
no query. A segment term also draws a uniform point $Y$ on its segment.
For a selected coefficient $\omega\ne0$, one noisy-gradient response
at $Y$ gives
\begin{equation}\label{eq:one-gradient-estimator}
 \widehat g=nC_{\mathrm{int}}\operatorname{sign}(\omega)
                 [\widehat\nabla f(Y)]_I e_I.
\end{equation}
Conditional on $I$, its mean is $ng_I(f,x)e_I$ by
\eqref{eq:field-measure-representation}. Averaging over $I$ proves
conditional unbiasedness.

The multiplier in \eqref{eq:one-gradient-estimator} also controls the
second moment. Since
$[\widehat\nabla f(Y)]_I^2\le\norm{\widehat\nabla f(Y)}^2$, the conditional
second moment is at most $C_{\mathrm{int}}^2n^2B_1^2$.
The same multiplier bound gives
$\norm{\widehat g}\le nC_{\mathrm{int}}B_1$ when every response
has norm at most $B_1$.

For exact gradients, $|\partial_i f(Y)|\le H_i(f)$ at every queried
point. Since $I$ is uniform and a term is selected with probability at
most one,
\[
 \E[\norm{\widehat g}^2\mid\mathcal G]
 =\frac1n\sum_{i=1}^nn^2C_{\mathrm{int}}^2\,
 \E\bigl[\one\{\text{a term is selected}\}\,\partial_if(Y)^2\mid\mathcal G,I=i\bigr]
 \le nC_{\mathrm{int}}^2\sum_{i=1}^nH_i(f)^2 .
\]
Finally,
\[
 H(f)^2\le\norm{\nabla f(\zero)}^2+
                 \norm{\nabla f(\one)}^2\le2B^2,
\]
which gives the bound in terms of $B$. All expectations in the proof
may be conditioned on $\mathcal G$, since the state and objective
are fixed under that conditioning.
\end{proof}

\begin{algorithm}[t]
\caption{One-query field sampler at a state $x$ (Lemma~\ref{lem:one-query})}
\label{alg:sampler}
\begin{algorithmic}[1]
\State Form the coordinate orders, the coefficients, and the segments in
\eqref{eq:gradient-only-small}--\eqref{eq:gradient-only-large}; they
depend only on $x$.
\State Draw an original coordinate $I$ uniformly from $[n]$.
\State With probability $1-\mathfrak m_I/C_{\mathrm{int}}$,
\Return $\widehat g=\zero$ without a query.
\State Select one signed term of the representation of $g_I$, including
its path time $t$, with probability proportional to its absolute
coefficient; let $\omega$ be its coefficient.
\If{the term is a segment average}
\State draw $Y$ uniformly on the segment;
\Else
\State let $Y$ be the term's point.
\EndIf
\State Make one gradient query at $Y$.
\State \Return
$\widehat g=nC_{\mathrm{int}}\operatorname{sign}(\omega)[\widehat\nabla f(Y)]_Ie_I$.
\end{algorithmic}
\end{algorithm}
Algorithm~\ref{alg:sampler} makes at most one gradient query.
Section~\ref{subsec:uniform} replaces the uniform draw in Line~2 by a
nonuniform one and adjusts the factor in the last line accordingly.

For the direct exact-oracle implementation, independently apply the
same signed-term sampler to every original coordinate $i$, with unused
probability assigned to zero, and return
\begin{equation}\label{eq:direct-scaled-field}
 \widetilde g_i=C_{\mathrm{int}}\operatorname{sign}(\omega_i)\partial_i f(Y_i)
\end{equation}
when a term is selected, and $\widetilde g_i=0$ otherwise. Each coordinate is
unbiased for $g_i(f,x)$ and satisfies $|\widetilde g_i|\le C_{\mathrm{int}}H_i(f)$.
Thus $\E[\widetilde g\mid\mathcal G]=g(f,x)$, $\norm{\widetilde g}\le C_{\mathrm{int}}H(f)$,
and the conditional second moment is at most
$C_{\mathrm{int}}^2H(f)^2$. Each sample uses at most $n$ exact
full-gradient queries.

\subsection{OMBQ with standard base learners}\label{subsec:ombq}
Theorem~\ref{thm:structural} supplies the upper-linearizable comparison
required by OMBQ \citep[Theorem~1]{pedramfar2024linear}: multiplying
\eqref{eq:structural} by $\alpha_0$ and restricting to $x,o\in K$ gives
\eqref{eq:linearizable} with $\alpha=\alpha_0$.
Let $\mathcal H_t$ collect $f_t$, $x_t$, and all observations and random
choices before the field sample of round $t$. After the base learner
chooses $x_t\in K$ and the action $p_t=h(x_t)$ is committed, the sample
$\widehat g_t$ of Lemma~\ref{lem:one-query}, with
$f=f_t$, $x=x_t$, and $\mathcal G=\mathcal H_t$, is supplied to that
learner as its linear reward vector.

\begin{proposition}[OMBQ transfer]\label{thm:online-transfer}
Suppose \eqref{eq:linearizable} holds with coefficient $\alpha$, and
$\mathcal H_t$ contains the history and state before the field sample
$\widehat g_t$ is drawn. If
\begin{equation}\label{eq:ombq-unbiased-field}
 \E[\widehat g_t\mid\mathcal H_t]=g(f_t,x_t)
\end{equation}
and the base learner has expected linear regret at most
$\mathcal R_{\mathcal A}(T)$ on these vectors, which may depend on the past, then OMBQ has
\begin{equation}\label{eq:online-transfer}
 \Reg_\alpha(T)\le\alpha\mathcal R_{\mathcal A}(T).
\end{equation}
Integrable samples suffice; a linear-regret bound that holds for every
realization may be
averaged over the field sampling.
\end{proposition}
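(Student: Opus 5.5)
The plan is to sum the comparison inequality \eqref{eq:linearizable} over rounds, using it at the realized states and a fixed comparator, and then swap the true field vectors for their samples inside an expectation. Fix a comparator $o\in K$; it can be any maximizer of $\sum_t f_t(o)$. The adversary is oblivious, so $o$ does not depend on the learner's randomness. Let $x_t\in K$ be the base learner's state at round $t$, and let $p_t=h(x_t)$ be the committed action. Applying \eqref{eq:linearizable} at each round gives, pointwise,
\[
 \alpha f_t(o)-f_t(p_t)\le\alpha\ip{g(f_t,x_t)}{o-x_t}.
\]
Summing over $t$ and taking expectations yields
\[
 \Reg_\alpha(T)\le\alpha\,\E\sum_{t=1}^T\ip{g(f_t,x_t)}{o-x_t}.
\]
This step is legitimate because $\alpha\sum_t f_t(o)$ is deterministic.

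The second step replaces $g(f_t,x_t)$ by $\widehat g_t$. The quantities $x_t$, $o$ and $f_t$ are all $\mathcal H_t$-measurable: $x_t$ is chosen before the field sample is drawn, and $o$ and $f_t$ are fixed. The tower property together with \eqref{eq:ombq-unbiased-field} then gives
\[
 \E\ip{\widehat g_t}{o-x_t}=\E\bigl[\ip{\E[\widehat g_t\mid\mathcal H_t]}{o-x_t}\bigr]=\E\ip{g(f_t,x_t)}{o-x_t}.
\]
Integrability holds because $\widehat g_t$ is integrable and $o-x_t$ is bounded, since $\norm{o-x_t}\le D$. Consequently,
\[
 \Reg_\alpha(T)\le\alpha\,\E\sum_{t=1}^T\ip{\widehat g_t}{o-x_t}.
\]

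The final step invokes the base learner's guarantee. The vectors $\widehat g_t$ are fed to the learner as linear rewards, and they depend on past states. The learner's regret bound is assumed to allow adaptive reward sequences, so $\E\sum_t\ip{\widehat g_t}{o-x_t}\le\mathcal R_{\mathcal A}(T)$ for the fixed $o$. If the bound is stated as holding for every realization of the reward sequence, I will apply it pathwise and then average over the field sampling.

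The main point needing care is the measurability and filtration bookkeeping. Specifically, $x_t$ must be determined before $\widehat g_t$ is drawn, so that conditioning on $\mathcal H_t$ is valid. In addition, the comparator must be deterministic rather than a random hindsight maximizer, since otherwise the regret bound would need to hold against a comparator that depends on the samples. Both conditions are guaranteed here by the oblivious adversary and by the order of events in OMBQ, in which the state is fixed, the action is committed, and only then is the field sampled.
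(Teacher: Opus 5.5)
Your proposal is correct and follows the paper's proof essentially step for step. You fix a deterministic comparator, which obliviousness justifies, and use the tower property with \eqref{eq:ombq-unbiased-field} to replace $g(f_t,x_t)$ by $\widehat g_t$ in expectation; you then sum \eqref{eq:linearizable} and invoke the base learner's bound. The only cosmetic difference is that you fix the maximizer from the outset, whereas the paper proves the bound for every fixed $o\in K$ and maximizes at the end.
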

\begin{proof}

Fix $o\in K$. Since $x_t$ is determined by $\mathcal H_t$, the tower
property and \eqref{eq:ombq-unbiased-field} give
$\E\ip{\widehat g_t}{o-x_t}=\E\ip{\E[\widehat g_t\mid\mathcal H_t]}{o-x_t}$,
that is,
\begin{equation}\label{eq:conditional-field-transfer}
 \E\ip{g(f_t,x_t)}{o-x_t}
       =\E\ip{\widehat g_t}{o-x_t}.
\end{equation}

Summing \eqref{eq:linearizable} and using
\eqref{eq:conditional-field-transfer} gives, for each fixed $o\in K$,
\begin{equation}\label{eq:ombq-summed-comparison}
 \alpha\sum_{t=1}^T f_t(o)-\E\sum_{t=1}^T f_t(h(x_t))
 \le\alpha\E\sum_{t=1}^T\ip{\widehat g_t}{o-x_t}
 \le\alpha\mathcal R_{\mathcal A}(T).
\end{equation}
The maximizing static comparator is fixed because the objective
sequence is oblivious. Maximizing the left side of
\eqref{eq:ombq-summed-comparison} over $o$ proves
\eqref{eq:online-transfer}. For bounded samples, this
is \citet[Theorem~1]{pedramfar2024linear} with its quadratic correction
term set to zero; the same conditional-expectation
calculation establishes the stated integrable-sample formulation.
\end{proof}

Two standard base learners give the constraint-oracle alternatives.
For $D,\mathcal V>0$, projected online gradient ascent
\citep{zinkevich2003} starts at $x_1=\zero$ and uses
\begin{equation}\label{eq:ombq-update}
 x_{t+1}=\Pi_K(x_t+\varkappa \widehat g_t),\qquad
 \varkappa=\frac{D}{\sqrt{T\mathcal V}},
\end{equation}
where $\E[\norm{\widehat g_t}^2\mid\mathcal H_t]\le\mathcal V$.
Its standard bound, valid for every realization, is
\begin{equation}\label{eq:projection-linear-bound}
 \sum_{t=1}^T\ip{\widehat g_t}{o-x_t}
 \le\frac{D^2}{2\varkappa}+\frac\varkappa2\sum_{t=1}^T\norm{\widehat g_t}^2.
\end{equation}
It applies because $\norm{o-x_1}\le D$ for $o\in K$. Taking
expectations and using
$\E\sum_t\norm{\widehat g_t}^2=\sum_t\E\,\E[\norm{\widehat g_t}^2\mid\mathcal H_t]\le T\mathcal V$
gives
$\mathcal R_{\mathcal A}(T)\le\frac{D^2}{2\varkappa}+\frac{\varkappa T\mathcal V}2
=\frac{D\sqrt{T\mathcal V}}2+\frac{D\sqrt{T\mathcal V}}2=D\sqrt{T\mathcal V}$,
so this learner
only needs a second-moment bound.

For LMO access, the base learner is online Frank--Wolfe (OFW)
\citep[Algorithm~1 and Theorem~4.4]{hazan2012projection}, applied to
the linear losses $x\mapsto-\ip{\widehat g_t}{x}$. Let
$\norm{\widehat g_t}\le B_{\mathrm{lin}}$ for every realized linear reward. OFW
starts at $x_1=\zero$ and forms the regularized surrogates and their
running average
\[
 \widehat\ell_t(x)=-\ip{\widehat g_t}{x}+\frac{B_{\mathrm{lin}}}{D}t^{-1/4}\norm{x-x_1}^2,\qquad
 \mathcal F_t=\frac1t\sum_{\tau=1}^t\widehat\ell_\tau.
\]
It then calls the LMO once to obtain
$v_t\in\arg\min_{v\in K}\ip{\nabla\mathcal F_t(x_t)}{v}$ and sets
$x_{t+1}=(1-t^{-1/2})x_t+t^{-1/2}v_t$. Its guarantee, valid for every realization, is
\begin{equation}\label{eq:online-lmo-linear-bound}
 \sum_{t=1}^T\ip{\widehat g_t}{o-x_t}
       \le c_{\mathrm{FW}}DB_{\mathrm{lin}}\,T^{3/4},
       \qquad c_{\mathrm{FW}}=57,
\end{equation}
by \citet[Theorem~4.4]{hazan2012projection}.
Each round uses one ordinary LMO and updates the state by a convex
combination with its output. Starting at $\zero$ therefore retains
an explicit decomposition into earlier oracle outputs and $\zero$,
which will be used for matroid rounding.

Applying Proposition~\ref{thm:online-transfer} with the field estimator
and either of these base learners gives the following bounds.
\begin{corollary}[Online first-order guarantees from OMBQ]\label{thm:online}
Supplying the estimator of Lemma~\ref{lem:one-query} to the preceding
standard base learners gives
\begin{equation}\label{eq:online-regret}
 \Reg_{\alpha_0}(T)\le\alpha_0D
 \begin{cases}
  \sqrt{T\mathcal V},&\text{projected ascent},\\
  c_{\mathrm{FW}}B_{\mathrm{lin}} T^{3/4},&\text{online Frank--Wolfe}.
 \end{cases}
\end{equation}
Each round uses at most one gradient query in $Q$ and one projection or one
ordinary LMO, respectively. For projected ascent, valid moment bounds
are
\begin{equation}\label{eq:online-field-moments}
 \mathcal V=
 \begin{cases}
  C_{\mathrm{int}}^2n\overline H^2,
      &\text{exact gradients with }H(f_t)\le\overline H,\\
  C_{\mathrm{int}}^2n^2B_1^2,
      &\text{noisy gradients satisfying \eqref{eq:noisy-gradient-model}}.
 \end{cases}
\end{equation}
For OFW, valid almost-sure bounds are
\begin{equation}\label{eq:online-field-bounds}
 B_{\mathrm{lin}}=
 \begin{cases}
  C_{\mathrm{int}}n\overline H,
      &\text{exact gradients with }H(f_t)\le\overline H,\\
  C_{\mathrm{int}}nB_1,
      &\text{bounded noisy gradients}.
 \end{cases}
\end{equation}
Here bounded noisy gradients satisfy
$\norm{\widehat\nabla f_t(y)}\le B_1$ almost surely at every query,
in addition to conditional unbiasedness.
When exact gradients satisfy $\norm{\nabla f_t(y)}\le B$ on $Q$,
one may instead use $\mathcal V=2C_{\mathrm{int}}^2nB^2$ and
$B_{\mathrm{lin}}=C_{\mathrm{int}}nB$.
The direct estimator $\widetilde g$ in \eqref{eq:direct-scaled-field} uses at most $n$ exact
full-gradient queries per round and gives
$\mathcal V=C_{\mathrm{int}}^2\overline H^2$ and
$B_{\mathrm{lin}}=C_{\mathrm{int}}\overline H$.
\end{corollary}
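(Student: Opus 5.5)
The plan is to obtain the corollary by combining Proposition~\ref{thm:online-transfer} with the two base-learner bounds \eqref{eq:projection-linear-bound} and \eqref{eq:online-lmo-linear-bound}, supplying the constants from Lemma~\ref{lem:one-query}.

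First, we check the hypotheses of Proposition~\ref{thm:online-transfer}. Multiplying \eqref{eq:structural} by $\alpha_0=\Gamma^{-1}$ and restricting to $x,o\in K$ gives \eqref{eq:linearizable} with $\alpha=\alpha_0$. The action $p_t=h(x_t)$ lies in $K$ because $0\le h(x)\le x$ and $K$ is down-closed. Next, the sampler's randomness in round $t$ is drawn after $\mathcal H_t$, which contains $f_t$ and $x_t$. Lemma~\ref{lem:one-query} with $\mathcal G=\mathcal H_t$ therefore gives \eqref{eq:ombq-unbiased-field}, with at most one gradient query in $Q$ per round. Each base learner makes exactly one projection or one LMO call per round, which gives the stated per-round oracle counts.

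Second, we bound the linear regret $\mathcal R_{\mathcal A}(T)$ of each base learner.

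\emph{Projected ascent.} We take expectations in the realization-wise bound \eqref{eq:projection-linear-bound} and use the tower property over $\mathcal H_t$. With $\E[\norm{\widehat g_t}^2\mid\mathcal H_t]\le\mathcal V$, this gives $\mathcal R_{\mathcal A}(T)\le D\sqrt{T\mathcal V}$, as computed before the corollary. The admissible values of $\mathcal V$ are read off Lemma~\ref{lem:one-query}:
\begin{itemize}
\item $C_{\mathrm{int}}^2n^2B_1^2$ in the noisy model;
\item $C_{\mathrm{int}}^2nH(f_t)^2\le C_{\mathrm{int}}^2n\overline H^2$ for exact gradients;
\item $2C_{\mathrm{int}}^2nB^2$ under a uniform gradient bound.
\end{itemize}

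\emph{Online Frank--Wolfe.} Here the almost-sure bound $B_{\mathrm{lin}}$ is needed, because \eqref{eq:online-lmo-linear-bound} presupposes a known norm bound in the regularizer. The constants come from the last assertion of Lemma~\ref{lem:one-query}:
\begin{itemize}
\item $nC_{\mathrm{int}}B_1$ for bounded noisy responses;
\item $nC_{\mathrm{int}}\overline H$ for exact gradients, since $|\partial_If(Y)|\le H_I(f)\le H(f)$ and the multiplier in \eqref{eq:one-gradient-estimator} is $nC_{\mathrm{int}}$;
\item $C_{\mathrm{int}}nB$ when $|\partial_if|\le\norm{\nabla f}\le B$.
\end{itemize}
Because \eqref{eq:online-lmo-linear-bound} holds for every realization, it holds in expectation. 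Applying \eqref{eq:online-transfer} then yields \eqref{eq:online-regret}.

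Third, for the direct estimator $\widetilde g$ of \eqref{eq:direct-scaled-field}, the text already gives the needed facts: it is conditionally unbiased, $\norm{\widetilde g}\le C_{\mathrm{int}}H(f_t)\le C_{\mathrm{int}}\overline H$, and its second moment is at most $C_{\mathrm{int}}^2\overline H^2$. Substituting these into the same two bounds gives the last sentence of the corollary.

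The only point requiring care is that the reward vectors depend on past states, so the base-learner bounds must be ones that hold for adaptively chosen vectors. Both cited bounds are stated pathwise, so this is the step I would verify explicitly. Otherwise the argument is bookkeeping of constants.
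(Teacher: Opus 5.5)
Your proposal is correct and is essentially the paper's proof: Lemma~\ref{lem:one-query} supplies conditional unbiasedness and the second-moment and almost-sure constants, and multiplying the pathwise bounds \eqref{eq:projection-linear-bound} and \eqref{eq:online-lmo-linear-bound} by $\alpha_0$ through Proposition~\ref{thm:online-transfer} gives \eqref{eq:online-regret}. The paper also treats the degenerate case $D=0$, or $\mathcal V=0$ (resp.\ $B_{\mathrm{lin}}=0$), where the step size $D/\sqrt{T\mathcal V}$ or the OFW regularizer coefficient $B_{\mathrm{lin}}/D$ is undefined; there a fixed state already has nonpositive approximate regret by \eqref{eq:linearizable}, so you should add that line.
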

\begin{proof}

Lemma~\ref{lem:one-query} gives conditional unbiasedness and
\eqref{eq:online-field-moments}. The multiplier in
\eqref{eq:one-gradient-estimator} has absolute value at most
$nC_{\mathrm{int}}$. Thus a bounded oracle response gives the
corresponding bound in \eqref{eq:online-field-bounds}; exact responses
also satisfy $|\partial_I f_t(Y)|\le H_I(f_t)\le\overline H$.
The direct-estimator bounds were established after
\eqref{eq:direct-scaled-field}.

Multiplying the linear-regret bounds
\eqref{eq:projection-linear-bound} and
\eqref{eq:online-lmo-linear-bound} by $\alpha_0$, as in
Proposition~\ref{thm:online-transfer}, gives \eqref{eq:online-regret}.
If $D=0$, or if the applicable bound $\mathcal V$ or $B_{\mathrm{lin}}$ is zero, a
fixed state gives nonpositive approximate regret directly from
\eqref{eq:linearizable} and nonnegativity.
\end{proof}

\subsection{Function-value feedback}\label{subsec:value-feedback}
Following \citet[Lemma~5.3]{geometry2026}, we smooth $f_t$ by
averaging it over a small box around a point moved slightly toward the
center of the cube. The
one gradient coordinate needed by the sampler can then be estimated
from one noisy value or two exact values. Assume
$\norm{\nabla f_t(y)}\le B$ for all $y\in Q$. For $0<r<1/2$, points are rescaled toward the center by the factor
$1-2r$, and $\mathsf U$ has independent uniform coordinates
on $[-1,1]$. The smoothed objective is
\begin{equation}\label{eq:cube-smoothing}
 f_{t,r}(y)=\E_{\mathsf U}
 f_t((1-2r) y+r\one+r\mathsf U),\qquad y\in Q.
\end{equation}
The cited lemma gives nonnegativity, DR-submodularity, and the gradient
bound below; averaging the squared displacement gives the stated
refinement of its uniform value-error bound:
\begin{equation}\label{eq:smoothing-error}
 \norm{\nabla f_{t,r}(y)}\le B,\qquad
 \sup_{y\in Q}|f_{t,r}(y)-f_t(y)|\le2Br\sqrt{n/3}.
\end{equation}
For the displacement $\Delta=(1-2r) y+r\one+r\mathsf U-y$,
$\E\norm\Delta^2=r^2\sum_i[(1-2y_i)^2+1/3]\le4r^2n/3$,
so Jensen's inequality and the $B$-Lipschitz property give the value
bound in \eqref{eq:smoothing-error}.

After $I,\omega,Y$ have been sampled as in Lemma~\ref{lem:one-query},
an independent vector $\mathsf U^{(-I)}$ has coordinate
$\mathsf U_I^{(-I)}=0$ and independent uniform coordinates on $[-1,1]$
at all other indices.
The two evaluation points are
\[
 Z^{\pm}=(1-2r)Y+r\one+r\mathsf U^{(-I)}\pm r e_I.
\]
With exact values, the requested coordinate
$[\widehat\nabla f(Y)]_I$ in
\eqref{eq:one-gradient-estimator} is estimated by
\begin{equation}\label{eq:value-two-point}
 \widehat v_I^{(2)}=
 \frac{1-2r}{2r}[f_t(Z^+)-f_t(Z^-)].
\end{equation}
For noisy values, $\widehat f_t(y)$ denotes a response with conditional
mean $f_t(y)$ and conditional second moment at most $M_0^2$, under the
same conditioning as in \eqref{eq:noisy-gradient-model}.
An independent uniform sign $\varsigma\in\{-1,1\}$ gives the alternative
\begin{equation}\label{eq:value-one-point}
 \widehat v_I^{(1)}=\frac{1-2r}{r}\varsigma\widehat f_t(Z^\varsigma).
\end{equation}
The exact difference quotient has a second moment independent of $r$,
whereas the noisy estimator has moment of order $r^{-2}$. The radius
choices below respectively make the accumulated smoothing error constant
or balance it with the estimation error.
\begin{corollary}[Online value-feedback guarantees]\label{cor:online-values}
Assume $\norm{\nabla f_t(y)}\le B$ for every $t$ and every $y\in Q$.
The projected learner in Corollary~\ref{thm:online} accepts the preceding
value substitutions under their conditional second-moment assumptions.
The standard OFW learner accepts exact values; for noisy values, it
additionally assumes $|\widehat f_t(y)|\le M_0$ almost surely at every
query.

Two exact value queries per round use $r=(4T)^{-1}$, with
$\mathcal V=2C_{\mathrm{int}}^2nB^2$ for projected ascent and
$B_{\mathrm{lin}}=C_{\mathrm{int}}nB$ for OFW. One noisy value query uses
$\mathcal V=C_{\mathrm{int}}^2n^2M_0^2/r^2$ for projected ascent or
$B_{\mathrm{lin}}=C_{\mathrm{int}}nM_0/r$ for OFW, with the respective radii
\begin{equation}\label{eq:value-noisy-radii}
 r=(4T^{1/4})^{-1}\quad\text{for projected ascent},\qquad
 r=(4T^{1/8})^{-1}\quad\text{for OFW}.
\end{equation}
The constants are
\begin{align}
 A_2&=\sqrt2\,C_{\mathrm{int}}\alpha_0DB\sqrt n,&
 A_{2,\mathrm{LMO}}&=c_{\mathrm{FW}}C_{\mathrm{int}}\alpha_0DnB,
       \label{eq:value-two-constants}\\
 C_2&=\frac{1+\alpha_0}{2}B\sqrt{n/3},&
 A_1&=4C_{\mathrm{int}}\alpha_0DnM_0+C_2,
       \label{eq:value-common-constants}\\
 A_{1,\mathrm{LMO}}&=4c_{\mathrm{FW}}C_{\mathrm{int}}\alpha_0DnM_0+C_2.
       \label{eq:value-lmo-constant}
\end{align}
The expected regrets, with superscripts indicating the number of value
queries, satisfy
\begin{align}
 \Reg_{\alpha_0}^{(2)}(T)&\le
 \begin{cases}
  A_2\sqrt T+C_2,&\text{projected ascent},\\
  A_{2,\mathrm{LMO}}T^{3/4}+C_2,&\text{online Frank--Wolfe},
 \end{cases}
 \label{eq:value-exact-rate}\\
 \Reg_{\alpha_0}^{(1)}(T)&\le
 \begin{cases}
  A_1T^{3/4},&\text{projected ascent},\\
  A_{1,\mathrm{LMO}}T^{7/8},&\text{online Frank--Wolfe}.
 \end{cases}
 \label{eq:value-noisy-rate}
\end{align}
Each round uses one projection or one ordinary LMO, respectively.
\end{corollary}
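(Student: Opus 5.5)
The plan is to run Corollary~\ref{thm:online} on the smoothed objectives $f_{t,r}$ and then add back the smoothing error. By the cited lemma, each $f_{t,r}$ in \eqref{eq:cube-smoothing} is nonnegative, $C^1$, and DR-submodular on $Q$, with $\norm{\nabla f_{t,r}}\le B$. Theorem~\ref{thm:structural} and Proposition~\ref{thm:online-transfer} therefore apply to the sequence $(f_{t,r})$. The task splits into three steps.

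First, show that the value estimators are conditionally unbiased for $\partial_I f_{t,r}(Y)$ given $(\mathcal H_t,I,\omega,Y)$. Write $f_{t,r}(y)=\E f_t((1-2r)y+r\one+r\mathsf U)$ and integrate the $I$th coordinate of $\mathsf U$ in closed form. Because $\partial_{y_I}$ carries the chain factor $(1-2r)$, the fundamental theorem of calculus in that coordinate gives
\[
\partial_I f_{t,r}(Y)=\frac{1-2r}{2r}\,\E\bigl[f_t(Z^+)-f_t(Z^-)\bigr],
\]
which is the mean of \eqref{eq:value-two-point}. Averaging over the independent sign $\varsigma$ gives the same mean for \eqref{eq:value-one-point}. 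Substituting either estimate for $[\widehat\nabla f(Y)]_I$ in \eqref{eq:one-gradient-estimator} and repeating the argument of Lemma~\ref{lem:one-query} then yields an unbiased estimate of $g(f_{t,r},x_t)$.

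Second, bound the moments. For two exact values, the mean value theorem along the segment $Z^-Z^+$ of length $2r$ gives $|\widehat v^{(2)}_I|\le(1-2r)|\partial_I f_t(\xi)|\le|\partial_I f_t(\xi)|$. Using $|\partial_I f_t|\le H_I(f_t)$ and $H(f_t)^2\le 2B^2$, the exact-gradient case of Lemma~\ref{lem:one-query} gives $\mathcal V=2C_{\mathrm{int}}^2nB^2$. For the almost-sure bound, $|\partial_I f_t|\le B$ gives $B_{\mathrm{lin}}=C_{\mathrm{int}}nB$. For one noisy value, $\E[(\widehat v^{(1)}_I)^2]\le M_0^2/r^2$; the multiplier $nC_{\mathrm{int}}$ then yields $\mathcal V=C_{\mathrm{int}}^2n^2M_0^2/r^2$, or $B_{\mathrm{lin}}=C_{\mathrm{int}}nM_0/r$ under the almost-sure bound $|\widehat f_t|\le M_0$.

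Third, transfer the regret from $f_{t,r}$ back to $f_t$. Write $\delta=2Br\sqrt{n/3}$ from \eqref{eq:smoothing-error}. Since $|f_{t,r}-f_t|\le\delta$ pointwise, $\Reg_{\alpha_0}$ for $(f_t)$ is at most $\Reg_{\alpha_0}$ for $(f_{t,r})$ plus $(1+\alpha_0)T\delta$. This extra term equals $(1+\alpha_0)T\cdot 2Br\sqrt{n/3}$.
\begin{itemize}
\item Exact values, $r=(4T)^{-1}$: the extra term is $\frac{1+\alpha_0}{2}B\sqrt{n/3}=C_2$. Adding \eqref{eq:online-regret} with the step-two parameters gives \eqref{eq:value-exact-rate}, with $A_2=\alpha_0D\sqrt{2C_{\mathrm{int}}^2nB^2}$.
\item Noisy values, projected ascent, $r=(4T^{1/4})^{-1}$: the learner term is $\alpha_0D\sqrt{T}\,C_{\mathrm{int}}nM_0/r=4C_{\mathrm{int}}\alpha_0DnM_0T^{3/4}$, and the extra term is $C_2T^{3/4}$.
\item Noisy values, OFW, $r=(4T^{1/8})^{-1}$: the learner term is $c_{\mathrm{FW}}\alpha_0DC_{\mathrm{int}}nM_0T^{3/4}/r=4c_{\mathrm{FW}}C_{\mathrm{int}}\alpha_0DnM_0T^{7/8}$, and the extra term is $C_2T^{7/8}$.
\end{itemize}
These match $A_1$ and $A_{1,\mathrm{LMO}}$. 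If $r\ge1/2$ is never an issue, this holds since $T\ge1$ gives $r\le1/4$.

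The main obstacle is the unbiasedness step. The coordinate substitution must be checked carefully: $Y$ ranges over $Q$, and the expanded box $(1-2r)Y+r\one+r[-1,1]^n$ must stay inside $Q$, which holds. The factor $(1-2r)/(2r)$ must match the derivative of the composed map exactly. The conditioning also needs care, so that the base-learner bounds remain valid with rewards that depend on past samples. The remaining steps are bookkeeping.
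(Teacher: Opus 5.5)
Your proposal is correct and follows essentially the same route as the paper. Both arguments rest on three steps. First, conditional unbiasedness of the value substitutions for $\partial_I f_{t,r}(Y)$, via the fundamental theorem of calculus in coordinate $I$. Second, the same second-moment and almost-sure bounds supplied to Corollary~\ref{thm:online}. Third, a smoothing cost of $2(1+\alpha_0)Br\sqrt{n/3}$ per round with the stated radii. The only difference is cosmetic: you add the smoothing error to the total regret, whereas the paper adds it to each per-round comparison \eqref{eq:value-smoothing-transfer}, and the two are equivalent.
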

\begin{proof}
By \citet[Lemma~5.3]{geometry2026}, the identity
\begin{equation}\label{eq:value-coordinate-mean}
 \partial_I f_{t,r}(Y)=
 \frac{1-2r}{2r}\E_{\mathsf U^{(-I)}}[f_t(Z^+)-f_t(Z^-)]
\end{equation}
makes both value substitutions conditionally unbiased, and the
objective-independent sampler of Lemma~\ref{lem:one-query} therefore
estimates $g(f_{t,r},x_t)$.
To see the identity, put $z=(1-2r)Y+r\one+r\mathsf U^{(-I)}$. Since
$\mathsf U_I$ has density $1/2$ on $[-1,1]$ and $\partial_If_{t,r}(Y)=
(1-2r)\E[\partial_If_t((1-2r)Y+r\one+r\mathsf U)]$,
\[
 \partial_If_{t,r}(Y)=(1-2r)\E\Bigl[\frac12\int_{-1}^1\partial_If_t(z+rue_I)\,du\Bigr]
 =\frac{1-2r}{2r}\E\bigl[f_t(Z^+)-f_t(Z^-)\bigr].
\]
For the same reason
$\widehat v_I^{(2)}=\frac{1-2r}2\int_{-1}^1\partial_If_t(z+rue_I)\,du$,
so $|\widehat v_I^{(2)}|\le(1-2r)H_I(f_t)\le H_I(f_t)$, and the
exact-gradient bound of Lemma~\ref{lem:one-query} applies unchanged.
For one value, averaging over the sign gives
$\E_\varsigma\bigl[\frac{1-2r}r\varsigma f_t(Z^\varsigma)\bigr]
=\frac{1-2r}{2r}[f_t(Z^+)-f_t(Z^-)]$, and
$\E[(\widehat v_I^{(1)})^2]\le(1-2r)^2M_0^2/r^2\le M_0^2/r^2$. Combined with the multiplier
$nC_{\mathrm{int}}$ in \eqref{eq:one-gradient-estimator}, these
give the stated $\mathcal V$ bounds,
and the corresponding pointwise estimates give the stated
$B_{\mathrm{lin}}$ bounds under almost-sure bounded feedback.
Finally, \eqref{eq:smoothing-error} gives
\begin{equation}\label{eq:value-smoothing-transfer}
 \alpha_0 f_t(o)-f_t(h(x_t))
 \le \alpha_0 f_{t,r}(o)-f_{t,r}(h(x_t))
       +2(1+\alpha_0)Br\sqrt{n/3}.
\end{equation}
If $\mathcal R_{\mathcal A}(T)$ is the linear regret
of the corresponding base learner evaluated with the smoothed
field estimates, the approximate regret on the original objectives
satisfies
\begin{equation}\label{eq:value-regret-substitution}
 \Reg_{\alpha_0}(T)
 \le\alpha_0\mathcal R_{\mathcal A}(T)+4C_2rT.
\end{equation}
For two exact values, $r=(4T)^{-1}$ makes the last term $C_2$,
and substituting the displayed moment bounds gives the two
coefficients $A_2$ and $A_{2,\mathrm{LMO}}$.
For one noisy value, the projected linear-regret bound gives
$\alpha_0D\sqrt{T\mathcal V}
 =4C_{\mathrm{int}}\alpha_0DnM_0T^{3/4}$
when $r=(4T^{1/4})^{-1}$; the last term in
\eqref{eq:value-regret-substitution} is $C_2T^{3/4}$.
For online Frank--Wolfe, the corresponding bound
$\alpha_0c_{\mathrm{FW}}DB_{\mathrm{lin}}T^{3/4}$
becomes $4c_{\mathrm{FW}}C_{\mathrm{int}}\alpha_0DnM_0T^{7/8}$
when $r=(4T^{1/8})^{-1}$, and the smoothing term is $C_2T^{7/8}$.
Adding the estimation and smoothing terms gives $A_1$ and
$A_{1,\mathrm{LMO}}$
in~\eqref{eq:value-common-constants}--\eqref{eq:value-lmo-constant}
and proves both stated rates.
\end{proof}

\section{Offline and Discrete Consequences}\label{sec:consequences}
For a fixed objective, online-to-batch (OTB) conversion
\citep[Theorem~8]{pedramfar2024linear} turns the accumulated regret
into an additive approximation error. Applied to a multilinear extension,
this guarantee combines with matroid or knapsack rounding.

\subsection{Offline approximation by OTB}\label{subsec:offline}
Let $f:Q\to\R$ be a fixed nonnegative $C^1$ DR-submodular objective
and $K$ satisfy Section~\ref{sec:background}, with
$\operatorname{OPT}=\max_{o\in K}f(o)$. The OTB output is a
uniformly selected action from a run on $f_t=f$.
\begin{corollary}[Offline approximation by OTB]
\label{cor:offline-conversion}
Either online implementation is run on the constant sequence $f_t=f$,
under its stated feedback assumptions. If its expected approximate
regret is at most $\mathcal R(T)$, an independent uniformly selected
played action $Y$ satisfies
\begin{equation}\label{eq:offline-conversion}
 \E f(Y)\ge\alpha_0\operatorname{OPT}-\frac{\mathcal R(T)}T.
\end{equation}
In particular, the regret bounds of Corollaries~\ref{thm:online}
and~\ref{cor:online-values} give the following sufficient objective-query
counts for additive error $\epsilon$, with dimension and response bounds
fixed:
\[
\begin{array}{c|cc}
 \text{feedback}&\text{projection}&\text{LMO}\\
 \hline
 \text{one gradient or two exact values per round}
   &O(\epsilon^{-2})&O(\epsilon^{-4})\\
 \text{one noisy value per round}
   &O(\epsilon^{-4})&O(\epsilon^{-8}).
\end{array}
\]
The projection guarantees use conditional second-moment bounds;
the LMO guarantees use the almost-sure bounds stated in those
corollaries. Each round uses one call to the corresponding constraint
oracle. A supplied lower bound $0<\mathsf v\le\operatorname{OPT}$ converts
additive error $\epsilon=\varepsilon \mathsf v$ into coefficient
$\alpha_0-\varepsilon$.
\end{corollary}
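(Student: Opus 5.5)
The plan is the standard online-to-batch argument applied to a constant objective sequence, followed by the arithmetic that turns the regret rates into query counts.

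\emph{Step 1: the conversion inequality.} With $f_t=f$ for all $t$, the regret definition \eqref{eq:regret-definition} reads
\[
 \Reg_{\alpha_0}(T)=\alpha_0T\,\mathrm{OPT}-\E\sum_{t=1}^Tf(p_t)\le\mathcal R(T).
\]
Here the maximum over $o\in K$ of $\sum_tf(o)$ equals $T\,\mathrm{OPT}$. Let $Y=p_\tau$, where $\tau$ is uniform on $[T]$ and independent of the run. Then $\E f(Y)=\frac1T\E\sum_tf(p_t)$ by the tower property, and dividing by $T$ gives \eqref{eq:offline-conversion}. Every played action lies in $K$, so $Y$ is feasible. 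Measurability of $f(p_t)$ follows from Borel measurability of $h$ (Theorem~\ref{thm:structural}) and of the learners' updates. This is \citet[Theorem~8]{pedramfar2024linear}, and I would only verify that our feedback assumptions are those of the invoked corollaries.

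\emph{Step 2: query counts.} Impose $\mathcal R(T)/T\le\epsilon$, treating $n$, $D$ and the response bounds as constants.
\begin{itemize}
\item Projected ascent with one gradient query or two exact values per round: the bounds are $O(\sqrt T)$ plus the constant $C_2$, so $T=O(\epsilon^{-2})$.
\item OFW with the same feedback: the bound is $O(T^{3/4})$, so $T^{-1/4}\le\epsilon$ gives $O(\epsilon^{-4})$.
\item Projected ascent with one noisy value: the bound is $O(T^{3/4})$, giving $O(\epsilon^{-4})$.
\item OFW with one noisy value: the bound is $O(T^{7/8})$, giving $O(\epsilon^{-8})$.
\end{itemize}
Each round uses a constant number (at most two) of objective queries and one constraint-oracle call, so the round counts are the query counts.

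\emph{Step 3: the hypotheses of each rate.} The projection rates need only the conditional second-moment bounds. The OFW rates need the almost-sure bounds stated in Corollaries~\ref{thm:online} and~\ref{cor:online-values}, as already recorded there.

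\emph{Step 4: the multiplicative form.} Given $0<\mathsf v\le\mathrm{OPT}$, set $\epsilon=\varepsilon\mathsf v$. Then
\[
 \E f(Y)\ge\alpha_0\mathrm{OPT}-\varepsilon\mathsf v\ge(\alpha_0-\varepsilon)\mathrm{OPT}.
\]

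There is no real obstacle. The only care point is the constant additive term $C_2$ in the two-exact-value bound: its contribution $C_2/T$ is $O(1/T)$, which is dominated by $\sqrt T/T$ and therefore does not change the $\epsilon^{-2}$ rate.
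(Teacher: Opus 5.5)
Your proposal is correct and follows the paper's proof. You select a played action uniformly at random, apply the regret definition on the constant sequence, solve $\mathcal R(T)/T\le\epsilon$ for each rate (the paper makes this explicit with the horizon $T\ge\lceil\max\{1,(2\bar a/\epsilon)^{1/(1-\bar\nu)},2\bar c/\epsilon\}\rceil$, taking $\bar c=C_2$ only for two exact values), and finish with $\epsilon=\varepsilon\mathsf v\le\varepsilon\,\mathrm{OPT}$.
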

\begin{proof}
Let $\jmath$ be independent of the online run and uniform on
$\{1,\ldots,T\}$, and return $Y=p_\jmath$, where $p_t=h(x_t)$ is the
played action at round $t$.
Conditional on $p_1,\ldots,p_T$, the expected output value is
$T^{-1}\sum_{t=1}^T f(p_t)$; taking expectation and using the definition
of approximate regret gives
\begin{equation}\label{eq:offline-uniform-value}
 \E f(Y)=\frac1T\sum_{t=1}^T\E f(p_t)
 =\alpha_0\operatorname{OPT}-\frac{\Reg_{\alpha_0}(T)}T
 \ge\alpha_0\operatorname{OPT}-\frac{\mathcal R(T)}T.
\end{equation}
More generally, an online bound $\mathcal R(T)\le \bar aT^{\bar\nu}+\bar c$, with
$\bar a,\bar c\ge0$ and $0<\bar\nu<1$, gives error $\bar aT^{\bar\nu-1}+\bar c/T$. A sufficient
horizon for error at most $\epsilon$ is
\begin{equation}\label{eq:offline-horizon}
 T\ge\left\lceil\max\left\{1,
       \left(\frac{2\bar a}{\epsilon}\right)^{1/(1-\bar\nu)},
       \frac{2\bar c}{\epsilon}\right\}\right\rceil.
\end{equation}
Each term of the error is then at most $\epsilon/2$. Substituting
$\bar\nu=1/2,3/4,$ or $7/8$ from
Corollaries~\ref{thm:online} and~\ref{cor:online-values}, with
$\bar c=C_2$ for two exact values and $\bar c=0$ otherwise, gives the table.
For the feedback implementations in the table, the number of objective
queries is at most $2T$, and $Y\in\{p_1,\ldots,p_T\}\subseteq K$.
Finally, $\epsilon=\varepsilon \mathsf v\le\varepsilon\operatorname{OPT}$ in
\eqref{eq:offline-uniform-value} gives
$\E f(Y)\ge(\alpha_0-\varepsilon)\operatorname{OPT}$.
\end{proof}

\subsection{A uniform algorithm for the full \texorpdfstring{$C^1$}{C1} class}\label{subsec:uniform}
The response bounds in Corollary~\ref{cor:offline-conversion} are fixed
instance parameters. For a fixed objective with exact value and
full-gradient access, we now remove every such parameter in the exact-oracle
model \eqref{eq:uniform-model}. The construction contracts the objective
toward the feasible interior point $\bar z$ defined below;
nonnegativity and down-closedness then bound the gradients of the
contracted objective.

\emph{Width normalization} computes the width
$\mathsf w_i=\max_{x\in K}x_i$ of each coordinate with one LMO call in direction $e_i$,
fixes the coordinates of zero width at zero, relabels the remaining
$n'$ coordinates, and sets
\begin{equation}\label{eq:width-rescaling}
 \mathsf W=\diag(\mathsf w_1,\ldots,\mathsf w_{n'}),\qquad
 K_{\mathsf W}=\{z\in[0,1]^{n'}:\mathsf Wz\in K\},\qquad
 \widetilde f(z)=f(\mathsf Wz),
\end{equation}
where deleted coordinates are reinserted as zeros in $\mathsf Wz$.
If no coordinate remains, then $K=\{\zero\}$ and returning $\zero$ is
optimal. Otherwise the rescaled problem has the same optimum
$\mathrm{OPT}$, $\widetilde f$ is nonnegative, $C^1$, and
DR-submodular on $[0,1]^{n'}$, and $K_{\mathsf W}$ is compact, convex,
and down-closed, with diameter at most $\sqrt{n'}$. Down-closedness gives
$e_i\in K_{\mathsf W}$ for every remaining coordinate $i$. Put
\[
 \varrho=\frac1{n'+1},\qquad \bar z=\varrho\one\in K_{\mathsf W},
 \qquad A_0=\widetilde f(\bar z).
\]
We call $\bar z$ the anchor. It is a convex combination of $\zero$ and the vectors
$e_i$, so it is feasible.

\begin{lemma}[Anchor, contraction, and endpoint gradient bounds]\label{lem:contraction}
For every $o\in[0,1]^{n'}$,
\begin{equation}\label{eq:anchor}
 A_0\ge\varrho(1-\varrho)\widetilde f(o).
\end{equation}
For $0<\delta<1$, define
\[
 T_\delta(z)=(1-\delta)z+\delta\bar z,\qquad
 f_\delta(z)=\widetilde f(T_\delta(z)).
\]
Then $f_\delta$ is nonnegative, $C^1$, and DR-submodular on
$[0,1]^{n'}$, and
\begin{equation}\label{eq:contraction-value}
 f_\delta(o)\ge(1-\delta)\widetilde f(o)\qquad(o\in[0,1]^{n'}).
\end{equation}
Its endpoint gradient bounds
$H_i(f_\delta)=\max\{|\partial_if_\delta(\zero)|,|\partial_if_\delta(\one)|\}$
satisfy $|\partial_if_\delta(z)|\le H_i(f_\delta)$ for every $z$, and
\begin{equation}\label{eq:uniform-gradient}
 H_{\Sigma}(f_\delta):=\sum_{i=1}^{n'}H_i(f_\delta)
 \le\frac{(1-\delta)(2-\delta)}{\delta\varrho(1-\varrho)}A_0.
\end{equation}
For every $x\in K_{\mathsf W}$, the original action
$\mathsf WT_\delta(h(x))$ belongs to $K$.
\end{lemma}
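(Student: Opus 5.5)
The plan is to prove the four claims separately, since they use different tools: the anchor bound, the properties of $f_\delta$ with \eqref{eq:contraction-value}, the endpoint gradient bound \eqref{eq:uniform-gradient}, and the feasibility statement. Throughout I use only nonnegativity, concavity of $\widetilde f$ along coordinatewise nonnegative or nonpositive directions, and down-closedness. One step, \eqref{eq:contraction-value}, is not yet settled and is flagged below.

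\emph{Anchor.} This is a two-step concavity argument. First, concavity along the nonnegative direction $o$ from $\zero$ gives $\widetilde f(\varrho o)\ge\varrho\widetilde f(o)+(1-\varrho)\widetilde f(\zero)\ge\varrho\widetilde f(o)$. Second, I write the anchor as a convex combination along the nonnegative direction $\one-o$:
\[
\bar z=\varrho o+\varrho(\one-o)=(1-\varrho)(\varrho o)+\varrho\bigl(\varrho o+(\one-o)\bigr).
\]
The point $\varrho o+(\one-o)$ has coordinates $1-(1-\varrho)o_i\in[0,1]$, so it lies in the cube. Concavity on this segment and $\widetilde f\ge0$ give $A_0\ge(1-\varrho)\widetilde f(\varrho o)\ge\varrho(1-\varrho)\widetilde f(o)$, which is \eqref{eq:anchor}.

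\emph{Contraction.} The map $T_\delta$ is affine with positive diagonal $(1-\delta)$ and sends the cube into itself. Hence $f_\delta$ is $C^1$ and nonnegative, and $\nabla f_\delta=(1-\delta)\nabla\widetilde f\circ T_\delta$ keeps the DR order. The bound $|\partial_if_\delta(z)|\le H_i(f_\delta)$ follows from DR exactly as in \citet[Lemma~A.4]{geometry2026}.

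The main obstacle is \eqref{eq:contraction-value}. The segment from $o$ to $\bar z$ is not monotone, so concavity does not apply to it directly. The naive route first uses $\widetilde f((1-\delta)o)\ge(1-\delta)\widetilde f(o)$ along $-o$. It then moves up by $\delta\varrho\one$ along the ray toward $(1-\delta)o+\delta\one$, which stays in the cube. That route yields only $(1-\varrho)(1-\delta)\widetilde f(o)$. To remove the factor $1-\varrho$, I would split coordinates into $\{o_i>\varrho\}$ and $\{o_i\le\varrho\}$. I would then write $T_\delta(o)$ as $o$ moved down along $-\delta(o-\varrho\one)_+$ and up along $\delta(\varrho\one-o)_+$, and apply concavity to each monotone leg with endpoints chosen on the cube boundary. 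The aim is to charge the upward leg against the retained $\delta$-fraction rather than against $f(o)$. If this fails, I would try a DR lattice inequality comparing $T_\delta(o)$ with $(1-\delta)o$ and $(1-\delta)o\vee\delta\bar z$.

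\emph{Gradient bound.} Put $y^0=T_\delta(\zero)=\delta\varrho\one$ and $y^1=T_\delta(\one)=\one-\delta(1-\varrho)\one$. DR gives $\partial_i\widetilde f(y^0)\ge\partial_i\widetilde f(y^1)$, so
\[
H_i(f_\delta)\le(1-\delta)\bigl[(\partial_i\widetilde f(y^0))_++(-\partial_i\widetilde f(y^1))_+\bigr].
\]
At $y^0$, I take tangency along the nonpositive direction that zeroes the coordinates with positive partial, as in \eqref{eq:negative-budget}. This gives $\delta\varrho\sum_i(\partial_i\widetilde f(y^0))_+\le\widetilde f(y^0)$. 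At $y^1$, raising the coordinates with negative partial to one gives $\delta(1-\varrho)\sum_i(-\partial_i\widetilde f(y^1))_+\le\widetilde f(y^1)$. Bounding $\widetilde f(y^0)$ and $\widetilde f(y^1)$ through \eqref{eq:anchor} then gives a bound of the required form $O(A_0/(\delta\varrho(1-\varrho)))$. To reach the exact constant $(1-\delta)(2-\delta)$, I would use the anchor argument relative to $y^0,y^1$ more carefully, exploiting that $y^0\le\bar z$ lies on the segment from $\zero$ to $\bar z$.

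\emph{Feasibility.} For $x\in K_{\mathsf W}$ we have $0\le h(x)\le x$, so $h(x)\in K_{\mathsf W}$ by down-closedness. Then $T_\delta(h(x))$ is a convex combination of $h(x)$ and $\bar z\in K_{\mathsf W}$, so it lies in $K_{\mathsf W}$, and by definition $\mathsf WT_\delta(h(x))\in K$.
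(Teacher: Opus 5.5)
Your anchor argument, the regularity and DR claims for $f_\delta$, the reduction of $H_\Sigma(f_\delta)$ to the two tangency budgets at $y^0$ and $y^1$, and the feasibility step all agree with the paper. The first genuine gap is \eqref{eq:contraction-value}, which you leave open. The missing idea is only the order of the two monotone legs: move \emph{up} first, then down. With $s'=1-\delta+\delta\varrho$, the point $y'=((1-\delta)o+\delta\varrho\one)/s'$ lies on the segment from $o$ to $\one$. Concavity along $\one-o$ and $\widetilde f(\one)\ge0$ give $\widetilde f(y')\ge\frac{1-\delta}{s'}\widetilde f(o)$. Then $T_\delta(o)=s'y'$ lies on the segment from $\zero$ to $y'$, so $\widetilde f(T_\delta(o))\ge s'\widetilde f(y')\ge(1-\delta)\widetilde f(o)$; this is the paper's argument. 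Your shrink-then-raise order fails for the following reason. When $o$ has a coordinate equal to one, the step $\delta\varrho\one$ from $(1-\delta)o$ uses the fraction $\varrho$ of the available room $\delta$, so concavity retains only $1-\varrho$ of $\widetilde f((1-\delta)o)$. Your coordinate-split idea does close the gap if you carry it out. Lowering only the coordinates with $o_i>\varrho$ by $\delta(o_i-\varrho)$ stays on the segment from $o$ to the cube point in which those coordinates become $\varrho(1-o_i)/(1-\varrho)$, and retains the factor $1-\delta(1-\varrho)$. Raising next only the coordinates with $o_i<\varrho$ by $\delta(\varrho-o_i)$ stays on the segment toward the cube point in which they become $1-o_i(1-\varrho)/\varrho$, and retains $1-\delta\varrho$. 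The product is $(1-\delta(1-\varrho))(1-\delta\varrho)=1-\delta+\delta^2\varrho(1-\varrho)\ge1-\delta$.

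The second gap is the constant in \eqref{eq:uniform-gradient}. Bounding $\widetilde f(y^0)$ through \eqref{eq:anchor} gives only $\widetilde f(y^0)\le A_0/(\varrho(1-\varrho))$. The first budget then contributes $(1-\delta)A_0/(\delta\varrho^2(1-\varrho))$, which exceeds the order $A_0/(\delta\varrho(1-\varrho))$ by the factor $1/\varrho=n'+1$. Your claim that this route already gives the required order is therefore false, and the weaker bound would also inflate the oracle counts in Theorem~\ref{thm:uniform}. The needed tool is the concave nonnegative function $\varphi(t)=\widetilde f(t\one)$ on $[0,1]$, with $\varphi(\varrho)=A_0$. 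For $t\le\varrho$, the point $\varrho$ lies between $t$ and $1$, so concavity and $\varphi(1)\ge0$ give $\varphi(t)\le(1-t)A_0/(1-\varrho)$. For $t\ge\varrho$, it lies between $0$ and $t$, so $\varphi(t)\le tA_0/\varrho$. The fact you single out, that $y^0$ lies between $\zero$ and $\bar z$, gives only a lower bound on $\widetilde f(y^0)$. Substituting $\widetilde f(y^0)\le(1-\delta\varrho)A_0/(1-\varrho)$ and $\widetilde f(y^1)\le(1-\delta+\delta\varrho)A_0/\varrho$ into your two budgets yields exactly $(1-\delta)(2-\delta)A_0/(\delta\varrho(1-\varrho))$.
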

\begin{proof}
Directional concavity from $\zero$ and nonnegativity give
$\widetilde f(\varrho o)\ge\varrho\widetilde f(o)$. The point
$y=\one-(1-\varrho)o$ lies above $\varrho o$, and
$\bar z=(1-\varrho)\varrho o+\varrho y$. Concavity along the nonnegative
direction $y-\varrho o$ and $\widetilde f(y)\ge0$ give
$A_0\ge(1-\varrho)\widetilde f(\varrho o)$, which proves
\eqref{eq:anchor}.

The map $T_\delta$ is affine with positive diagonal coefficients, so
$f_\delta$ inherits nonnegativity, continuous differentiability, and
DR-submodularity. For $a',r'\ge0$ with $s'=a'+r'\in(0,1]$, put $y'=(a'o+r'\one)/s'$, a
point of the segment from $o$ to $\one$. Concavity along the nonnegative
direction $\one-o$, and then along the segment from $\zero$ to $y'$, give
\[
 \widetilde f(y')\ge\frac{a'}{s'}\widetilde f(o)+\frac{r'}{s'}\widetilde f(\one),\qquad
 \widetilde f(a'o+r'\one)=\widetilde f(s'y')\ge s'\widetilde f(y')+(1-s')\widetilde f(\zero),
\]
and therefore
\[
 \widetilde f(a'o+r'\one)\ge a'\widetilde f(o)+r'\widetilde f(\one)
 +(1-a'-r')\widetilde f(\zero).
\]
Taking $a'=1-\delta$ and $r'=\delta\varrho$, so that
$a'o+r'\one=T_\delta(o)$, and using nonnegativity proves
\eqref{eq:contraction-value}.

DR-submodularity gives
$\partial_if_\delta(\zero)\ge\partial_if_\delta(z)\ge\partial_if_\delta(\one)$,
hence $|\partial_if_\delta(z)|\le H_i(f_\delta)$. The same inequality
gives $(\partial_if_\delta(\one))_+\le(\partial_if_\delta(\zero))_+$ and
$(-\partial_if_\delta(\zero))_+\le(-\partial_if_\delta(\one))_+$, so both
$|\partial_if_\delta(\zero)|$ and $|\partial_if_\delta(\one)|$ are at most
$(\partial_if_\delta(\zero))_++(-\partial_if_\delta(\one))_+$; that is,
$H_i(f_\delta)\le(\partial_if_\delta(\zero))_++(-\partial_if_\delta(\one))_+$.
For a nonnegative DR function $\phi$ and a point $y$, tangency after
lowering to zero every coordinate $i$ with $\partial_i\phi(y)>0$, or after
raising to one every coordinate with $\partial_i\phi(y)<0$, gives
\[
 \sum_iy_i(\partial_i\phi(y))_+\le\phi(y),\qquad
 \sum_i(1-y_i)(-\partial_i\phi(y))_+\le\phi(y).
\]
Indeed, with $P=\{i:\partial_i\phi(y)>0\}$ and
$y'=y-\sum_{i\in P}y_ie_i\le y$, tangency for comparable points and
$\phi(y')\ge0$ give $0\le\phi(y')\le\phi(y)-\sum_{i\in P}y_i\partial_i\phi(y)$.
The second bound follows in the same way from
$y''=y+\sum_{i:\partial_i\phi(y)<0}(1-y_i)e_i\ge y$.
Apply the first bound to $\widetilde f$ at $y^-=T_\delta(\zero)=\delta\varrho\one$
and the second at $y^+=T_\delta(\one)=(1-\delta+\delta\varrho)\one$, whose
distance to $\one$ is $\delta(1-\varrho)$. With the chain-rule factor
$1-\delta$,
\[
 H_{\Sigma}(f_\delta)\le(1-\delta)\left[\frac{\widetilde f(y^-)}{\delta\varrho}
 +\frac{\widetilde f(y^+)}{\delta(1-\varrho)}\right].
\]
The function $\varphi(t)=\widetilde f(t\one)$ is concave and nonnegative on
$[0,1]$ with $\varphi(\varrho)=A_0$. For $t\le\varrho$, write
$\varrho=\frac{1-\varrho}{1-t}\,t+\frac{\varrho-t}{1-t}\cdot1$; concavity and
$\varphi(1)\ge0$ give $A_0\ge\frac{1-\varrho}{1-t}\varphi(t)$, that is,
$\varphi(t)\le(1-t)A_0/(1-\varrho)$. For $t\ge\varrho$, write
$\varrho=\frac\varrho t\,t+\bigl(1-\frac\varrho t\bigr)\cdot0$; concavity and
$\varphi(0)\ge0$ give $\varphi(t)\le tA_0/\varrho$. Hence
$\widetilde f(y^-)\le(1-\delta\varrho)A_0/(1-\varrho)$ and
$\widetilde f(y^+)\le(1-\delta+\delta\varrho)A_0/\varrho$. Substituting,
\[
 H_{\Sigma}(f_\delta)\le(1-\delta)\Bigl[\frac{(1-\delta\varrho)A_0}{\delta\varrho(1-\varrho)}
 +\frac{(1-\delta+\delta\varrho)A_0}{\delta\varrho(1-\varrho)}\Bigr]
 =\frac{(1-\delta)(2-\delta)}{\delta\varrho(1-\varrho)}A_0,
\]
which is \eqref{eq:uniform-gradient}. Finally,
$h(x)\in K_{\mathsf W}$ by down-closedness, $\bar z\in K_{\mathsf W}$, and
convexity give $T_\delta(h(x))\in K_{\mathsf W}$.
\end{proof}

\begin{theorem}[Uniform $C^1$ offline approximation]\label{thm:uniform}
In the exact-oracle model \eqref{eq:uniform-model}, let $f$ be
nonnegative, $C^1$, and DR-submodular on the cube, and let $K$ be
nonempty, compact, convex, and down-closed. For every
$0<\varepsilon<1$ there is a randomized algorithm returning $y\in K$
with
\begin{equation}\label{eq:uniform-guarantee}
 \E f(y)\ge(\alpha_0-\varepsilon)\mathrm{OPT}.
\end{equation}
Using ordinary LMOs only, it makes $O(n+n^6\varepsilon^{-8})$ LMO calls
and $O(1+n^6\varepsilon^{-8})$ exact cube value and gradient queries.
If, in addition to the $n$ initial LMO calls, a Euclidean projection
oracle over the width-normalized set $K_{\mathsf W}$ is available, it
instead uses $O(n^3\varepsilon^{-4})$ projections and
$O(1+n^3\varepsilon^{-4})$ objective queries. The bounds do not depend on
coordinate widths, gradient magnitudes, or a smoothness modulus.
\end{theorem}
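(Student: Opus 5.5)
Width normalization comes first. I use $n$ LMO calls to compute the widths $\mathsf w_i$. If every width is zero I return $\zero$. Otherwise I work on $K_{\mathsf W}\subseteq[0,1]^{n'}$ with $\widetilde f$, which has the same optimum $\mathrm{OPT}$. I fix $\delta=\varepsilon/4$ and run the online reduction on the constant sequence $f_t=f_\delta$ of Lemma~\ref{lem:contraction} over $K_{\mathsf W}$. The learner keeps states $x_t\in K_{\mathsf W}$; the field sample is taken for $f_\delta$ at $x_t$. The output is $y=\mathsf W T_\delta(h(x_\jmath))$ for a uniform independent $\jmath$, which lies in $K$ by the last claim of Lemma~\ref{lem:contraction}. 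Its value is $f(y)=f_\delta(h(x_\jmath))$.

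The value guarantee comes from OTB, as in Corollary~\ref{cor:offline-conversion}. Let $\mathrm{OPT}_\delta=\max_{K_{\mathsf W}}f_\delta$. Then \eqref{eq:contraction-value} gives $\mathrm{OPT}_\delta\ge(1-\delta)\mathrm{OPT}$, so
\[
\E f(y)\ge\alpha_0(1-\delta)\mathrm{OPT}-\mathcal R(T)/T .
\]
The regret bound must scale with $\mathrm{OPT}$ and not with gradient magnitudes. I use the direct estimator \eqref{eq:direct-scaled-field}, with $n'$ exact gradient queries, or equivalently one full gradient per sampled point. It gives the almost-sure bound $\norm{\widetilde g}\le C_{\mathrm{int}}H(f_\delta)\le C_{\mathrm{int}}H_\Sigma(f_\delta)$. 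By \eqref{eq:uniform-gradient} and \eqref{eq:anchor} applied at the maximizer,
\[
H_\Sigma(f_\delta)\le\frac{2A_0}{\delta\varrho(1-\varrho)},\qquad A_0\le\widetilde f\text{-values}\le\mathrm{OPT},
\]
since $\bar z\in K_{\mathsf W}$. With $\varrho=1/(n'+1)$, the gradient scale is therefore $O(n^2\mathrm{OPT}/\varepsilon)$ and is not tied to any instance parameter. No lower bound $\mathsf v$ is needed. The additive error is a multiple of $\mathrm{OPT}$ itself, because the regret bounds of Corollary~\ref{thm:online} are linear in the response bound.

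For the projection version, the diameter is $D\le\sqrt{n'}$ and $\mathcal V=C_{\mathrm{int}}^2H^2$. The error is $\alpha_0 D C_{\mathrm{int}}H/\sqrt T=O(\sqrt n\,n^2\mathrm{OPT}/(\varepsilon\sqrt T))$. Requiring this to be at most $\varepsilon\mathrm{OPT}/2$ gives $T=O(n^5\varepsilon^{-4})$. This is worse than the stated $n^3$, so I must sharpen the gradient bound. Two refinements help. First, use $H(f_\delta)$ (the $\ell_2$ norm) rather than the $\ell_1$ sum. Second, and more importantly, the contraction estimate is really governed by $\widetilde f(y^\pm)/(\delta\varrho)$ with $\widetilde f(y^-)\le A_0/(1-\varrho)$. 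I would choose the anchor scale and $\delta$ jointly, or use the one-coordinate sampler with nonuniform coordinate draws proportional to $H_i$ as hinted after Algorithm~\ref{alg:sampler}. The target is a second moment $\E\norm{\widehat g}^2\lesssim C_{\mathrm{int}}^2H_\Sigma^2$ with a single gradient query. In that case the direct estimator is replaced by sampling $I$ with probability $H_i/H_\Sigma$ and scaling by $H_\Sigma/H_i$; the $H_i$ are computable from two full gradients at $\zero,\one$. The resulting per-round count is $O(1)$ queries.

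For the LMO version, the OFW bound \eqref{eq:online-lmo-linear-bound} gives an error of $c_{\mathrm{FW}}DB_{\mathrm{lin}}T^{-1/4}$. Requiring this to be at most $\varepsilon\mathrm{OPT}/2$ yields $T$ equal to the fourth power of the corresponding projection quantity, that is, $T=O(n^6\varepsilon^{-8})$ once the projection case gives $n^3\varepsilon^{-4}$. Each round uses one LMO over $K_{\mathsf W}$, realized by one LMO over $K$ in direction $\mathsf W c$. The final bound is $(1-\delta)\alpha_0-\varepsilon/2\ge\alpha_0-\varepsilon$. The main obstacle is the bookkeeping of the $n$-exponents: obtaining $n^3$ rather than $n^5$ requires getting the estimator's almost-sure norm down to about $\sqrt n\,\mathrm{OPT}/\varepsilon$ times constants. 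I would first try importance-sampled coordinates against $H_i$ combined with the bound $\sum_i H_i=O(n\,\mathrm{OPT}/\delta)$, and check whether the resulting exponent matches the statement.
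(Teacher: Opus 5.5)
Your end-state plan is the paper's proof. You contract toward $\bar z$, bound $H_\Sigma(f_\delta)$ by \eqref{eq:uniform-gradient} and $A_0\le\mathrm{OPT}$, draw the coordinate with probability $\pi_I=H_I(f_\delta)/H_\Sigma(f_\delta)$, and multiply by $C_{\mathrm{int}}/\pi_I$, so that $\norm{\widehat g}\le C_{\mathrm{int}}H_\Sigma(f_\delta)$ almost surely; then you finish with OTB and $D=\sqrt{n'}$.

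As written, however, the argument does not close, and the obstacle you spend the second half on comes from an arithmetic slip. With $\varrho=1/(n'+1)$ one has $\varrho(1-\varrho)=n'/(n'+1)^2$, so $1/(\varrho(1-\varrho))=(n'+1)^2/n'=O(n)$, not $O(n^2)$. Hence $H_\Sigma(f_\delta)\le 2(n'+1)^2A_0/(\delta n')=O(nA_0/\varepsilon)$. This is the bound $\sum_iH_i=O(n\,\mathrm{OPT}/\delta)$ you quote in your last sentence, and it contradicts your earlier $O(n^2\mathrm{OPT}/\varepsilon)$. It follows that $\sqrt{n'}\,H_\Sigma(f_\delta)/(\varepsilon A_0)=O(n^{3/2}\varepsilon^{-2})$. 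Forcing the average regret below $\varepsilon A_0/2\le\varepsilon\mathrm{OPT}/2$ then gives horizons equal to the square and the fourth power of this quantity: $O(n^3\varepsilon^{-4})$ for projected ascent and $O(n^6\varepsilon^{-8})$ for OFW, exactly the stated counts. No sharpening is needed. Switching to the $\ell_2$ norm or tuning the anchor and $\delta$ jointly plays no role.

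You also misidentify what importance sampling is for. It cannot shorten the horizon. The direct estimator \eqref{eq:direct-scaled-field} already has the smaller almost-sure bound $C_{\mathrm{int}}H(f_\delta)\le C_{\mathrm{int}}H_\Sigma(f_\delta)$, so if your $n^5$ were real, the importance-sampled sampler would not remove it. What it buys is the objective-query count:
\begin{itemize}
\item The direct estimator spends up to $n'$ gradient queries per round, giving $O(n^4\varepsilon^{-4})$ and $O(n^7\varepsilon^{-8})$ queries.
\item The uniform one-query sampler gives only $\mathcal V=C_{\mathrm{int}}^2n'H(f_\delta)^2$ and $B_{\mathrm{lin}}=n'C_{\mathrm{int}}\max_iH_i(f_\delta)$, which lose powers of $n$ in the worst case.
\item Drawing $I$ proportionally to $H_I$ keeps the bound $C_{\mathrm{int}}H_\Sigma(f_\delta)$ with one query per round. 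This is what produces $O(1+n^3\varepsilon^{-4})$ and $O(1+n^6\varepsilon^{-8})$.
\end{itemize}

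Two smaller points. First, the probabilities $\pi_I$ need $H_\Sigma(f_\delta)>0$, so treat two cases separately: $A_0=0$ (then $\widetilde f\equiv0$ by \eqref{eq:anchor}) and $H_\Sigma(f_\delta)=0$ (then $f_\delta$ is constant, and you return $\mathsf WT_\delta(\zero)$). Second, an LMO over $K_{\mathsf W}$ in direction $c'$ is an LMO over $K$ in direction $\mathsf W^{-1}c'$, followed by coordinatewise division by the widths, not an LMO in direction $\mathsf Wc$.
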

\begin{proof}
Perform the width normalization. If $A_0=0$, \eqref{eq:anchor} shows
that $\widetilde f$ vanishes on $[0,1]^{n'}$, and any feasible output is
optimal. Otherwise take $\delta=\varepsilon/2$. Two exact full-gradient
queries, at $\mathsf WT_\delta(\zero)$ and $\mathsf WT_\delta(\one)$,
give every $H_i(f_\delta)$ and $H_{\Sigma}(f_\delta)$. If $H_{\Sigma}(f_\delta)=0$, return $y=\mathsf WT_\delta(\zero)$.
Since $f_\delta$ is constant, \eqref{eq:contraction-value}, applied at an
optimizer $o^\star$ of $\widetilde f$ over $K_{\mathsf W}$, gives
\[
 f(y)=f_\delta(\zero)=f_\delta(o^\star)
 \ge(1-\delta)\mathrm{OPT},
\]
which implies \eqref{eq:uniform-guarantee}.
Otherwise run the online algorithm of Corollary~\ref{thm:online} on the
constant sequence $f_t=f_\delta$ over $K_{\mathsf W}$. An LMO over
$K_{\mathsf W}$ in direction $c'$ is implemented by an LMO over $K$ in
direction $\mathsf W^{-1}c'$, followed by coordinatewise division by the
widths. A query of $\partial_if_\delta$ is a scaled original gradient
query at $\mathsf WT_\delta(z)$, which lies in the original cube.

The field estimate uses the sampler of Lemma~\ref{lem:one-query}
(Algorithm~\ref{alg:sampler}) with one change: the coordinate $I$ is drawn with probability
$\pi_I=H_I(f_\delta)/H_{\Sigma}(f_\delta)$ instead of $1/n'$, and the
multiplier $n'C_{\mathrm{int}}$ in \eqref{eq:one-gradient-estimator} is
replaced by $C_{\mathrm{int}}/\pi_I$. Conditional on $I$, signed-term
sampling gives mean $g_I(f_\delta,x)e_I/\pi_I$, so averaging over $I$
preserves unbiasedness. A coordinate with
$H_I(f_\delta)=0$ has $\partial_If_\delta\equiv0$, so it is never drawn and
its field component vanishes. Since
$|\partial_If_\delta(Y)|\le H_I(f_\delta)$ at every query point,
every estimate satisfies
$\norm{\widehat g}\le C_{\mathrm{int}}H_I(f_\delta)/\pi_I
=C_{\mathrm{int}}H_{\Sigma}(f_\delta)$ almost surely. Thus
$B_{\mathrm{lin}}=C_{\mathrm{int}}H_{\Sigma}(f_\delta)$ and
$\mathcal V=B_{\mathrm{lin}}^2$ are valid in Corollary~\ref{thm:online},
and $D=\sqrt{n'}$.

By \eqref{eq:uniform-gradient},
$H_{\Sigma}(f_\delta)/A_0\le2(n'+1)^2/(\delta n')$, and both $A_0$ and the
$H_i(f_\delta)$ are known after the initial queries. For online
Frank--Wolfe choose
\[
 T\ge\left(\frac{2\alpha_0c_{\mathrm{FW}}C_{\mathrm{int}}
     \sqrt{n'}H_{\Sigma}(f_\delta)}{\varepsilon A_0}\right)^4,
\]
and for projected ascent choose
\[
 T\ge\left(\frac{2\alpha_0C_{\mathrm{int}}\sqrt{n'}H_{\Sigma}(f_\delta)}
     {\varepsilon A_0}\right)^2.
\]
In both cases the average regret is at most $\varepsilon A_0/2$: by
Corollary~\ref{thm:online} with $D=\sqrt{n'}$ and
$B_{\mathrm{lin}}=C_{\mathrm{int}}H_\Sigma(f_\delta)$, it is at most
\[
 \alpha_0c_{\mathrm{FW}}\sqrt{n'}\,C_{\mathrm{int}}H_\Sigma(f_\delta)\,T^{-1/4}
 \quad\text{(OFW)},\qquad
 \alpha_0\sqrt{n'}\,C_{\mathrm{int}}H_\Sigma(f_\delta)\,T^{-1/2}
 \quad\text{(projected ascent)}.
\]
Select one played state $x_\jmath$ uniformly, as in
Corollary~\ref{cor:offline-conversion}, and return
$y=\mathsf WT_\delta(h(x_\jmath))\in K$. Then $f(y)=f_\delta(h(x_\jmath))$. Let $o^\star$ maximize $\widetilde f$
over $K_{\mathsf W}$, so that $\widetilde f(o^\star)=\mathrm{OPT}$.
Corollary~\ref{cor:offline-conversion} gives
$\E f(y)\ge\alpha_0f_\delta(o^\star)-\frac\varepsilon2A_0$, and
\eqref{eq:contraction-value} gives $f_\delta(o^\star)\ge(1-\delta)\mathrm{OPT}$.
Hence
\[
 \E f(y)\ge\alpha_0(1-\delta)\mathrm{OPT}-\frac\varepsilon2A_0
 \ge\left(\alpha_0-\frac{\alpha_0\varepsilon}2-\frac\varepsilon2\right)
 \mathrm{OPT}\ge(\alpha_0-\varepsilon)\mathrm{OPT},
\]
because $A_0=\widetilde f(\bar z)\le\mathrm{OPT}$ (as $\bar z\in K_{\mathsf W}$)
and $\alpha_0<1$.
Since $\varrho(1-\varrho)=n'/(n'+1)^2$ and $\delta=\varepsilon/2$,
\eqref{eq:uniform-gradient} gives
$\sqrt{n'}H_\Sigma(f_\delta)/(\varepsilon A_0)\le4(n'+1)^2/(\sqrt{n'}\varepsilon^2)
=O(n'^{3/2}\varepsilon^{-2})$, so the horizons are
$O(n'^6\varepsilon^{-8})$ and $O(n'^3\varepsilon^{-4})$, respectively.
Each round uses at most one gradient query and one constraint-oracle
call; adding the $n$ width computations, the anchor value, and the two
endpoint gradients gives the stated counts. The guarantee is in
expectation over the field sampler and the selected iterate.
\end{proof}

Rescaling preserves ordinary LMOs but not Euclidean projection: a
projection oracle over $K_{\mathsf W}$ is a weighted projection oracle
over $K$, and ordinary Euclidean projection onto $K$ alone does not
supply it. The LMO route of Theorem~\ref{thm:uniform} needs only the
oracles of Section~\ref{sec:online-model}.

The exact-oracle upper bound applies to the same exact-oracle model.
\begin{remark}[Upper bound in the uniform offline model]
\label{rem:smooth-hardness}
No algorithm in the model \eqref{eq:uniform-model} achieves a
coefficient above
\begin{equation}\label{eq:global-beta}
 \beta_* = \frac{4\tau_*}{1+6\tau_*+\tau_*^2}
           \approx0.47043868,
\end{equation}
where $\chi_*=1-e^{-1/2}$,
$q_*=\sqrt{(3\chi_*-1)/(2\chi_*)}$, and $\tau_*=2q_*(1-q_*)$.
This is the symmetry-gap bound of
\citet[Lemma~6.2 and Corollary~6.4]{geometry2026}, whose hard instances
are nonnegative $C^1$ DR-submodular objectives with exact values and full
gradients. Ordinary LMO replies on the explicit constraint set reveal
nothing about the random relabeling of coordinates that these
instances hide, and exact replies are a special
case of the noisy oracles, so the bound covers those models as well.
\end{remark}

\subsection{Standard discrete reductions}\label{subsec:discrete}
For a nonnegative set function $f:2^N\to\R_{\ge0}$, an exact
set-value oracle returns $f(S)$ for each queried subset $S$.
Submodularity means
$f(A)+f(B)\ge f(A\cup B)+f(A\cap B)$ for $A,B\subseteq N$.
Its multilinear extension is
\begin{equation}\label{eq:discrete-multilinear}
 \bar f(y)=\E f(R_y),\qquad y\in[0,1]^N,
\end{equation}
where $R_y$ includes each element independently with probability $y_i$.
This polynomial is nonnegative and DR-submodular: its mixed second
partials are expectations of submodular differences, and its pure
second partials vanish. The matroid and knapsack relaxations below
contain every feasible indicator, so their fractional optimum
$O_{\rm frac}=\max_{y\in K}\bar f(y)$ is at least the discrete optimum $O$.

Elements with infeasible singletons are removed. On the remaining
ground set of size $n\ge1$, the scale
\begin{equation}\label{eq:discrete-value-scale}
 \mathsf v=\max\{f(\varnothing),\max_{i\in N}f(\{i\})\}\le O
\end{equation}
satisfies $0\le f(S)\le n\mathsf v$ for every $S$. Indeed, for
$S\ne\varnothing$, submodularity gives
$f(S)\le f(\varnothing)+\sum_{i\in S}[f(\{i\})-f(\varnothing)]
\le|S|\mathsf v-(|S|-1)f(\varnothing)\le|S|\mathsf v$. If no elements remain or $\mathsf v=0$, the empty set is optimal.
For each requested partial derivative, the sample
\[
 \Delta_i=f(R\cup\{i\})-f(R),\qquad
 R\subseteq N\setminus\{i\},
\]
with independent inclusion probabilities from the query point, is
unbiased and satisfies $|\Delta_i|\le n\mathsf v$.
At round $t$, apply the direct estimator \eqref{eq:direct-scaled-field}:
for each coordinate $i$, independently, draw one term of the
representation \eqref{eq:gradient-only-small}--\eqref{eq:gradient-only-large}
with probability equal to its absolute coefficient divided by
$C_{\mathrm{int}}$, or a zero outcome with the remaining probability,
and multiply the sampled derivative by $C_{\mathrm{int}}\operatorname{sign}(\omega)$.
Each sampled derivative is replaced by the sample $\Delta_i$ above, so a
full field estimate uses at most $2n$ set-value queries. Conditioned on
the history $\mathcal H_t$ before the sampling, this representation and
$|\Delta_i|\le n\mathsf v$ give
\begin{equation}\label{eq:discrete-field-moment}
 \E[\widehat g\mid\mathcal H_t]=g(\bar f,x_t),\qquad
 |\widehat g_i|\le C_{\mathrm{int}}n\mathsf v,\qquad
 \norm{\widehat g}\le C_{\mathrm{int}}n^{3/2}\mathsf v.
\end{equation}
Fix $0<\zeta<\alpha_0$. Run the online algorithm of
Section~\ref{subsec:ombq} on the constant sequence $\bar f$ with these field
estimates and $D=\sqrt n$, and select a played action uniformly at
random as in the proof of Corollary~\ref{cor:offline-conversion}. This
gives a feasible fractional output $X$ with
$\E \bar f(X)\ge\alpha_0O_{\rm frac}-\Reg_{\alpha_0}(T)/T$. With the horizon
chosen below, the average regret is at most $\zeta \mathsf v$, so
\begin{equation}\label{eq:discrete-fractional-direct}
 \E \bar f(X)\ge\alpha_0O_{\rm frac}-\zeta \mathsf v
             \ge(\alpha_0-\zeta)O_{\rm frac},
\end{equation}
where the second inequality uses $\mathsf v\le O\le O_{\rm frac}$. By
\eqref{eq:discrete-field-moment}, Corollary~\ref{thm:online} applies with
$\mathcal V=B_{\mathrm{lin}}^2=C_{\mathrm{int}}^2n^3\mathsf v^2$ and $D=\sqrt n$,
and gives the average-regret bounds
\begin{equation}\label{eq:discrete-average-error}
 \frac{\Reg_{\alpha_0}(T)}T\le
 \alpha_0C_{\mathrm{int}}n^2\mathsf v
 \begin{cases}
 T^{-1/2},&\text{projected ascent},\\
 c_{\mathrm{FW}}T^{-1/4},&\text{online Frank--Wolfe}.
 \end{cases}
\end{equation}
Requiring the right side to be at most
$\zeta \mathsf v$ gives $T=O(n^4\zeta^{-2})$ in the projected route and
$T=O(n^8\zeta^{-4})$ in the LMO route.
Each round uses $2n$ set-value queries and one constraint-oracle call,
so the total set-value counts are $O(n^5\zeta^{-2})$ and
$O(n^9\zeta^{-4})$, respectively.

For a matroid constraint, swap rounding preserves the fractional
guarantee in expectation. A matroid $\mathcal M=(N,\mathcal I)$ consists
of a ground set $N$ and a family $\mathcal I$ of independent sets
containing $\varnothing$, closed under subsets, and satisfying exchange:
if $I,J\in\mathcal I$ and $|I|<|J|$, some $j\in J\setminus I$ has
$I\cup\{j\}\in\mathcal I$. An independence oracle tests membership
in $\mathcal I$. Its relaxation is
$P(\mathcal M)=\operatorname{conv}\{\one_I:I\in\mathcal I\}$.
\begin{corollary}[Matroid consequence of swap rounding]
\label{cor:discrete-matroid}
For every $0<\varepsilon<\alpha_0$, randomized swap rounding
\citep[Theorem~II.1]{chekuri2010swap} applied to the fractional output
\eqref{eq:discrete-fractional-direct} gives $I\in\mathcal I$ with
\[
 \E f(I)\ge(\alpha_0-\varepsilon)\max_{J\in\mathcal I}f(J).
\]
The set-value and constraint-oracle counts are those above with
$\zeta=\varepsilon$, with polynomially many additional independence
queries and rounding operations. The projected route uses a supplied polynomial-size convex
decomposition of its output and an independence oracle. The LMO route
uses only set-value and independence oracles.
\end{corollary}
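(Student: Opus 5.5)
The plan is to combine the fractional guarantee \eqref{eq:discrete-fractional-direct} with the expectation-preserving property of randomized swap rounding. First I check that the setting of Section~\ref{subsec:discrete} applies. Take $K=P(\mathcal M)$ restricted to the remaining ground set. It is a compact convex subset of the cube, and it is down-closed because independent sets are closed under subsets. It contains every $\one_J$ with $J\in\mathcal I$, so $O_{\rm frac}\ge O=\max_{J\in\mathcal I}f(J)$. Removing elements with infeasible singletons (loops) changes neither the feasible family nor $O$. If $\mathsf v=0$ or no element remains, the empty set is optimal and there is nothing to prove. Otherwise, running the online algorithm with $\zeta=\varepsilon$ and the horizon fixed after \eqref{eq:discrete-average-error} gives a random $X\in P(\mathcal M)$ with $\E\bar f(X)\ge(\alpha_0-\varepsilon)O_{\rm frac}\ge(\alpha_0-\varepsilon)O$.

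Next I apply swap rounding. Randomized swap rounding \citep[Theorem~II.1]{chekuri2010swap} takes a point of $P(\mathcal M)$ written as a convex combination of independent-set indicators. It returns an independent set $I$ whose coordinates are negatively correlated with marginals given by that point. Since $f$ is submodular, the theorem gives $\E[f(I)\mid X]\ge\bar f(X)$. Taking expectations over $X$ proves the claim. The only real work is supplying the convex decomposition that swap rounding needs, and this is where the two routes differ.

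In the LMO route, the base learner is online Frank--Wolfe started at $x_1=\zero=\one_\varnothing$. Each update $x_{t+1}=(1-t^{-1/2})x_t+t^{-1/2}v_t$ mixes in an LMO output $v_t$. Over $P(\mathcal M)$, the LMO can be implemented by the greedy algorithm using only independence queries: sort the elements by the linear coefficient and keep the positive-weight ones that preserve independence. This returns a vertex $\one_{J_t}$. By induction, each state $x_t$, and hence the selected state $x_\jmath$, is an explicit convex combination of at most $t\le T$ indicators of independent sets, with polynomially computable weights.

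The played action is $X=h(x_\jmath)$, not $x_\jmath$, and this is the main obstacle. Since $h$ is nonlinear and coordinatewise, the decomposition of $x_\jmath$ does not transfer directly. My first attempt uses down-closedness, since $0\le h(x)\le x$. Given $x=\sum_k\lambda_k\one_{J_k}$, I shrink each coordinate independently: replace element $i$ by the fraction $h(x_i)/x_i$ of its mass within each set. Equivalently, I decompose $X$ by running the greedy coordinate-lowering procedure on the chain of thresholds of $X$, and use the standard fact that any point of a down-closed matroid polytope below a given decomposed point is a convex combination of subsets of the $J_k$. Concretely, I would rather apply the known procedure that computes a decomposition of any point of $P(\mathcal M)$ in polynomial time with an independence oracle: a Carathéodory step via greedy on the tight-set chain. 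That avoids the transfer issue altogether and needs only independence queries. The projected route instead assumes a supplied polynomial-size decomposition, as the statement says. Counting queries gives the set-value and constraint-oracle counts with $\zeta=\varepsilon$, and the decomposition and rounding add only polynomially many independence queries.
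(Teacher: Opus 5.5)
Your argument agrees with the paper's except at the step you flag as the main obstacle. The shared part covers: feasibility of $P(\mathcal M)$ and $O_{\rm frac}\ge O$, the bound \eqref{eq:discrete-fractional-direct} with $\zeta=\varepsilon$, the swap-rounding bound $\E[f(I)\mid X]\ge\bar f(X)$, and the explicit decomposition of the Frank--Wolfe states into greedy LMO outputs.

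For the played action $X=h(x_\jmath)$ you switch to a generic polynomial-time decomposition of arbitrary points of $P(\mathcal M)$ using an independence oracle. That is a legitimate known result: Cunningham's strongly polynomial membership algorithm for matroid polyhedra returns such a convex combination. It even buys something, since the projected route would then need no supplied decomposition. But it is a heavy tool; it amounts to minimizing $r(A)-\sum_{i\in A}X_i$ over sets $A$, with $r$ the rank function. Your sketch, ``a Carath\'eodory step via greedy on the tight-set chain,'' is not itself an algorithm. Finding the tight sets and the step length to the boundary of the face is exactly where submodular minimization enters, so you should cite the result rather than sketch it.

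The paper needs none of this. Starting from $x=\sum_k\lambda_k\one_{J_k}$, it deletes element $i$ from whole support sets until the deleted weight equals $x_i-h_i(x)$, splitting at most one weight per coordinate. Independence is preserved, the other marginals are unchanged, and the support grows by at most $n$. This is the constructive form of the ``standard fact'' you invoke. Your proportional version, keeping $i$ with fraction $h(x_i)/x_i$ inside every set, has the right marginals. Applied to all coordinates, though, it splits each $J_k$ into up to $2^{|J_k|}$ subsets, which is why it seemed stuck.

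One minor omission: the cited swap rounding merges bases. You should pad the independent sets to bases with dummy elements and extend $f$ to ignore them. The bound then holds for non-monotone $f$, because every swap is a mean-preserving move along $e_i-e_j$ (or $\pm e_i$ in the original coordinates). Along such a move $\bar f$ is convex (or affine).
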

\begin{proof}
The cited theorem applies to points of the matroid polytope
$P(\mathcal M)$ and holds for every submodular function, not
necessarily monotone or nonnegative. To round independent sets of
different sizes, augment them to bases with dummy elements and extend
the objective to ignore those elements. Each swap preserves the mean
and moves along a direction $e_i-e_j$ in the augmented space. In the
original coordinates, a nonzero move has direction $e_i-e_j$ or
$\pm e_i$. The multilinear extension is convex along the former,
with second derivative $-2\partial_{ij}\bar f\ge0$, and affine along
the latter. Conditional on $X$, the theorem gives
$\E[f(I)\mid X]\ge \bar f(X)$; hence
\begin{equation}\label{eq:matroid-final-value}
 \E f(I)\ge\E \bar f(X)\ge\alpha_0O_{\rm frac}-\varepsilon \mathsf v
             \ge(\alpha_0-\varepsilon)O.
\end{equation}
For the LMO route, greedy linear optimization omits negative-weight
elements and uses the independence oracle. The convex-combination
updates retain a decomposition of
each state $x$ into independent-set indicators. A decomposition of the
played action $h(x)\le x$ follows by deleting element $i$ from support
sets until their total deleted weight is $x_i-h_i(x)$. At most one
weight is split for each coordinate, so the support grows by at most
$n$. These deletions preserve independence and leave the other
coordinate marginals unchanged. The decomposition required by swap
rounding on the independence polytope is therefore available in
polynomial time.
\end{proof}

The knapsack reduction below is polynomial in $n$ for a fixed accuracy
and a fixed number of knapsacks; unlike the continuous guarantees above,
its complexity is not jointly polynomial in $n$ and $\varepsilon^{-1}$.
For fixed $k\ge1$, nonnegative costs
$A\in\R_{\ge0}^{k\times n}$ and capacities
$b^{\rm cap}\in\R_{\ge0}^k$ impose $A\one_S\le b^{\rm cap}$.
The relaxation is $K=\{x\in[0,1]^N:Ax\le b^{\rm cap}\}$.
The reduction of \citet{kulik2013knapsack} enumerates small sets of
items and rounds the residual fractional solutions.
\begin{corollary}[Knapsack consequence of continuous-to-discrete rounding]
\label{cor:discrete-knapsack}
For every $0<\varepsilon<\alpha_0$, the continuous-to-discrete
reduction of \citet[Section~2]{kulik2013knapsack}, supplied with
\eqref{eq:discrete-fractional-direct}, returns a feasible set $S$ with
\[
 \E f(S)\ge(\alpha_0-\varepsilon)
             \max\{f(J):A\one_J\le b^{\rm cap}\}.
\]
In the exact-real arithmetic and sampling model of
Section~\ref{sec:online-model}, its set-value queries and other
operations number
$(n+1)^{O_k(\varepsilon^{-4})}\operatorname{poly}(n,\varepsilon^{-1})$.
The fractional calls may use either projections or ordinary linear
maximization over the explicitly specified residual knapsack polytopes.
\end{corollary}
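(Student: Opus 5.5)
The plan is to reduce the knapsack statement to the fractional guarantee \eqref{eq:discrete-fractional-direct} and then invoke the rounding framework of \citet[Section~2]{kulik2013knapsack} as a black box. That framework needs only an algorithm returning, for the residual instance, a feasible fractional point whose multilinear value is at least a constant fraction of the residual fractional optimum. Its loss relative to that constant is at most an arbitrarily small $\varepsilon'$, at the cost of enumerating sets of size $O_k(\varepsilon'^{-O(1)})$.

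\textbf{Step 1: the residual instances.} For each enumerated small set $T$ of heavy items, the reduction forms a residual instance: a contracted function $f_T(S)=f(S\cup T)$ on the remaining light items, under the residual knapsack polytope. The function $f_T$ is nonnegative and submodular. Its multilinear extension is therefore nonnegative and DR-submodular, and the residual polytope is compact, convex, and down-closed. Hence Section~\ref{subsec:discrete} applies verbatim to each residual instance, with its own scale $\mathsf v_T\le O_T$. The constraint oracle is either projection or linear maximization over the explicit residual polytope; both are polynomial by convex or linear programming in the exact-real model. Running the online method with $\zeta=\varepsilon/2$ gives a fractional $X_T$ with $\E\bar f_T(X_T)\ge(\alpha_0-\varepsilon/2)O_{{\rm frac},T}$, where $O_{{\rm frac},T}$ is the residual fractional optimum.

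\textbf{Step 2: rounding and accounting.} Kulik et al.\ round each $X_T$ by independent sampling followed by discarding items that overflow a slightly scaled-down capacity. Their analysis rests only on submodularity and the light-item property, and shows that the expected loss is at most $\varepsilon/2$ times the fractional value. Conditioning on $X_T$ and taking expectations over the rounding then gives a feasible $S$ with $\E f(S)\ge(\alpha_0-\varepsilon)\max\{f(J):A\one_J\le b^{\rm cap}\}$. The enumeration runs over guesses of the optimum's heavy items, so at least one guess has residual fractional optimum at least the remainder of the optimum. Returning the best candidate is not directly measurable in expectation. The fix is to output the candidate with the largest estimated value, or simply to follow Kulik et al.'s own selection argument, which compares against the correct guess.

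\textbf{Main obstacle: the cost bound.} The number of guesses is $(n+1)^{O_k(\varepsilon^{-4})}$ for the relevant heavy-set sizes. Each guess costs $\operatorname{poly}(n,\varepsilon^{-1})$ set queries, namely $O(n^9\varepsilon^{-4})$ by the LMO-route count. The hardest step is confirming that the cited framework tolerates a coefficient $\alpha_0<1$ and a guarantee holding only in expectation. I would handle this by checking that their Lemma combining enumeration with rounding is linear in the fractional value, so the approximation factor passes through multiplicatively.
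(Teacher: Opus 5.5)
Your skeleton matches the paper's: enumerate guesses, form residual instances, run the fractional method with $\zeta=\varepsilon/2$, round as in Kulik et al., and keep the best candidate. Contracting by the guessed set, rather than keeping its items as zero-cost optional items, is a harmless variant. The gap is in Step~2, at the point you flag as hardest.

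\textbf{The gap.} The discarded-value estimate of Kulik et al.\ is additive:
\[
\E f(R_x)\le\E[f(R_x)\one\{D_x=R_x\}]+C_k\xi O_{\rm res}.
\]
It holds because an overfull sample splits into residual-feasible pieces, each worth at most $O_{\rm res}$. The relative form (``loss at most a small fraction of the fractional value'') is obtained there only by invoking a deterministic lower bound $\bar f(x)\ge c\,O_{\rm res}$. That is exactly the black-box requirement you state at the start. Your fractional output satisfies only $\E\bar f_T(X_T)\ge(\alpha_0-\zeta)O_{{\rm frac},T}$. Individual realizations can therefore have $\bar f_T(X_T)$ far below $O_{\rm res}$, and for them the claim that the expected rounding loss is at most $\varepsilon/2$ times the fractional value is unsupported. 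Your proposed fix, checking that the bound is ``linear so the factor passes through multiplicatively,'' does not repair this, because the available bound is not multiplicative.

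\textbf{How to close it.} Verify that neither the additive estimate nor the per-realization repair bound $f(S)\ge(1-C_k\xi)f(D_x)$ uses a lower bound on $\bar f(x)$. This gives, for every feasible $x$,
\[
 \E[f(S)\mid X=x]\ge(1-C_k\xi)\bar f(x)-C_k\xi O_{\rm res}.
\]
The bound is affine in $\bar f(x)$ with an $x$-independent error, so it can be averaged over $X$. Combined with $\mathsf v\le O_{\rm res}\le O^{\rm res}_{\rm frac}$ and $C_k\xi<1$, it yields $\E f(S)\ge(\alpha_0-\zeta-2C_k\xi)O_{\rm res}$.

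\textbf{Error budget.} Your split also omits the guessing loss $O_{\rm res}\ge(1-\xi)O$: spending $\varepsilon/2$ on the fractional step and $\varepsilon/2$ on rounding leaves nothing for it. Take $\zeta=\varepsilon/2$ and $\xi=\min\{\xi_k,\varepsilon/[2(2C_k+1)]\}$. Then $(\alpha_0-\zeta-(2C_k+1)\xi)O\ge(\alpha_0-\varepsilon)O$.

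\textbf{Candidate selection.} Selecting the best candidate raises no measurability issue and needs no estimated values. The set-value oracle is exact, and the largest realized value dominates the correct guess's output pointwise, so its expectation inherits the bound.
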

\begin{proof}
For $0<\xi<1$, the cited reduction enumerates feasible sets $J$ of at
most $\lceil k\xi^{-4}\rceil$ items. For each guess, it reduces the
capacities to $b^{\rm res}=b^{\rm cap}-A\one_J$, retains the items of
$J$ as optional items with zero residual cost, and retains only those
unguessed items $i$ with $A_{\cdot i}\le\xi^3b^{\rm res}$
\citep[Section~2]{kulik2013knapsack}. The objective is the restriction
of the unchanged function $f$ to the retained ground set. The residual
polytope is down-closed, and every residual-feasible set is feasible in
the original instance: its original cost is at most its residual cost
plus $A\one_J$. Rows with zero residual capacity have zero residual
costs on all retained items and may be dropped. One residual instance
has optimum $O_{\rm res}\ge(1-\xi)O$. The lower
bound $\mathsf v$ in \eqref{eq:discrete-value-scale} is evaluated for each
residual instance.
For a feasible fractional point $x$ of a residual instance, let $R_x$
contain each element independently with probability $x_i$, so that
$\E f(R_x)=\bar f(x)$. Following \citet[Section~2.2]{kulik2013knapsack}, let
$D_x=R_x$ if $R_x$ is $\xi$-nearly feasible (it exceeds each residual capacity by at most
a factor $1+\xi$) and $D_x=\varnothing$
otherwise, and let $S\subseteq D_x$ be the feasible set produced by
their feasibility repair \citep[Section~2.3]{kulik2013knapsack}. We use
two estimates from that paper, with constants $C_k\ge1$ and
$\xi_k>0$ depending only on $k$, valid for $0<\xi\le\xi_k$; decrease
$\xi_k$ if necessary so that $\xi_k<1$ and $C_k\xi_k<1$.

First, the discarded-value argument of \citet[Section~2.2]{kulik2013knapsack} bounds the
expected value of $R_x$ on the event that $R_x$ is not
$\xi$-nearly feasible by tail-probability and discarded-value
estimates, which give
\[
 \E[f(R_x)]\le\E\bigl[f(R_x)\one\{D_x=R_x\}\bigr]+C_k\xi O_{\rm res}.
\]
This step does not use a lower bound on $\bar f(x)$; that paper uses such a
lower bound only afterwards, to turn the additive term into a relative
one. Since $f\ge0$, $\E f(D_x)\ge\E[f(R_x)\one\{D_x=R_x\}]$, and
therefore
\[
 \E f(D_x)\ge \bar f(x)-C_k\xi O_{\rm res}.
\]
Second, the feasibility repair satisfies $f(S)\ge(1-C_k\xi)f(D_x)$
for every realization of $D_x$. Taking expectations and using
$0\le1-C_k\xi\le1$,
\[
 \E f(S)\ge(1-C_k\xi)\E f(D_x)\ge(1-C_k\xi)\bar f(x)-C_k\xi O_{\rm res}.
\]
Both estimates hold for every feasible $x$, with no lower bound on
$\bar f(x)$. They therefore hold conditionally on the random fractional
output:
\begin{equation}\label{eq:knapsack-conditional-rounding}
 \E[f(S)\mid X=x]\ge
 (1-C_k\xi)\bar f(x)-C_k\xi O_{\rm res}.
\end{equation}

Take $\zeta=\varepsilon/2$ and
$\xi=\min\{\xi_k,\varepsilon/[2(2C_k+1)]\}$. In the residual
instance, $\mathsf v\le O_{\rm res}\le O^{\rm res}_{\rm frac}$, where
$O^{\rm res}_{\rm frac}$ is the fractional optimum of the residual
instance, so
\eqref{eq:discrete-fractional-direct} gives
$\E \bar f(X)\ge(\alpha_0-\zeta)O_{\rm res}$. Taking expectations in
\eqref{eq:knapsack-conditional-rounding} and using $C_k\xi<1$ and
$\alpha_0-\zeta\le1$,
\[
 \E f(S)\ge[(1-C_k\xi)(\alpha_0-\zeta)-C_k\xi]O_{\rm res}
 \ge(\alpha_0-\zeta-2C_k\xi)O_{\rm res}.
\]
The last coefficient is positive because $\zeta+2C_k\xi<\varepsilon<\alpha_0$.
Using $O_{\rm res}\ge(1-\xi)O$,
\begin{equation}\label{eq:knapsack-final-value}
 \E f(S)\ge(\alpha_0-\zeta-2C_k\xi)(1-\xi)O
 \ge[\alpha_0-\zeta-(2C_k+1)\xi]O\ge(\alpha_0-\varepsilon)O .
\end{equation}
The algorithm returns, among the rounded sets of all enumerated residual
instances, one with the largest exact set value. Its value is at least
that of the set obtained from the residual instance above, so it
satisfies the same bound. The cited
enumeration uses $(n+1)^{O_k(\xi^{-4})}$ fractional calls, and
$\xi=\Theta_k(\varepsilon)$, proving the stated complexity.
\end{proof}

\begin{remark}[The remaining discrete approximation gap]
\label{rem:matroid-gap}
Corollary~\ref{cor:discrete-matroid} makes $\alpha_0$ achievable
arbitrarily closely with polynomially many set-value and independence
queries. The upper bound $\beta_*\approx0.47043868$ in
\eqref{eq:global-beta} also holds for nonnegative submodular set functions
under cardinality constraints, as stated in the remark following
Theorem~1.1 of \citet{feldman2026hardness}; it therefore applies to the
matroid class as well.
\end{remark}

\Needspace{14\baselineskip}
\section*{Conclusions}\label{sec:conclusion}
The comparison \eqref{eq:structural} gives the approximation
coefficient $\alpha_0=54440750000/117486474169>0.463378$ for general
nonnegative DR-submodular objectives over down-closed convex sets, with
an analytical proof. Its averaging weights $w,\varpi$ in \eqref{eq:wW}
are piecewise polynomial in the path time, and on the initial interval
$[0,t_0)$ only the small-block comparison is used. The one-query field
estimator of Algorithm~\ref{alg:sampler} gives the same coefficient with
projected ascent or online Frank--Wolfe; standard smoothing,
online-to-batch conversion, and rounding give the value-feedback,
offline, and discrete consequences. Contracting the objective toward a
feasible interior point gives the uniform offline guarantee of
Theorem~\ref{thm:uniform} for the full nonnegative $C^1$ class, with no
gradient, width, or smoothness parameter. For the continuous
DR-submodular oracle model, the known oracle upper bound is
$\beta_*\approx0.47043868$
\citep[Corollary~6.4 and Proposition~7.3]{geometry2026}; it also applies
to the exact-oracle model of Theorem~\ref{thm:uniform} by
Remark~\ref{rem:smooth-hardness}. Appendix~\ref{sec:geometric} shows that
$2\sqrt3-3$ is the best coefficient attainable by any map $h$ with
$0\le h(x)\le x$ that does not depend on the objective, measured against
the concave closure $f^+(x)$ (the largest average of $f$ over
distributions with mean $x$), and that bounded comparison vectors exist
at this coefficient.

\Needspace{5\baselineskip}
\section*{Acknowledgments}
Vaneet Aggarwal thanks Hillclimb for providing access to generative
artificial intelligence tools. GPT and Claude assisted with developing
proof arguments, checking algebraic calculations, and revising the
exposition. The authors take responsibility for the final manuscript.
\appendix
\section{Proof of Lemma~\ref{lem:scalar}}\label{app:scalar}

This appendix proves Lemma~\ref{lem:scalar} for the rational schedule of
Section~\ref{sec:field}. Section~\ref{app:parameters} records the
constants used below and reduces a nonincreasing profile $q$ to a
constant one (Lemma~\ref{lem:ordered-scalar-reduction}).
Section~\ref{app:prefix} proves the inequality for intervals that start
at time zero (Lemma~\ref{lem:prefix-inequality}) and for intervals
$[\tau,1]$ with $\tau\le t_0$ (Lemma~\ref{lem:head-terminal}).
Section~\ref{app:boundary-proof} locates the crossing point $z$ and
bounds $\mathcal B$ from below
(Lemma~\ref{lem:boundary-estimates}). Section~\ref{app:all-intervals}
proves four bounds on the integrands $F_q$ for constant $q$
(Lemma~\ref{lem:terminal-reduction}). Section~\ref{app:complete}
combines these results into the proof of Lemma~\ref{lem:scalar}.

\subsection{Elementary parameter bounds and ordered profiles}\label{app:parameters}

The mass identities of Section~\ref{sec:field} give
$\int_0^1w=1-\beta$ and $\int_0^1\varpi=\beta$, so
$\mu(dt)=(w(t)+\varpi(t))dt$ is a probability measure.
Its tail $W$ is continuous, equals one before $t_0$, and satisfies
$W'=-(w+\varpi)$ almost everywhere. Since $w(t_0+)=b$, $w$ increases after $t_0$,
and $W$ decreases, $\lambda=w-bW\ge0$ there. On the initial interval
$\lambda=0$ and $\kappa=b$. Thus the mass and tail inequalities in
\eqref{eq:mass-identities} hold. In this appendix write
\[
 \mathcal U(T)=\int_0^TW(t)\dd,\qquad
 M(T)=\int_0^Tw(t)\dd,\qquad L(T)=\int_0^T\lambda(t)\dd.
\]
We call $[t_0,t_1]$ the ramp. On it, with $x=t-t_0$,
\[
 w=b+w_1x+w_2x^2,\qquad
 \varpi=\varpi_0+\varpi_1x,\qquad
 W=1-(b+\varpi_0)x-\frac{w_1+\varpi_1}{2}x^2-\frac{w_2}{3}x^3,
\]
On $(t_1,1]$, $\varpi=0$, and $w$ is linear on $[t_1,1]$.
With $\delta_1=t_1-t_0=97/250$, the formulas of Section~\ref{sec:field}
give $w(t_1)=b+w_1\delta_1+w_2\delta_1^2$, $w(1)=w(t_1)+w_3(1-t_1)$,
$W(t_1)=(1-t_1)w(t_1)+\tfrac{w_3}2(1-t_1)^2$, and
$J:=\mathcal U(1)=\mathcal U(t_1)+(1-t_1)W(t_1)-\tfrac{(1-t_1)^2}2w(t_1)-\tfrac{w_3}6(1-t_1)^3$,
where $\mathcal U(t_1)=\int_0^{t_1}W$ follows by integrating the ramp
formula for $W$. Evaluating these expressions
gives
\begin{equation}\label{eq:J}
 J=\frac{3823017045539993}{6255000000000000},
 \qquad J_-:=\frac{611193}{10^6}<J<\frac{611194}{10^6}=:J_+.
\end{equation}
The following rational evaluations will be used below:
\begin{equation}\label{eq:scalar-brackets}
\begin{gathered}
 w(t_1)=\frac{4560014431}{4170000000},\qquad
 W(t_1)=\frac{841892829781}{2085000000000},\\
 \mathcal U(t_1)=\frac{168143863477627}{312750000000000},\qquad
 w(1)=\frac{4742668771}{4170000000}.
\end{gathered}
\end{equation}
The exponential lower bounds used below follow from the alternating
Taylor estimate
\begin{equation}\label{eq:exp-remainder}
 e^{-x}\ge1-x+\frac{x^2}{2}-\frac{x^3}{6}
                         +\frac{x^4}{24}-\frac{x^5}{120},\qquad 0<x<1.
\end{equation}
Polynomial integration and exact rational comparisons establish the
remaining parameter bounds.

\begin{lemma}[Reduction of ordered scalar profiles]\label{lem:ordered-scalar-reduction}
Let $0\le \tau<T\le1$, and let $q:[\tau,T]\to[1/2,1]$ be nonincreasing. Write
\[
 \mathcal U_{\tau,T}=\int_\tau^T W(t)\,dt,\qquad
 L_{\tau,T}=\int_\tau^T\lambda(t)\,dt,\qquad
 \overline q=\frac{1}{\mathcal U_{\tau,T}}\int_\tau^T W(t)q(t)\,dt.
\]
Then
\begin{align}
 &\int_\tau^T\left\{\Lambda(q(t))W(t)
            +\left(1+\frac{q(t)}b\right)\lambda(t)-\Gamma\right\}dt
 \notag\\
 &\hspace{1cm}\le
 \left(1+\frac{\overline q}b\right)
       \left\{L_{\tau,T}+(b+1-\overline q)\mathcal U_{\tau,T}\right\}-\Gamma(T-\tau).
 \label{eq:constant-profile-reduction}
\end{align}
Consequently it suffices to prove the scalar comparison for constant $q$.
\end{lemma}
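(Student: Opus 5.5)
The approach is to split the integrand into a part that is concave in $q$ and a part that is linear in $q$. I would treat the first with Jensen's inequality under the weight $W\,dt$, and the second with Chebyshev's rearrangement inequality, using that $q$ is nonincreasing while $\lambda/W$ is nondecreasing. The identity $\Lambda(q)=\cprof(q)(b+1-q)$ from \eqref{eq:q-weights} then puts the result in the displayed form. The $-\Gamma(T-\tau)$ term is common to both sides and plays no role.

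First I would check that the weight is admissible and that $\overline q$ is well defined. By the parameter discussion at the start of this appendix, $W$ is continuous and nonincreasing with $W=1$ on $[0,t_0]$. Moreover $W(t)=\mu((t,1])>0$ for $t<1$, because $w\ge b>0$ on $[t_0,1]$. Hence $\mathcal U_{\tau,T}>0$ when $\tau<T$, so $\overline q\in[1/2,1]$ is a genuine $W$-weighted average of $q$.

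\emph{Concave part.} Since $\Lambda(q)=1+b+q(1-q)/b$ is concave in $q$ and $W\ge0$, Jensen's inequality for the probability measure $W\,dt/\mathcal U_{\tau,T}$ on $[\tau,T]$ gives
\[
\int_\tau^T\Lambda(q(t))W(t)\,dt\le\Lambda(\overline q)\,\mathcal U_{\tau,T}.
\]

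\emph{Linear part.} Since $\int_\tau^T(1+q/b)\lambda\,dt=L_{\tau,T}+b^{-1}\int_\tau^T q\lambda\,dt$, it suffices to show $\int_\tau^T q\lambda\,dt\le\overline q\,L_{\tau,T}$. Write $\lambda=\phi W$ with $\phi=\lambda/W$. Before $t_0$ we have $\phi=0$. After $t_0$ we have $\phi=w/W-b\ge0$, which is nondecreasing because $w$ increases and $W$ decreases there. So $\phi$ is nondecreasing on all of $[\tau,T]$, possibly with an upward step at $t_0$.

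With $q$ nonincreasing, Chebyshev's integral inequality for the probability measure $W\,dt/\mathcal U_{\tau,T}$ applies to the oppositely ordered pair $(q,\phi)$. It gives $\int q\phi W\le(\int qW)(\int\phi W)/\mathcal U_{\tau,T}=\overline q\,L_{\tau,T}$. To keep this self-contained I would use the symmetrized double integral $\iint(q(t)-q(r))(\phi(t)-\phi(r))W(t)W(r)\,dt\,dr\le0$. One point needs care: if $T=1$, then $\phi$ may blow up near $1$ where $W\to0$. This is harmless because $\phi W=\lambda$ stays bounded; to be safe I would apply the inequality on $[\tau,T-\epsilon]$ and let $\epsilon\to0$.

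\emph{Combination.} Adding the two bounds gives
\[
\Lambda(\overline q)\,\mathcal U_{\tau,T}+\cprof(\overline q)\,L_{\tau,T}=\cprof(\overline q)\bigl\{L_{\tau,T}+(b+1-\overline q)\mathcal U_{\tau,T}\bigr\},
\]
which is \eqref{eq:constant-profile-reduction}. The right side is exactly the left side evaluated at the constant profile $q\equiv\overline q\in[1/2,1]$. Hence proving \eqref{eq:scalar} for constant profiles suffices.

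I expect the only delicate step to be the monotonicity of $\lambda/W$ across $t_0$ together with the endpoint behavior at $t=1$. Everything else is a direct application of Jensen's and Chebyshev's inequalities.
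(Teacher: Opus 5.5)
Your proposal is correct and follows essentially the paper's proof. Both use Jensen's inequality under the weight $W\,dt$ for the quadratic (concave) part, and both use the symmetrized Chebyshev double integral, which relies on $\lambda/W$ being nondecreasing while $q$ is nonincreasing. The only cosmetic difference is that the paper writes the double integral in the cross-multiplied form $\lambda(t)W(r)-\lambda(r)W(t)$. That form handles the endpoint $t=1$, where $W(1)=0$, directly, so no truncation to $[\tau,T-\epsilon]$ is needed.
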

\begin{proof}
On $[0,1)$, where $W>0$, the quotient $\lambda/W$ vanishes before
$t_0$ and increases afterwards. Indeed, $\lambda/W=w/W-b$ after
$t_0$, and
$(w/W)'=[w'W+w(w+\varpi)]/W^2>0$ on each open piece.
Consequently $\lambda(t)W(r)\ge\lambda(r)W(t)$ whenever $r\le t$;
this also holds for $t=1$, since $W(1)=0$ and $\lambda\ge0$.
Because $q$ is nonincreasing,
\[
 (q(t)-q(r))\bigl(\lambda(t)W(r)-\lambda(r)W(t)\bigr)\le0.
\]
For $r>t$ the same product is also nonpositive, by symmetry.
Integrating over $(r,t)\in[\tau,T]^2$ and expanding gives
$2\bigl(\mathcal U_{\tau,T}\int_\tau^T\lambda q-L_{\tau,T}\int_\tau^TWq\bigr)\le0$, that is,
$\int_\tau^T\lambda q\le L_{\tau,T}\overline q$.
Also $\int_\tau^TWq^2\ge\mathcal U_{\tau,T}\overline q^{\,2}$ by Jensen's
inequality. Expanding $\Lambda(q)=1+b+q(1-q)/b$, the left side of
\eqref{eq:constant-profile-reduction} equals
\[
\begin{aligned}
 &(1+b)\mathcal U_{\tau,T}+L_{\tau,T}
 +\frac1b\Bigl(\int_\tau^TWq-\int_\tau^TWq^2+\int_\tau^T\lambda q\Bigr)-\Gamma(T-\tau)\\
 &\qquad\le(1+b)\mathcal U_{\tau,T}+L_{\tau,T}
 +\frac1b\bigl(\mathcal U_{\tau,T}\overline q-\mathcal U_{\tau,T}\overline q^{\,2}
 +L_{\tau,T}\overline q\bigr)-\Gamma(T-\tau),
\end{aligned}
\]
and the right side is the right side of
\eqref{eq:constant-profile-reduction}, because
$(1+\overline q/b)(b+1-\overline q)=\Lambda(\overline q)$.
\end{proof}

For a constant $q\in\R$ (later $q\in[1/2,1]$), define
\begin{equation}\label{eq:Fq-definition}
 F_q(t)=\Lambda(q)W(t)+\left(1+\frac qb\right)\lambda(t)-\Gamma,
 \qquad P_q(T)=\int_0^T F_q(t)\,dt.
\end{equation}
Before $t_0$, $F_q(t)=-(q-1/2)^2/b$; after $t_0$,
\begin{equation}\label{eq:Fq-after-t0}
 F_q(t)=\left(1+\frac qb\right)\{w(t)+(1-q)W(t)\}-\Gamma.
\end{equation}

\subsection{Prefix intervals}\label{app:prefix}
For intervals that start at time zero, the right side of
\eqref{eq:scalar} is $\mathcal B(0)=0$. We use the following elementary sign test.
If $p\in C^3([a,d])$, $p'''\le0$, and $p'(a)>0$, then
\begin{equation}\label{eq:one-change-derivative}
 p(t)\ge\min\{p(a),p(d)\}\qquad(a\le t\le d).
\end{equation}
Indeed, $p'$ is concave, so its nonnegative set is an interval
containing $a$. Thus $p$ first increases and can subsequently only
decrease.

\begin{lemma}[Prefix inequality]\label{lem:prefix-inequality}
For every $T\in[0,1]$ and $q\in\mathbb R$, $P_q(T)\le0$.
\end{lemma}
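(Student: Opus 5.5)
We need to show $P_q(T)=\int_0^T F_q\le 0$ for all $T\in[0,1]$ and all real $q$. The plan is to reduce the family in $q$ to finitely many one-dimensional checks in $T$.

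\emph{Step 1: concavity in $q$ and maximization.} For fixed $T$, $P_q(T)$ is a concave quadratic in $q$. From the expansion $\Lambda(q)=1+b+q(1-q)/b$ we get
\[
 P_q(T)=(1+b)\mathcal U(T)+L(T)+\frac1b\bigl[q(\mathcal U(T)+L(T))-q^2\mathcal U(T)\bigr]-\Gamma T .
\]
Maximizing over $q\in\R$ gives the optimizer $q^\star=(\mathcal U+L)/(2\mathcal U)$ and
\[
 \max_q P_q(T)=(1+b)\mathcal U+L+\frac{(\mathcal U+L)^2}{4b\,\mathcal U}-\Gamma T .
\]
So it suffices to show, with $\mathcal U=\mathcal U(T)$ and $L=L(T)$,
\[
 4b\,\mathcal U\bigl[\Gamma T-(1+b)\mathcal U-L\bigr]\ge(\mathcal U+L)^2 \qquad (0\le T\le1).
\]

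\emph{Step 2: the initial interval.} For $T\le t_0$ we have $\mathcal U=T$ and $L=0$. The inequality becomes $4bT^2[\Gamma-1-b]\ge T^2$, which holds with equality because $\Gamma-1-b=1/(4b)$. This is consistent with $F_q=-(q-1/2)^2/b\le0$ there.

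\emph{Step 3: $T\ge t_0$.} Here I would use the explicit piecewise-polynomial formulas. We have $L=M-\kappa$-integral, that is, $L(T)=M(T)-b\,\mathcal U(T)+b\,t_0$, since $\lambda=w-bW$ after $t_0$ and $\lambda=0$ before. Then $\mathcal U$, $M$, and $L$ are polynomials in $x=T-t_0$ on the ramp (degree at most $4$), and polynomials on $[t_1,1]$.

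The difference $G(T)=4b\mathcal U(\Gamma T-(1+b)\mathcal U-L)-(\mathcal U+L)^2$ is therefore a polynomial on each of the two pieces, with $G(t_0)=0$. I would show $G\ge0$ on each piece.

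Rather than handle $G$ directly, it may be cleaner to avoid squaring and use the sign test \eqref{eq:one-change-derivative} on $p(T)=-\max_qP_q(T)$, or piecewise on $-P_q(T)$ for $q$ restricted to a compact range. The range is justified because $q^\star\in[1/2,\,1/2+L/(2\mathcal U)]$ and $L/\mathcal U$ is bounded.

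For $-P_{q}$ with fixed $q$, note the following.
\begin{itemize}
\item The derivative is $-F_q(T)$, which starts from $(q-1/2)^2/b\ge0$ at $t_0$.
\item The third derivative is $-F_q''=-(1+q/b)[w''-(1-q)(w+\varpi)']$. This is controllable in sign piece by piece, since $w''=2w_2$ on the ramp, $w''=0$ later, and $\varpi'=\varpi_1$.
\end{itemize}
So on each piece $\min\{p(a),p(d)\}$ is governed by endpoint values, and the problem reduces to checking $P_q(t_1)\le0$ and $P_q(1)\le0$ for all $q$. Each of these is a single concave quadratic in $q$, whose discriminant condition is checked with the rational values \eqref{eq:scalar-brackets} and $J$ from \eqref{eq:J}.

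\emph{Main obstacle.} The hardest part is the sign of $F_q''$ uniformly in $q$, and the case where $p'(a)=0$ exactly, at $q=1/2$ and $T=t_0$. Near $q=1/2$ the initial derivative vanishes, so the sign test is degenerate. I would handle it by using the second-order behaviour: $F_{1/2}(t_0^+)=(1+1/(2b))(b+1/2)-\Gamma=0$, and $F_q'(t_0^+)$ is positive, which is where $w_1$ and the ramp constants matter. Alternatively, I would split $[t_0,1]$ at intermediate rational points and verify endpoint values there.

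Should the uniform sign of $F_q''$ fail, the fallback is to bound $\max_qP_q$ via the explicit polynomial $G$ and verify positivity on each piece by Sturm-type or interval estimates with exact rationals.
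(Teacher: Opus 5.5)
Your Steps~1 and~2 are correct, and they are the paper's own reduction. With $\Upsilon=2b+1$, the identities $4b\Gamma=\Upsilon^2$ and $(\mathcal U+L)^2+4b\,\mathcal U[(1+b)\mathcal U+L]=(\Upsilon\mathcal U+L)^2$ show that your $G$ is exactly the paper's $N(T)=\Upsilon^2\mathcal U(T)T-Z(T)^2$, where $Z=\Upsilon\mathcal U+L$.

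The gap is Step~3, which carries the whole content of the lemma, and your main route for it fails.

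\emph{The sign test has the wrong sign.} Applying \eqref{eq:one-change-derivative} to $-P_q$ for fixed $q$ requires $(-P_q)'''=-F_q''\le0$.
\begin{itemize}
\item On $[t_1,1]$ one has $F_q''=-\cprof(q)(1-q)w_3\le0$, which is the wrong sign for every $q<1$.
\item On the ramp, $F_q''=\cprof(q)[2w_2-(1-q)(w'+\varpi_1)]<0$ for every $q<49/60$, because $w_1+\varpi_1>3$ and $2w_2=11/20$.
\end{itemize}

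\emph{The endpoint reduction is false, not merely unproved.} Take $q=0.7$.
\begin{itemize}
\item $F_q(t_1)\approx0.033>0$ and $F_q(1)=\cprof(q)w(1)-\Gamma\approx-0.107<0$.
\item $F_q'=\cprof(q)[w_3-(1-q)w]<0$ on $[t_1,1]$.
\end{itemize}
So $P_{0.7}$ increases and then decreases on $[t_1,1]$. Its maximum there is strictly interior, and checking $P_q(t_1)\le0$ and $P_q(1)\le0$ says nothing about it. On the ramp the situation is similar: $F_{0.65}$ is negative at $t_0$, positive at $t=1/2$ ($\approx0.006$), and negative at $t_1$ ($\approx-0.002$). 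Hence $P_{0.65}$ has an interior local maximum there, and no endpoint principle is available.

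\emph{The degenerate point has the sign backwards.} At $(q,T)=(1/2,t_0)$ you have $F_{1/2}(t_0)=0$, so $P_{1/2}(t_0+y)=\tfrac{y^2}{2}F_{1/2}'(t_0^+)+O(y^3)$. The lemma therefore \emph{requires} $F_{1/2}'(t_0^+)=\cprof(1/2)[w_1-(b+\varpi_0)/2]<0$, not a positive derivative. The margin here is only $(b+\varpi_0)/2-w_1=10610093/40449000000$.

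What is missing is a working proof that $N\ge0$ on $[t_0,1]$. Your fallback names this target but supplies none of the needed estimates. The paper runs the sign test on the $q$-free polynomial $N$ itself, not on $P_q$.
\begin{itemize}
\item \emph{On the ramp.} $N$ has a double zero at $t_0$, since the maximizing $q$ tends to $1/2$. The paper therefore writes $N(t_0+y)=y^2N_0(y)$. It finds $N_0(0)=n_2=\Upsilon t_0(\upsilon_2-w_1)>0$ with $\upsilon_2=(b+\varpi_0)/2$, which is precisely the small margin above. It also shows $N_0'(0)=n_3>0$ and $n_5,\dots,n_8<0$. Hence $N_0'''<0$ and $N_0\ge\min\{n_2,N(t_1)/(t_1-t_0)^2\}>0$.
\item \emph{On $[t_1,1]$.} The paper shows that every coefficient of degree at least three in $N(t_1+y)$ is negative, so $N'''<0$. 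It then checks $N(t_1)>0$, $N'(t_1)>0$ and $N(1)>0$ using the rational values in \eqref{eq:scalar-brackets} and $J$.
\end{itemize}
Without these coefficient estimates, or some other certified positivity argument for these degree-eight and degree-six polynomials that handles the double zero at $t_0$, the proposal does not prove the lemma.
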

\begin{proof}
For $T\ge t_0$, put $\Upsilon=2b+1$, $M_\beta=M+\beta$, and
$Z=(1+b)\mathcal U+M_\beta$. Since $\lambda=w-bW$ after $t_0$, $\lambda=0$
before $t_0$, and $bt_0=\beta$,
\[
 L=\int_{t_0}^T(w-bW)=M-b(\mathcal U-t_0)=M_\beta-b\mathcal U .
\]
Expanding $\Lambda(q)=1+b+q(1-q)/b$ and $\cprof(q)=1+q/b$ in $P_q(T)$,
\[
 bP_q(T)=-\mathcal Uq^2+(\mathcal U+L)q+b[(1+b)\mathcal U+L-\Gamma T],
\]
with $\mathcal U+L=s\mathcal U+M_\beta$ and $(1+b)\mathcal U+L=\mathcal U+M_\beta$.
Its discriminant is
\[
 (s\mathcal U+M_\beta)^2+4b\,\mathcal U(\mathcal U+M_\beta)-4b\Gamma\,\mathcal UT
 =Z^2-\Upsilon^2\mathcal UT,
\]
because $s^2+4b=(1+b)^2$, $s+2b=1+b$, and $4b\Gamma=\Upsilon^2$. Since the
leading coefficient $-\mathcal U$ is negative, $P_q(T)\le0$ for all real
$q$ exactly when
\begin{equation}\label{eq:prefix-unrestricted-square}
 N(T):=\Upsilon^2\mathcal U(T)T-Z(T)^2\ge0.
\end{equation}

For $t_0\le T\le t_1$, set $y=T-t_0$ and
\[
 \upsilon_2=\frac{b+\varpi_0}2=\frac{2830919}{6062500},\quad
 \upsilon_3=\frac{w_1+\varpi_1}6=\frac{476427691363}{941652720000},\quad
 \upsilon_4=\frac{w_2}{12}=\frac{11}{480}.
\]
Direct integration gives
\[
 \mathcal U=t_0+y-\upsilon_2y^2-\upsilon_3y^3-\upsilon_4y^4,\qquad
 M_\beta=bt_0+by+\frac{w_1}2y^2+\frac{w_2}3y^3.
\]
Hence $Z=\Upsilon t_0+\Upsilon y+Z_2y^2+Z_3y^3+Z_4y^4$, where
\[
 Z_2=\frac{w_1}2-(1+b)\upsilon_2,\qquad
 Z_3=\frac{w_2}3-(1+b)\upsilon_3,\qquad Z_4=-(1+b)\upsilon_4.
\]
Here $\mathcal U T=t_0^2+2t_0y+O(y^2)$ and $Z=\Upsilon(t_0+y)+O(y^2)$, so
the terms of degree $0$ and $1$ in $N(t_0+y)$ cancel; thus
$N(t_0+y)=y^2N_0(y)$ with
$N_0(y)=n_2+n_3y+n_4y^2+n_5y^3+n_6y^4+n_7y^5+n_8y^6$. The argument
below uses only the coefficients
\begin{align*}
 n_2&=\Upsilon t_0(\upsilon_2-w_1),&
 n_3&=\Upsilon[\upsilon_2-w_1+t_0(\upsilon_3-2w_2/3)],\\
 n_5&=\Upsilon\upsilon_4-2Z_2Z_3,&
 n_6&=-Z_3^2-2Z_2Z_4,\\
 n_7&=-2Z_3Z_4,&n_8&=-Z_4^2.
\end{align*}
These follow by collecting powers of $y$ and using $2(1+b)-\Upsilon=1$;
for example,
$n_2=\Upsilon^2(1-t_0\upsilon_2)-\Upsilon^2-2\Upsilon t_0Z_2=\Upsilon t_0(\upsilon_2-w_1)$
and $n_5=-\Upsilon^2\upsilon_4-2\Upsilon Z_4-2Z_2Z_3=\Upsilon\upsilon_4-2Z_2Z_3$.
The exact constants imply
\[
 \upsilon_2-w_1=\frac{10610093}{40449000000}>\frac1{4000},\quad
 \upsilon_3>\frac12,\quad Z_2<-\frac35,\quad Z_3<-\frac45,\quad Z_4<0.
\]
Since $\Upsilon<3$ and $\upsilon_4<1/40$, one has
$n_2>0$, $n_3>0$ (because $\upsilon_3>1/2>2w_2/3=11/60$), and
$n_5<3/40-24/25<0$, while $n_6,n_7,n_8<0$. Since $N_0'(0)=n_3$ and
$N_0'''(y)=6n_5+24n_6y+60n_7y^2+120n_8y^3$, we get $N_0'(0)>0$ and
$N_0'''(y)<0$ for $y\ge0$.
By \eqref{eq:one-change-derivative}, $N_0$ attains its minimum on
$[0,t_1-t_0]$ at an endpoint. Its initial value $n_2$ is positive; its
other endpoint is $N(t_1)/(t_1-t_0)^2$, proved positive next.

For $t_1\le T\le1$, put $y=T-t_1$, $\bar w=w(t_1)$, $\bar W=W(t_1)$, and
$\bar{\mathcal U}=\mathcal U(t_1)$. Then
\[
 W=\bar W-\bar wy-\frac{w_3}2y^2,\quad
 \mathcal U=\bar{\mathcal U}+\bar Wy-\frac{\bar w}2y^2-\frac{w_3}6y^3,\quad
 M_\beta=1-W.
\]
Write $Z=\bar Z_0+\bar Z_1y+\bar Z_2y^2+\bar Z_3y^3$, where
\[
 \bar Z_0=(1+b)\bar{\mathcal U}+1-\bar W,\quad \bar Z_1=(1+b)\bar W+\bar w,\quad
 \bar Z_2=\frac{w_3-(1+b)\bar w}2,\quad \bar Z_3=-\frac{(1+b)w_3}6.
\]
The coefficients of degree three and higher in $N(t_1+y)$ are
\begin{align*}
 \bar n_3&=-\Upsilon^2(\bar w/2+t_1w_3/6)-2\bar Z_0\bar Z_3-2\bar Z_1\bar Z_2,\\
 \bar n_4&=-\Upsilon^2w_3/6-\bar Z_2^2-2\bar Z_1\bar Z_3,\qquad
 \bar n_5=-2\bar Z_2\bar Z_3,\qquad \bar n_6=-\bar Z_3^2.
\end{align*}
The rational bounds $\Upsilon^2>15/2$, $\bar w>109/100$, $t_1>63/100$,
$w_3>3/25$, $0<\bar Z_0,\bar Z_1<2$, $-1<\bar Z_2<-9/10$, and
$-1/25<\bar Z_3<0$ give
\[
 \bar n_3<-\frac{15}2\left(\frac{109}{200}+\frac{63}{5000}\right)
               +\frac4{25}+4=-\frac{11}{500}<0,\quad
 \bar n_4<-\frac{81}{100}+\frac4{25}<0,\quad \bar n_5,\bar n_6<0.
\]
Therefore $N'''<0$ on $[t_1,1]$. Direct substitution of
\eqref{eq:scalar-brackets} gives
\begin{equation}\label{eq:tail-endpoint-bounds}
\begin{gathered}
 N(t_1)>\frac{7429967}{625000000}>0,\qquad
 N'(t_1)>\frac{267782557}{5000000000}>0,\\
 N(1)>\Upsilon^2J_--[(1+b)J_++1]^2
       =\frac{11332136644853420879}{15625000000000000000000}>0.
\end{gathered}
\end{equation}
Here $N(t_1)=\Upsilon^2\bar{\mathcal U}t_1-\bar Z_0^2$ and
$N'(t_1)=\Upsilon^2(\bar{\mathcal U}+t_1\bar W)-2\bar Z_0\bar Z_1$, since
$\mathcal U'=W$. For these two bounds it suffices to use
$\Upsilon^2>7519/1000$, $\bar{\mathcal U}>5376/10000$, $\bar W>4037/10000$,
$\bar Z_0<16022/10000$, and $\bar Z_1<18491/10000$.
Equation~\eqref{eq:one-change-derivative} now proves $N\ge0$ on
$[t_1,1]$, and the preceding argument proves it on $[t_0,t_1]$.
For $T\le t_0$, $P_q(T)=-T(q-1/2)^2/b\le0$ directly.
\end{proof}

We also need intervals $[\tau,1]$ with $\tau\le t_0$.
Let
\begin{equation}\label{eq:boundary-early}
 \mathfrak k_0=\frac{77437}{500000},\qquad \mathfrak k=\Gamma e^{-s}-2b>\mathfrak k_0.
\end{equation}
For the last strict inequality, write
$p_5(x)=1-x+x^2/2-x^3/6+x^4/24-x^5/120$ for the right side of
\eqref{eq:exp-remainder}; then $\mathfrak k\ge\Gamma p_5(s)-2b$ and
$\Gamma p_5(s)-2b-\mathfrak k_0>3\cdot10^{-7}$.

On $[0,t_0]$, $\mathcal B$ has a closed form. By \eqref{eq:budget},
$\mathcal B'=\theta-\rho_1-s\mathcal B$, and $\theta'=s\theta+b\psi$ by
\eqref{eq:theta-identities}. Hence
\begin{equation}\label{eq:budget-second-derivative}
 \mathcal B''=s^2\mathcal B+b\psi+s\rho_1-\rho_1'
 \qquad\text{almost everywhere.}
\end{equation}
On $[0,t_0]$ we have $\lambda=0$ and $\kappa=b$, so $\rho_1=b$, and
$\psi=\rho_1$ because $z>3/4>t_0$; Lemma~\ref{lem:boundary-estimates}
proves $z>3/4$ from \eqref{eq:boundary-moment-bounds} alone, without
using the formula below. Thus
$\mathcal B''=s^2\mathcal B+b(b+s)=s^2\mathcal B+b$, with $\mathcal B(0)=0$
and $\mathcal B'(0)=\theta(0)-\rho_1(0)=\Gamma e^{-s}-2b=\mathfrak k$.
Solving this equation and using $\sinh(st)\ge st$,
$\cosh(st)-1\ge s^2t^2/2$, $\cosh(st)\ge1$, and $\mathfrak k>\mathfrak k_0$
gives
\begin{equation}\label{eq:boundary-head-formula}
 \mathcal B(t)=\mathfrak k\frac{\sinh(st)}s
          +b\frac{\cosh(st)-1}{s^2}
 \ge\mathfrak k_0t+\frac b2t^2,\qquad \mathcal B'(t)\ge\mathfrak k_0+bt.
\end{equation}

\begin{lemma}[{Intervals $[\tau,1]$ with $\tau\le t_0$}]
\label{lem:head-terminal}
For $0\le\tau\le t_0$ and $1/2\le q\le1$,
$P_q(1)-P_q(\tau)\le\mathcal B(\tau)$.
\end{lemma}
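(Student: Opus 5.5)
The plan is to make both sides explicit functions of $\tau$ and $q$, maximize the left side over $q\in[1/2,1]$, and reduce the claim to a one-variable inequality on $[0,t_0]$.

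\textbf{Closed forms.} Before $t_0$ we have $F_q=-(q-1/2)^2/b$, so $P_q(\tau)=-\tau(q-1/2)^2/b$ for $\tau\le t_0$. For $P_q(1)$ I would reuse the expansion from the proof of Lemma~\ref{lem:prefix-inequality} at $T=1$, with $\mathcal U(1)=J$ and $M_\beta(1)=\int_0^1w+\beta=1$:
\[
 bP_q(1)=-Jq^2+(sJ+1)q+b(J+1-\Gamma).
\]
Multiplying the claim by $b>0$, it therefore reads
\[
 \Phi_\tau(q):=-(J-\tau)q^2+(sJ+1-\tau)q+b(J+1-\Gamma)+\frac\tau4\le b\,\mathcal B(\tau)
\]
for $0\le\tau\le t_0$ and $1/2\le q\le1$. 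On this range $\mathcal B$ is given by \eqref{eq:boundary-head-formula}, and we know $\mathfrak k>\mathfrak k_0$.

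\textbf{Maximizing over $q$.} $\Phi_\tau$ is a concave quadratic in $q$, because $J-\tau\ge J_--t_0>0$. Its unconstrained maximizer is $q^*(\tau)=(sJ+1-\tau)/(2(J-\tau))$, and $q^*$ increases in $\tau$. It lies in $[1/2,1]$ for small $\tau$ but exceeds $1$ before $\tau=t_0$; rough values give $q^*(0)\approx0.88$ and $q^*(1/4)\approx1.15$. The crossover $\tau_c$ is determined by $q^*(\tau_c)=1$, that is, $2(J-\tau_c)=sJ+1-\tau_c$, so $\tau_c=(1+b)J-1$, which is about $0.142$. I would split $[0,t_0]$ at $\tau_c$.

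\textbf{Case $\tau\ge\tau_c$.} Here the constrained maximum is at $q=1$. Using $\Gamma=1+b+1/(4b)$,
\[
 \Phi_\tau(1)=1+b-b\Gamma+\frac\tau4=\frac34-b^2+\frac\tau4 .
\]
The target is then $\frac34-b^2+\frac\tau4\le b\bigl(\mathfrak k_0\tau+\tfrac b2\tau^2\bigr)$, using the lower bound in \eqref{eq:boundary-head-formula}. The difference of the two sides is a quadratic in $\tau$ with positive leading coefficient, so its sign on $[\tau_c,t_0]$ reduces to checking the value and the derivative at $\tau_c$. At $\tau=t_0$ the margin is roughly $0.0575-0.0539$, which is small but comfortable with rational bounds. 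If the crude bound $\mathfrak k_0\tau+b\tau^2/2$ turns out too weak near $\tau_c$, I would keep the next terms of $\sinh$ and $\cosh$.

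\textbf{Case $\tau\le\tau_c$.} Here the maximum is the interior value
\[
 \max_q\Phi_\tau=\frac{(sJ+1-\tau)^2}{4(J-\tau)}+b(J+1-\Gamma)+\frac\tau4 .
\]
At $\tau=0$ this equals $-N(1)/(4J)<0$, by the discriminant identity from the proof of Lemma~\ref{lem:prefix-inequality}. By the envelope theorem its $\tau$-derivative is $(q^*(\tau)-1/2)^2$, which is increasing. The right side has derivative at least $b(\mathfrak k_0+b\tau)$.

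\textbf{Main obstacle.} At $\tau=0$ the left derivative, about $0.146$, exceeds the right derivative, about $0.135$. So the initial slack $N(1)/(4J)\approx3\cdot10^{-4}$ is consumed before the right side's curvature $b^2\tau$ catches up. This case is the delicate one. I would multiply through by $4(J-\tau)>0$ to obtain a polynomial inequality of degree at most four in $\tau$ on $[0,\tau_c]$. I would then prove it by the sign test \eqref{eq:one-change-derivative} or by a direct Taylor bound, plugging in $J_-<J<J_+$ from \eqref{eq:J} together with exact rational evaluation at a few grid points. If the margin is too thin for the crude lower bound on $\mathcal B$, I would again use the exact $\sinh$ and $\cosh$ formula bounded through \eqref{eq:exp-remainder}.
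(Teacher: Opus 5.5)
Your reduction is correct, and it runs dual to the paper's. You fix $\tau$, maximize the concave quadratic $\Phi_\tau$ over $q$, and split $[0,t_0]$ at $\tau_c=(1+b)J-1\approx0.1436$, where the maximizer $q^*$ reaches $1$. The paper instead fixes $r=q-1/2$.

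\begin{itemize}
\item For $r\le11/30$ the claim is immediate for every $\tau\ge0$. Lemma~\ref{lem:prefix-inequality} gives $P_q(1)\le0$, and $b\mathfrak k_0>(11/30)^2$ makes the $\tau$-terms of the lower bound nonnegative.
\item For $r\ge11/30$ the paper minimizes the quadratic in $\tau$ over all of $\R$. This gives a polynomial $\mathcal Q(r)$ with $\mathcal Q'''<0$, handled by a Taylor bound at $r_\star=43/100$ and the sign test \eqref{eq:one-change-derivative}.
\end{itemize}

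Both arguments use only $\mathcal B(\tau)\ge\mathfrak k_0\tau+b\tau^2/2$ and bound the same two-variable function from below on the same rectangle. So your one-variable inequalities are true, and you need no further terms of $\sinh$ and $\cosh$. The paper's order produces a polynomial and settles much of the rectangle without computation. Yours identifies the worst $q$ exactly, but leaves a rational function of $\tau$ plus a boundary branch. That branch $q=1$ is exactly the paper's computation of $\mathcal Q(1/2)$.

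Two of your verification steps would not go through as written.

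\textbf{The case $\tau\ge\tau_c$.} The difference is $D(\tau)=\tfrac{b^2}{2}\tau^2-(\tfrac14-b\mathfrak k_0)\tau+b^2-\tfrac34$. Its vertex $\tau_v=(\tfrac14-b\mathfrak k_0)/b^2\approx0.1517$ lies inside $[\tau_c,t_0]$. Hence $D'(\tau_c)\approx-0.006<0$, and checking the value and derivative at $\tau_c$ does not settle the sign. What you need is the vertex value
\[
b^2-\tfrac34-(\tfrac14-b\mathfrak k_0)^2/(2b^2)=b\,\mathcal Q(1/2)\approx1.7\times10^{-6}.
\]
For the crude bound, this point near $\tau\approx0.152$, $q=1$ is the tight point of the whole lemma, not $t_0$.

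\textbf{The case $\tau\le\tau_c$.} Write $G(\tau)=b(\mathfrak k_0\tau+\tfrac b2\tau^2)-\max_q\Phi_\tau$ and let $p(\tau)=4(J-\tau)G(\tau)$ be the cleared cubic. It has $p'''=-12b^2<0$, but $p'(0)=-4G(0)+4JG'(0)<0$, so \eqref{eq:one-change-derivative} cannot be started at $0$. Numerically, $G$ behaves as follows:
\begin{itemize}
\item it starts at $G(0)=N(1)/(4J)\approx3.0\times10^{-4}$;
\item it falls to an interior minimum of about $2.2\times10^{-5}$ near $\tau\approx0.056$, where $q^*\approx0.92$ (the paper's region around $r_\star$);
\item it ends at $D(\tau_c)\approx2.7\times10^{-5}$.
\end{itemize}
You must split near that minimum. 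On the left piece $p''>0$, so use convexity or a second-order bound there. On the right piece, apply the sign test from a point where $p'>0$. With margins of order $10^{-5}$ and $10^{-6}$, every evaluation must be done in exact rational arithmetic.
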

\begin{proof}
Put $r=q-1/2\in[0,1/2]$ and
$\mathfrak a(r)=(1/2-r)[1+(1/2+r)/b]$. Since
$\int_0^1\lambda=1-\int_0^1\kappa=1-bJ$ and $\Lambda(q)=\cprof(q)(b+1-q)$,
\[
 P_q(1)=\Lambda(q)J+\cprof(q)(1-bJ)-\Gamma=\cprof(q)(1-q)J+\cprof(q)-\Gamma .
\]
With $q=1/2+r$, $\cprof(q)(1-q)=\mathfrak a(r)$ and
$\cprof(q)-\Gamma=(r+1/4)/b-b$. Hence, and by the value of $F_q$ before
$t_0$,
\[
 P_{1/2+r}(1)=\mathfrak a(r)J+\frac{r+1/4}b-b,\qquad
 P_q(\tau)=-\frac{\tau r^2}b.
\]
By \eqref{eq:boundary-head-formula}, the desired gap
$\mathcal B(\tau)-P_q(1)+P_q(\tau)$ is at least
$-P_q(1)+(\mathfrak k_0-r^2/b)\tau+b\tau^2/2$.
For $r\le11/30$ this is nonnegative by Lemma~\ref{lem:prefix-inequality} and
\[
 b\mathfrak k_0-(11/30)^2=\frac{516220879}{1125000000000}>0.
\]
For the remaining $r$, minimizing over all real $\tau$ gives the lower
bound
\[
 \mathcal Q(r)=b-\frac{r+1/4}b-\mathfrak a(r)J
          -\frac{(r^2-b\mathfrak k_0)^2}{2b^3}.
\]
Its derivatives satisfy
\[
 \mathcal Q'(r)=\frac{(b+2r)J-1}b-\frac{2r(r^2-b\mathfrak k_0)}{b^3},\quad
 \mathcal Q''(r)=\frac{2J}b-\frac{6r^2-2b\mathfrak k_0}{b^3},\quad
 \mathcal Q'''(r)=-\frac{12r}{b^3}<0.
\]
Set $r_\star=43/100$. Since $\mathcal Q$ decreases in $J$ (its
$J$-coefficient is $-\mathfrak a(r)\le0$) while $\mathcal Q'$ and
$\mathcal Q''$ increase in $J$, we use $J_+$ for the lower bound on
$\mathcal Q$, $J_-$ for the lower bounds on $\mathcal Q'$ and
$\mathcal Q''$, and $J_+$ for the upper bound on $\mathcal Q'$. This gives
\[
 \mathcal Q(r_\star)>\frac3{10^5},\qquad
 0<\mathcal Q'(r_\star)<\frac{31}{20000},\qquad
 \mathcal Q''(r_\star)>\frac{13}{100},\qquad
 \mathcal Q(1/2)>\frac1{600000}.
\]
Since $\mathfrak a(1/2)=0$, the last value is exact:
\[
 \mathcal Q(1/2)=b-\frac3{4b}-\frac1{2b}\Bigl(\frac1{4b}-\mathfrak k_0\Bigr)^2
 =\frac{40822380145732727}{20652958314427894000000}.
\]
The same rational reappears in the proof of
Lemma~\ref{lem:boundary-estimates}.
On $[11/30,r_\star]$, $\mathcal Q''\ge\mathcal Q''(r_\star)>13/100$. Thus,
writing $y=r_\star-r$,
\[
 \mathcal Q(r)>\frac3{10^5}-\frac{31}{20000}y+\frac{13}{200}y^2
 \ge\frac{2159}{104000000}>0,
\]
where the quadratic in $y$ is minimized at $y=31/2600$.
On $[r_\star,1/2]$, \eqref{eq:one-change-derivative} gives
\[
 \mathcal Q(r)\ge\min\{\mathcal Q(r_\star),\mathcal Q(1/2)\}>0 .
\]
\end{proof}

\subsection{The boundary comparison}\label{app:boundary-proof}
For intervals that start after $t_0$ we need lower bounds on
$\mathcal B(\tau)$; the next lemma provides them.

\begin{lemma}[Boundary estimates]\label{lem:boundary-estimates}
The crossing point $z$ satisfies $z>3/4>t_1$; the identities
\eqref{eq:theta-identities} hold. The function $\mathcal B$ is
nonnegative on $[0,1]$, and
\begin{equation}\label{eq:boundary-terminal}
 \mathcal B(t)\ge \mathcal I(t):=\int_t^1[c_1w(r)-\Gamma]\,dr
 \qquad(t_0\le t\le1).
\end{equation}
Moreover,
\begin{equation}\label{eq:boundary-ramp-floor}
 \mathcal B(t)\ge\mathcal B(t_0)>\frac{527511}{8000000}
                                    >\frac{21}{320}
 \qquad(t_0\le t\le t_1).
\end{equation}
\end{lemma}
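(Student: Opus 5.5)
The plan is to verify the assertions in the order they depend on each other. First the crossing point, then the identities \eqref{eq:theta-identities}, then the closed form and monotonicity of $\mathcal B$ up to $t_1$, and finally the terminal bound \eqref{eq:boundary-terminal}. Throughout, $\rho_1$ is rewritten without $\lambda$ and $\kappa$. From $\lambda=w+\kappa_0-\kappa$, $\kappa=bW$ and $(c_1-1)b=1$,
\[
 \rho_1=c_1\lambda+\kappa=c_1(w+\kappa_0)-W .
\]
So on $[t_0,1]$ we have $\rho_1=c_1w-W$, which is an explicit piecewise polynomial there (cubic on the ramp, quadratic on $[t_1,1]$). On $[0,t_0)$ we have $\rho_1=b$.

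\emph{Crossing point and identities.} To prove $z>3/4$ I will show $\int_0^{3/4}e^{-st}\rho_1(t)\,dt<1$; monotonicity of the integral in the upper limit and the definition of $z$ then give $z>3/4$. I will bound $e^{-st}$ from above by a Taylor polynomial of odd order, for instance $1-st+s^2t^2/2$. Against the polynomial $\rho_1$ this gives an exact rational integral over the three pieces $[0,t_0]$, $[t_0,t_1]$, $[t_1,3/4]$, to be compared with $1$. I expect this comparison to be tight but to close with enough Taylor terms. Once $z\in(3/4,1)$ is known, the identities are direct computations:
\[
 \theta_+'=s\theta_++b\rho_1,\qquad \theta_-'=s\theta_-.
\]
Hence $\theta=\theta_+$ on $[0,z]$, where $\psi=\rho_1$, and $\theta=\theta_-$ on $[z,1]$, where $\psi=0$. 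The two branches agree at $z$, so $\theta$ is absolutely continuous. The endpoint values are $\theta(0)=\theta_+(0)=\Gamma e^{-s}-b$ and $\theta(1)=\theta_-(1)=\Gamma$. Finally $\int_0^1 e^{-st}\psi\,dt=1$ by the definition of $z$.

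\emph{Ramp floor.} On $[0,t_0]$ the closed form \eqref{eq:boundary-head-formula} is available, and I evaluate it at $t_0$ with lower Taylor bounds for $\sinh$ and $\cosh$ and $\mathfrak k>\mathfrak k_0$. This should give $\mathcal B(t_0)>527511/8000000$. On $[t_0,t_1]\subset[0,z]$ we have $\psi=\rho_1$, and \eqref{eq:budget-second-derivative} becomes $\mathcal B''=s^2\mathcal B+\rho_1-\rho_1'$. The first-order relation is
\[
 \mathcal B'=\theta_+-\rho_1-s\mathcal B .
\]
\eqref{eq:boundary-head-formula} gives $\mathcal B'(t_0^-)\ge\mathfrak k_0+bt_0$. $\rho_1$ jumps at $t_0$ from $b$ to $c_1b-1=b$, so it is continuous there, and therefore $\mathcal B'$ is continuous at $t_0$ as well. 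I then show $\mathcal B'\ge0$ on the ramp by a one-sign argument. The ingredients are that $\theta_+$ is increasing, $\rho_1$ is an explicit cubic bounded above on the ramp, and $\mathcal B$ is bounded above there by crude bounds. This yields $\mathcal B(t)\ge\mathcal B(t_0)$ on $[t_0,t_1]$. Together with \eqref{eq:boundary-head-formula}, it gives $\mathcal B\ge0$ on $[0,t_1]$.

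\emph{Terminal comparison.} This is the main obstacle. Put $G=\mathcal B-\mathcal I$ on $[t_0,1]$. Then $G(1)=\mathcal B(1)$ and
\[
 G'=\theta-\rho_1-s\mathcal B+c_1w-\Gamma=\theta-\Gamma+W-s\mathcal B
\]
by the formula for $\rho_1$. On $[z,1]$ the term $\theta-\Gamma=\Gamma(e^{s(t-1)}-1)$ is small and negative, while $W$ is positive. I will first obtain a lower bound on $\mathcal B(1)$ by writing
\[
 \mathcal B(1)=e^{-s}\int_0^1(\theta-\rho_1)e^{sr}\,dr,
\]
in which the $\theta_+$ and $\theta_-$ parts integrate in closed form. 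Next I bound $\mathcal B$ above on $[t_1,1]$ by integrating $\mathcal B'\le\theta-\rho_1$. Then I integrate $G'$ backward from $1$ to obtain $G\ge0$, checking finitely many rational inequalities with \eqref{eq:exp-remainder} and the sign test \eqref{eq:one-change-derivative} separately on $[t_0,t_1]$, $[t_1,z]$ and $[z,1]$. I expect the delicate region to be near $t=1$, where both sides vanish. There, expanding to first order in $1-t$, the condition reduces to $G'(1)=W(1)-s\mathcal B(1)\le0$, that is, $\mathcal B(1)\ge0$. For the remaining part of nonnegativity on $[t_1,1]$: where $\mathcal I\ge0$ it follows from \eqref{eq:boundary-terminal}. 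Where $\mathcal I<0$, I instead continue the monotonicity argument of the ramp past $t_1$, which should cover the early portion of $[t_1,1]$.
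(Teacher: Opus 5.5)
\paragraph{Ramp floor.} Your handling of the crossing point and of the identities \eqref{eq:theta-identities} is fine. The comparison there is not even tight: $e^{-st}\le1$ already suffices, since $\int_0^{3/4}\rho_1<0.98$.

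The genuine gap is the ramp floor, because the claim $\mathcal B'\ge0$ on $[t_0,t_1]$ is false. On the ramp $\theta=\theta_+$, so
\[
 \theta(t_1)\le e^{st_1}\Bigl[\Gamma e^{-s}-b+b\int_0^{t_1}\rho_1\Bigr]
 \approx1.0857\,(1.0259+0.871\cdot0.743)\approx1.82,
\]
whereas $\rho_1(t_1)=c_1w(t_1)-W(t_1)\approx1.945$ by \eqref{eq:scalar-brackets}. Hence $\mathcal B'(t_1)=\theta(t_1)-\rho_1(t_1)-s\mathcal B(t_1)<-0.12$. In fact $\mathcal B''=s^2\mathcal B+\rho_1-\rho_1'<0$ on the ramp, and $\mathcal B$ peaks near $t\approx0.53$ and decreases afterwards. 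The one-sign argument you sketch therefore cannot succeed.

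The floor $\mathcal B\ge\mathcal B(t_0)$ is still true, but it needs another mechanism. The paper works with $G=\mathcal B-\mathcal I$:
\begin{itemize}
\item Before $z$, $G''=s^2\mathcal B+\mathcal P$ with $\mathcal P=\lambda/b-\varpi$.
\item $\mathcal P\le0$ on the ramp, but $\int_{t_0}^{t_1}\mathcal P=b^{-1}\int_{t_0}^{t_1}\lambda-bt_0$ with $\int_{t_0}^{t_1}\lambda>1/8$.
\item Starting from $G'(t_0)\ge\mathfrak k_0+bt_0-1/(4b)$, a first-zero argument gives $G'>\mathfrak k_0-1/(8b)>0$ on the ramp.
\item $\mathcal I$ is concave after $t_0$ with $\mathcal I(t_1)>\mathcal I(t_0)$, so $\mathcal I\ge\mathcal I(t_0)$ on the ramp.
\item Therefore $\mathcal B=G+\mathcal I\ge G(t_0)+\mathcal I(t_0)=\mathcal B(t_0)$.
\end{itemize}
Alternatively, concavity of $\mathcal B$ on the ramp reduces the floor to a quantitative proof that $\mathcal B(t_1)>\mathcal B(t_0)$.

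\paragraph{Terminal comparison.} This part also needs repair, because as written the inequality points the wrong way. You have $G(t)=G(1)-\int_t^1G'$ with $G'=\theta-\Gamma+W-s\mathcal B$. An upper bound on $\mathcal B$ therefore bounds $G'$ from below, and hence bounds $G(t)$ only from above. Backward integration would need lower bounds on $\mathcal B$, which are part of what is being proved.

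The thin margin is also at $t_0$, not at $1$: $G(t_0)=\mathcal B(t_0)-\mathcal I(t_0)\approx0.0659-0.0617$.

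The paper instead propagates forward from $G(t_0)>0$.
\begin{itemize}
\item On $[t_1,z]$, $\mathcal P=\lambda/b\ge0$ keeps $G$ increasing.
\item On $[z,1]$, $G''=s^2\mathcal B+s\rho_1-w\le s^2\Gamma+(sc_1-1)w<0$, using $\mathcal B\le\Gamma$ from $(e^{st}\mathcal B)'\le\Gamma e^{st}$.
\item So $G$ is concave on $[z,1]$ with positive endpoint values $G(z)$ and $G(1)=\mathcal B(1)$.
\end{itemize}

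Finally, $\mathcal I$ is concave on $[t_0,1]$ with $\mathcal I(t_0)>0=\mathcal I(1)$, so $\mathcal I\ge0$ throughout that interval. Your case $\mathcal I<0$ never arises, and nonnegativity of $\mathcal B$ follows directly from $G\ge0$. Continuing the ramp monotonicity past $t_1$ would fail in any case, since $\mathcal B'(t_1)<0$.
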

\begin{proof}
Since $\int(\lambda+\kappa)=1$ and $\kappa=bW$,
\[
 \int_0^1\rho_1=c_1-\int_0^1W\ge\frac1b,\qquad
 \int_0^1e^{-st}\rho_1(t)\,dt\ge e^{-s}/b>1.
\]
The last inequality follows from $e^{-s}>1-s=b$.
Since $\rho_1>0$ almost everywhere, there is a unique $z\in(0,1)$
with $\int_0^ze^{-st}\rho_1(t)dt=1$.
Consequently $\psi=\rho_1\mathbf1_{[0,z]}$ satisfies the normalization
in \eqref{eq:theta-identities}. The formula for $\theta$ gives
$\theta'=s\theta+b\psi$ and $\theta(1)=\Gamma$.
Also $\theta(0)=\Gamma e^{-s}-b=\mathfrak k+b>0$, so
$0<\theta\le\Gamma$.

We record five polynomial moment bounds. After $t_0$,
$\rho_1=c_1(w-bW)+bW=c_1w-W$ because $bc_1=1+b$. With the ramp
formulas for $w$ and $W$ and $x=t-t_0$, this gives
$\rho_1=b+(c_1w_1+b+\varpi_0)x+\bigl(c_1w_2+\tfrac{w_1+\varpi_1}2\bigr)x^2+\tfrac{w_2}3x^3$,
that is,
\[
 \rho_1=b+\frac{437340043016287}{225853733000000}x
   +\frac{144124767636938969}{68352373755120000}x^2+\frac{11}{120}x^3,
\]
and on the final interval, with $y=t-t_1$,
$\rho_1=(c_1\bar w-\bar W)+(c_1w_3+\bar w)y+\tfrac{w_3}2y^2$, that is,
\[
 \rho_1=\frac{126167129550332371}{64862265000000000}
   +\frac{1229021599447853}{908071710000000}y+\frac{121}{2000}y^2.
\]
These give coefficientwise polynomial bounds $\rho_-\le\rho_1\le\rho_+$:
\[
\begin{array}{c|c|c}
 \text{interval}&100\rho_-&100\rho_+\\\hline
 \lbrack0,t_0\rbrack&87&88\\
 \lbrack t_0,t_1\rbrack&87+193x+210x^2+9x^3&88+194x+211x^2+10x^3\\
 \lbrack t_1,1\rbrack&194+135y+6y^2&195+136y+7y^2
\end{array}
\]
Direct integration gives
\begin{equation}\label{eq:boundary-moment-bounds}
\begin{array}{c|c|c}
 \text{integral}&\text{exact value}&\text{bound}\\\hline
 \int_0^{3/4}\rho_+&457537229543/468750000000&<1\\
 \int_0^1\rho_+&482885476987/312500000000&<31/20\\
 \int_0^1(1-t)\rho_-&14208703332849601/23437500000000000&>3/5\\
 \int_0^1(1-t)^2\rho_+&167665405177023883/468750000000000000&<9/25\\
 \int_0^{3/4}(1-t)\rho_-&6296417960370113/11718750000000000&>53/100
\end{array}
\end{equation}
The first bound in \eqref{eq:boundary-moment-bounds} proves
$\int_0^{3/4}e^{-st}\rho_1(t)dt<1$, hence $z>3/4$.
In particular $\psi=\rho_1$ on the whole ramp.

Put $G(t)=\mathcal B(t)-\mathcal I(t)$, extending $\mathcal I$ to earlier times
by $\mathcal I(t)=\int_t^1F_1$.
Before $t_0$, $F_1=\Lambda(1)-\Gamma=-1/(4b)$, and
$\mathcal I(t_0)=c_1(1-\beta)-\tfrac34\Gamma=\tfrac{13}{16b}-b$. Hence, on
$[0,t_0]$, $\mathcal I(t)=\mathcal I(t_0)-\frac{t_0-t}{4b}=\frac{3+t}{4b}-b$; in
particular $\mathcal I'=1/(4b)$ there, so $G'=\mathcal B'-1/(4b)$.
The lower bound in \eqref{eq:boundary-head-formula} and minimization of
the resulting quadratic over all real $t$ give
\[
 G(t)\ge b-\frac3{4b}-\Bigl(\frac1{4b}-\mathfrak k_0\Bigr)t+\frac b2t^2
 \ge b-\frac3{4b}-\frac1{2b}\Bigl(\frac1{4b}-\mathfrak k_0\Bigr)^2
 =\frac{40822380145732727}{20652958314427894000000}>0.
\]
This is the value $\mathcal Q(1/2)$ from the proof of
Lemma~\ref{lem:head-terminal}.
In particular $G'(t_0)\ge\mathfrak k_0+bt_0-1/(4b)$ and
$\mathcal B(t_0)>\mathfrak k_0/4+b/32=527511/8000000$.

For $t\ge t_0$, $F_1=c_1w-\Gamma$ and $w$ increases, so $\mathcal I$
is concave, with
\[
 \mathcal I(t_0)=\frac{258501987}{4187750000}>0,\qquad \mathcal I(1)=0.
\]
Thus $\mathcal I\ge0$ there. After $t_0$, $\rho_1=c_1(w-bW)+bW=c_1w-W$ because
$bc_1=1+b$, so $\rho_1'=c_1w'+w+\varpi$, while $\mathcal I''=-F_1'=-c_1w'$. By
\eqref{eq:budget-second-derivative},
\[
 G''=s^2\mathcal B+b\psi+s\rho_1-w-\varpi
 \qquad\text{for almost every }t>t_0.
\]
Before $z$, $\psi=\rho_1$ and $b+s=1$; for almost every $t\in(t_0,z)$
this gives
\begin{equation}\label{eq:E1-pre-switch}
 G''=s^2\mathcal B+\mathcal P,\qquad \mathcal P=\frac wb-W-\varpi
                                  =\frac\lambda b-\varpi.
\end{equation}
On $(t_0,t_1)$, $\mathcal P''=2w_2/b+w'+\varpi'>0$,
$\mathcal P(t_0+)=-\varpi_0<0$, and $\mathcal P(t_1-)<0$.
For the latter sign use $w(t_1)<547/500$, $b>87/100$,
$W(t_1)>2/5$, and $\varpi(t_1-)>21/20$, yielding
$\mathcal P(t_1-)<63/50-2/5-21/20=-19/100$; the bound on $\varpi(t_1-)$
follows from $\varpi_0>3/50$ and $\varpi_1>64/25$.
Convexity therefore implies $\mathcal P\le0$ on the ramp.
Furthermore $\lambda(t_0)=0$ and, using $w'\ge w_1$ and
$w'+\varpi'\ge w_1+\varpi_1$ on the ramp,
\[
 \lambda'(t_0+)=w_1+b(b+\varpi_0)>\frac{127}{100},\qquad
 \lambda''=2w_2+b(w'+\varpi')>\frac{31}{10}.
\]
Since $t_1-t_0=97/250$, integration gives
\[
 \int_{t_0}^{t_1}\lambda>
 \frac{127}{100}\frac{(t_1-t_0)^2}2+
 \frac{31}{10}\frac{(t_1-t_0)^3}6
 =\frac{29478397}{234375000}>\frac18.
\]
Starting from $G(t_0)>0$, suppose $G$ first vanishes on $[t_0,t_1]$.
Before that point $\mathcal B=G+\mathcal I\ge0$, so $G''\ge\mathcal P$.
Since $\mathcal P\le0$, every preceding $t$ satisfies the bound below;
the equality uses $\int_{t_0}^{t_1}\mathcal P=\frac1b\int_{t_0}^{t_1}\lambda
-\int_{t_0}^{t_1}\varpi$ and $\int_{t_0}^{t_1}\varpi=\beta=bt_0$:
\begin{align*}
 G'(t)&\ge\mathfrak k_0+bt_0-\frac1{4b}+\int_{t_0}^{t_1}\mathcal P
 =\mathfrak k_0+\frac{\int_{t_0}^{t_1}\lambda-1/4}{b}
 >\mathfrak k_0-\frac1{8b}>0.
\end{align*}
This contradicts a first zero. Thus $G,G'>0$ on the ramp.
On $(t_1,z)$, $\varpi=0$ and $\mathcal P=\lambda/b\ge0$, so $G''\ge0$
as long as $G\ge0$. Thus $G,G'>0$ on $[t_1,z]$, using their
continuity at the endpoints. Also
\[
 \mathcal I(t_1)=\frac{13035083767568461}{151345285000000000}>\frac2{25},\qquad
 \mathcal I(t_0)<\frac{31}{500}<\frac2{25}.
\]
A concave function on $[t_0,t_1]$ is at least its smaller endpoint
value, so $\mathcal I(t)\ge \mathcal I(t_0)$ there; increasing $G$
then proves \eqref{eq:boundary-ramp-floor}.

We next show that $\mathcal B(1)>0$. Since
$\theta(t)=\theta(0)e^{st}+b\int_0^te^{s(t-r)}\psi(r)\,dr$, substituting
into \eqref{eq:budget} and exchanging the order of integration gives
\[
 \mathcal B(t)=\theta(0)\frac{\sinh(st)}s
 -\int_0^te^{-s(t-r)}\rho_1(r)\,dr
 +b\int_0^t\frac{\sinh(s(t-r))}s\psi(r)\,dr.
\]
Use $\theta(0)=\mathfrak k+b>41/40$, $\sinh x\ge x$,
$e^{-x}\le1-x+x^2/2$ for $x\ge0$, and
$\psi\ge\rho_1\mathbf1_{[0,3/4]}$.
Termwise, $\theta(0)\sinh(s)/s\ge\theta(0)>41/40$;
$\int_0^1e^{-s(1-r)}\rho_1\le\int_0^1\rho_+-s\int_0^1(1-r)\rho_-+\frac{s^2}2\int_0^1(1-r)^2\rho_+$;
and $b\int_0^1\frac{\sinh(s(1-r))}s\psi\ge b\int_0^{3/4}(1-r)\rho_-$.
The moment bounds \eqref{eq:boundary-moment-bounds} and $b=1-s$ imply
\[
 \mathcal B(1)>\frac{41}{40}-\frac{31}{20}
       +\frac{3s}{5}-\frac{9s^2}{50}+\frac{53b}{100}
 =\frac1{200}+\frac{7s}{100}-\frac{9s^2}{50}>\frac1{200}>0,
\]
where the last step uses $s<7/18$.
Finally, $\mathcal B'+s\mathcal B=\theta-\rho_1$ and
$\theta\le\Gamma$, $\rho_1\ge0$ give $(e^{st}\mathcal B)'\le\Gamma e^{st}$,
so $\mathcal B(t)\le\Gamma(1-e^{-st})/s\le\Gamma t\le\Gamma$. After $z$,
$\psi=0$, and $\varpi=0$ because $z>t_1$; the formula for $G''$ above and
$\rho_1\le c_1w$ give
\[
 G''=s^2\mathcal B+s\rho_1-w
 \le s^2\Gamma+(sc_1-1)w<\frac9{100}-\frac{11}{20}\frac45
 =-\frac7{20}<0,
\]
using $s<1/5$, $\Gamma,c_1<9/4$, and $w\ge b>4/5$.
Thus $G$ is concave on $[z,1]$ with positive endpoint values
$G(z)>0$ and $G(1)=\mathcal B(1)>0$.
This proves $G>0$ throughout. Together with $\mathcal I\ge0$ after $t_0$
and \eqref{eq:boundary-head-formula}, it proves
$\mathcal B\ge0$ and \eqref{eq:boundary-terminal}.
\end{proof}

\subsection{Bounds for constant profiles}\label{app:all-intervals}
Fix $q_0=157/200$, so that $1-q_0=43/200$ and $q_0-1/2=57/200$. The
next lemma collects four properties of the constant-profile integrands
$F_q$ in \eqref{eq:Fq-definition}. Section~\ref{app:complete} combines
them with the earlier lemmas.

\begin{lemma}[Constant-profile bounds]\label{lem:terminal-reduction}
Let $q\in[1/2,1]$.
\begin{enumerate}[label=(\roman*),leftmargin=*]
\item If $q\ge q_0$, then $F_q$ changes sign exactly once on $[0,1]$,
from negative to positive. Consequently, for $0\le\tau\le T\le1$,
\[
 \int_\tau^TF_q\le\max\Bigl\{0,\int_\tau^1F_q\Bigr\}.
\]
\item If $q\le q_0$, then $F_q\le7/80$ on $[0,1]$.
\item If $t_0\le\tau\le1$, then
$\int_\tau^1F_q\le\int_\tau^1F_1=\mathcal I(\tau)$, with $\mathcal I$ as in
\eqref{eq:boundary-terminal}.
\item If $t_1\le t\le1$, then $F_q(t)\le F_1(t)$ and $F_1(t)>0$.
\end{enumerate}
\end{lemma}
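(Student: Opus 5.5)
The plan is to work with the explicit form of $F_q$ on each piece. Before $t_0$, $F_q=-(q-1/2)^2/b$. After $t_0$, \eqref{eq:Fq-after-t0} gives $F_q=\cprof(q)\{w+(1-q)W\}-\Gamma$. Every part of the lemma reduces to sign or monotonicity statements about this expression. The needed properties of $w$ and $W$ are already recorded in Section~\ref{app:parameters}: $w$ is increasing and piecewise polynomial, $W$ is decreasing, and the rational brackets \eqref{eq:scalar-brackets} are available.

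\emph{Part (iv).} On $[t_1,1]$, compute
\[
 F_1-F_q=c_1w-\cprof(q)w-\cprof(q)(1-q)W=\frac{(1-q)}b\bigl(w-(b+q)W\bigr).
\]
Since $W\le W(t_1)<2/5$, $w\ge w(t_1)>1$, and $b+q\le 1+b<2$, the bracket is positive, so $F_q\le F_1$. Also $F_1=c_1w-\Gamma\ge c_1w(t_1)-\Gamma$, and this is positive by the rational values.

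\emph{Part (iii).} Write $F_1-F_q=\frac{1-q}{b}(w-(b+q)W)$ after $t_0$ and integrate over $[\tau,1]$. Then
\[
 \int_\tau^1 w-(b+q)\int_\tau^1 W=\int_\tau^1 w-(b+q)\int_\tau^1(r-\tau)(w+\varpi)(r)\,dr.
\]
A cleaner route is to show that $w-(1+b)W$ is negative early and positive late, i.e.\ single-crossing. Then $\int_\tau^1(w-(b+q)W)\ge\int_\tau^1(w-(1+b)W)$, which is minimized at its crossing or at $\tau=t_0$. I would check the endpoint $\tau=t_0$ using $\int_{t_0}^1 w=1-\beta$ and $\int_{t_0}^1W=J-t_0$, namely $1-\beta-(1+b)(J_+-t_0)>0$. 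The crossing argument gives the rest, because the tail integral of a function that is negative-then-positive, taken from $\tau$ onward, is at least $\min\{\text{value at }t_0,\,0\text{-limit}\}$. Since $\int_\tau^1$ tends to $0$ at $\tau=1$, this gives nonnegativity on the whole range.

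\emph{Part (i).} For $q\ge q_0$, $F_q<0$ before $t_0$ because $q>1/2$. After $t_0$, I would show that $F_q$ is strictly increasing on $(t_0,1]$. Its derivative is $\cprof(q)\{w'-(1-q)(w+\varpi)\}$, and on the ramp $w'\ge w_1>0.46$ while $(1-q)(w+\varpi)\le\frac{43}{200}(w(t_1)+\varpi(t_1-))$. This step is tight: the bound $\varpi(t_1-)\approx 1.06$ makes the right side about $0.46$, so I expect to need the exact quadratic forms and perhaps to compare $w_1+2w_2x$ with the linear right side at both ramp endpoints. On $[t_1,1]$ the derivative is $w_3-(1-q)w\ge 0.121-0.215\cdot1.14<0$, so monotonicity fails there. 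Instead I would use (iv) directly only for $q$ near one. In general I would argue that $F_q(t_1)>0$ and that $F_q$ stays positive on $[t_1,1]$, since $w$ grows and $W$ shrinks, giving $F_q\ge\cprof(q)w(t_1)-\Gamma>0$ for $q\ge q_0$. Thus the sign change occurs in the ramp, and the integral consequence follows: an interval integral of a negative-then-positive function is at most $\max\{0,\text{tail integral}\}$.

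\emph{Part (ii).} For $q\le q_0$, bound $F_q\le\cprof(q)\{w+(1-q)W\}-\Gamma$. For fixed $t$ this is a concave quadratic in $q$, and $w+(1-q)W$ is at most $\max_t$ of $w+(1-q)W$. I would bound $w\le w(1)$ and $W\le1$ crudely at first. Since $\cprof$ is increasing and $(1-q)$ is decreasing, I would maximize the explicit quadratic $\cprof(q)(w(1)+(1-q)W(t))$ separately on the ramp and on the tail, where $W$ is small. On the tail, $\cprof(q_0)(w(1)+\frac{43}{200}W(t_1))-\Gamma$ needs to fall below $7/80$. The main obstacle overall is this tightness together with the ramp-monotonicity estimate in (i), which will likely require the exact rational polynomial evaluations rather than crude bounds.
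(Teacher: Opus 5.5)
Your parts (iii) and (iv) work. Parts (i) and (ii) each rest on an estimate that is false. In both, the missing idea is to evaluate $w$ and $W$ at the same time $t$ and use the shape of $F_q$ in $t$.

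\textbf{Part (i).} The ramp step is fine and closes with crude brackets: $w'\ge w_1>233/500$ and $w+\varpi\le w(t_1)+\varpi(t_1-)<431/200$ leave the margin $107/40000$. The tail step fails. Your bound $F_q\ge\cprof(q)w(t_1)-\Gamma$ is negative near $q_0$: $\cprof(q_0)w(t_1)-\Gamma\approx1.9012\cdot1.0935-2.1581\approx-0.079$, and it stays negative up to $q\approx0.848$. Near $t_1$, positivity of $F_q$ comes from the term $(1-q)W(t_1)$ you discarded. At $t=1$, where $W=0$, it comes only from $w(1)$, and $F_{q_0}(1)=\cprof(q_0)w(1)-\Gamma\approx0.004$. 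So you have established neither $F_q(t_1)>0$ (hence not that the crossing lies in the ramp) nor $F_q>0$ on $[t_1,1]$. Part (iv) cannot help, since it only bounds $F_q$ from above.

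The fix is concavity. On $[t_1,1]$, $w$ is linear and $W''=-w_3$, so $F_q''=-\cprof(q)(1-q)w_3\le0$, and it suffices to check the endpoints. At $t=1$, $F_q(1)=\cprof(q)w(1)-\Gamma\ge\cprof(q_0)\tfrac{1137}{1000}-\Gamma>0$. At $t_1$, the paper uses two facts. First, $F_{q_0}$ decreases on the tail, so $F_{q_0}(t_1)>F_{q_0}(1)>0$. Second, $\partial_qF_q(t_1)=[\lambda(t_1)-(2q-1)W(t_1)]/b\ge[\lambda(t_1)-W(t_1)]/b>0$.

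\textbf{Part (ii).} Decoupled bounds cannot reach $7/80$. Your tail quantity is $\cprof(q_0)\bigl(w(1)+\frac{43}{200}W(t_1)\bigr)-\Gamma\approx0.169$, nearly twice $7/80$, and using $W\le1$ on the ramp is worse. The true maximum is about $F_{q_0}(t_1)\approx0.0860$, within $0.0015$ of $7/80$, so you must locate the maximizing time. The paper does this in three steps:
\begin{enumerate}
\item The derivative computations from (i) show that $F_{q_0}$ increases on the ramp and decreases on the tail, so $F_{q_0}\le F_{q_0}(t_1)<7/80$.
\item For $t\ge1/2$, monotonicity of $\lambda/W$ and the check $\lambda(1/2)>\frac{57}{100}W(1/2)$ give $\partial_qF_q=[\lambda-(2q-1)W]/b\ge0$ for $q\le q_0$. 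Hence $F_q\le F_{q_0}$ there.
\item For $t\le1/2$, $F_q$ is not monotone in $q$ (at $t_0$, $\lambda=0$). There the exact interpolation $F_q=(1-\xi)F_{1/2}+\xi F_{q_0}+\frac Wb(q-\frac12)(q_0-q)$, with $F_{1/2}\le0$ and $F_{q_0}(t)\le\max\{0,F_{q_0}(1/2)\}$, gives a bound below $1/16$.
\end{enumerate}

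\textbf{Minor points.} In (iv), $W(t_1)\approx0.4038>2/5$, so use $W(t_1)<101/250$. The bracket $w-(b+q)W$ is still positive because $w\ge w(t_1)>1.093$. In (iii), $w-(1+b)W=\lambda-W$ is increasing after $t_0$, so $\tau\mapsto\int_\tau^1(\lambda-W)$ is concave. It is maximized, not minimized, at the crossing. Its endpoint values $1+t_0-(1+b)J>0$ and $0$ give the claim. This is a valid and slightly simpler substitute for the paper's argument via $W$-weighted averages of the increasing ratio $\lambda/W$.
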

\begin{proof}
Besides \eqref{eq:scalar-brackets}, direct substitution gives
\begin{equation}\label{eq:terminal-mass-comparisons}
\begin{gathered}
 w(1/2)=\frac{162590879617}{161796000000}\in(1,101/100),\quad
 W(1/2)=\frac{84149166489889}{125553696000000}<\frac{27}{40},\\
 \frac{1093}{1000}<w(t_1)<\frac{547}{500},\quad
 W(t_1)<\frac{101}{250},\quad
 \varpi(t_1-)=\frac{12848017}{12125000}<\frac{1061}{1000},\quad
 w(1)>\frac{1137}{1000}.
\end{gathered}
\end{equation}
Also $w_1>233/500$,
$w_1-(b+\varpi_0)/2=-10610093/40449000000<0$, and $w_1+\varpi_1>3>4w_2$.

For $q\ge q_0$, \eqref{eq:Fq-after-t0} gives on the ramp
\[
 \frac{F_q'}{\cprof(q)}=w'-(1-q)(w+\varpi)
 >\frac{233}{500}-\frac{43}{200}\frac{431}{200}
 =\frac{107}{40000}>0,
\]
using $w'\ge w_1>233/500$ and
$w+\varpi\le w(t_1)+\varpi(t_1-)<547/500+1061/1000=431/200$ on the ramp.
On $[t_1,1]$, $F_{q_0}'/\cprof(q_0)=w_3-(1-q_0)w\le w_3-(1-q_0)b<0$,
while
\[
 F_{q_0}(1)>\cprof(q_0)\frac{1137}{1000}-\Gamma
             =\frac{196721081}{54440750000}>0.
\]
Furthermore
\[
 \lambda(t_1)-W(t_1)=w(t_1)-(b+1)W(t_1)
 >\frac{1093}{1000}-(b+1)\frac{101}{250}
 =\frac{21068437}{62500000}>0.
\]
Since $\partial_q\Lambda(q)=(1-2q)/b$, we have
$\partial F_q(t)/\partial q=[\lambda(t)-(2q-1)W(t)]/b\ge[\lambda(t)-W(t)]/b$,
which is positive at $t=t_1$ and equals $\lambda(1)/b\ge0$ at $t=1$. As
$F_{q_0}$ decreases on $[t_1,1]$ and $F_{q_0}(1)>0$, also
$F_{q_0}(t_1)>0$. Hence $F_q(t_1)>0$ and $F_q(1)>0$ for $q\ge q_0$. Since
$F_q''=-\cprof(q)(1-q)w_3\le0$ on $[t_1,1]$, it is positive throughout
that interval. On the ramp it increases from the negative value
$F_q(t_0)=-(q-1/2)^2/b$, and before $t_0$ it equals this negative
constant. Hence it changes sign exactly once, from negative to positive.
If $T$ lies before the sign change, then $\int_\tau^TF_q\le0$; otherwise
$F_q>0$ on $[T,1]$, so $\int_\tau^TF_q\le\int_\tau^1F_q$. Thus
\begin{equation}\label{eq:large-profile-endpoints}
 \int_\tau^TF_q\le\max\left\{0,\int_\tau^1F_q\right\}
 \qquad(q\ge q_0).
\end{equation}
This proves (i). With $q=1$, it also shows that $F_1>0$ on $[t_1,1]$.

For $1/2\le q\le q_0$ we claim the uniform pointwise bound
\begin{equation}\label{eq:small-profile-bound}
 F_q(t)\le\frac7{80}\qquad(0\le t\le1).
\end{equation}
First $F_{1/2}\le0$: it vanishes before $t_0$, has negative right
derivative at $t_0$ because $w_1-(b+\varpi_0)/2<0$, and on the ramp
$F_{1/2}''/\cprof(1/2)=2w_2-(w_1+\varpi_1+2w_2x)/2<0$.
After $t_1$, $F_{1/2}'=\cprof(1/2)(w_3-w/2)<0$.
The preceding monotonicity of $F_{q_0}$ gives
\[
 F_{q_0}(t)\le F_{q_0}(t_1)
 <\cprof(q_0)\left(\frac{547}{500}+(1-q_0)\frac{101}{250}\right)-\Gamma
 =\frac{2368186813}{27220375000}<\frac7{80}.
\]
For $t\ge1/2$, the increasing ratio $\lambda/W$ and
\[
 \lambda(1/2)-\frac{57}{100}W(1/2)
 >1-\left(b+\frac{57}{100}\right)\frac{27}{40}
 =\frac{272899}{10^7}>0
\]
imply $\lambda(t)\ge\frac{57}{100}W(t)$, so
$\partial F_q/\partial q=[\lambda-(2q-1)W]/b\ge0$ for $q\le q_0$, because
$2q-1\le57/100$. Hence $F_q(t)\le F_{q_0}(t)\le F_{q_0}(t_1)<7/80$ for
$t\ge1/2$, since $F_{q_0}$ increases on the ramp and decreases after
$t_1$. For $t\le1/2$, put $\xi=(q-1/2)/(q_0-1/2)\in[0,1]$. Since $F_q$
is quadratic in $q$ with leading coefficient $-W/b$, interpolation
between $q=1/2$ and $q=q_0$ gives the exact identity
\[
 F_q=(1-\xi)F_{1/2}+\xi F_{q_0}
                   +\frac Wb(q-1/2)(q_0-q).
\]
Here $F_{1/2}\le0$. For $t_0\le t\le1/2$,
$\xi F_{q_0}(t)\le\max\{0,F_{q_0}(1/2)\}$ because $F_{q_0}$ increases on
the ramp. Also $W\le1$ and $(q-1/2)(q_0-q)\le(57/400)^2$. Hence, using
\eqref{eq:terminal-mass-comparisons},
\[
 F_q(t)<\cprof(q_0)\left(\frac{101}{100}+(1-q_0)\frac{27}{40}\right)
       -\Gamma+\frac1b\left(\frac{57}{400}\right)^2
 =\frac{1909633207}{31109000000}<\frac1{16}<\frac7{80}.
\]
Before $t_0$, $F_q\le0$ directly. This proves
\eqref{eq:small-profile-bound}, which is (ii).

We next compare the integrals $\int_\tau^1F_q$ for $\tau\ge t_0$. Put
$\mathcal W_\tau=\int_\tau^1W$ and $\Omega_\tau=\int_\tau^1\lambda$.
The mass identities and $J<5/8$ give
\[
 \Omega_{t_0}-\mathcal W_{t_0}=1+t_0-(1+b)J
 >\frac54-(1+b)\frac58=\frac58(1-b)>0.
\]
Since $\lambda/W$ increases, its $W$-weighted terminal average
increases as $\tau$ increases. Therefore $\Omega_\tau\ge\mathcal W_\tau$ for
$\tau\ge t_0$, and
\begin{equation}\label{eq:terminal-profile-one}
 \int_\tau^1(F_q-F_1)
 =\frac{1-q}{b}(q\mathcal W_\tau-\Omega_\tau)\le0.
\end{equation}
Here $\Lambda(q)-\Lambda(1)=q(1-q)/b$ and $\cprof(q)-\cprof(1)=-(1-q)/b$,
and the sign follows from $q\mathcal W_\tau\le\mathcal W_\tau\le\Omega_\tau$.
After $t_0$, $F_1=\cprof(1)w-\Gamma=c_1w-\Gamma$, so
$\int_\tau^1F_1=\mathcal I(\tau)$ for $\tau\ge t_0$. This proves (iii).

Finally, let $t_1\le t<1$. Since $\lambda/W$ increases and
$\lambda(t_1)>W(t_1)$, we have $\lambda(t)>W(t)$. Hence
\[
 \frac{\partial F_q(t)}{\partial q}
 =\frac{\lambda(t)-(2q-1)W(t)}b
 \ge\frac{\lambda(t)-W(t)}b>0,
\]
so $F_q(t)\le F_1(t)$. At $t=1$, $W(1)=0$ and the derivative equals
$\lambda(1)/b\ge0$. Together with $F_1>0$ on $[t_1,1]$ from the proof of
(i), this proves (iv).
\end{proof}

\subsection{Completing the proof of Lemma~\ref{lem:scalar}}\label{app:complete}
Lemma~\ref{lem:scalar} has three parts: the mass identities, the
properties of $\theta_\pm$, and the scalar
inequality~\eqref{eq:scalar}. The preceding sections prove each of
them.
\begin{proof}[Proof of Lemma~\ref{lem:scalar}]
Section~\ref{app:parameters} verifies \eqref{eq:mass-identities} for the
weights in \eqref{eq:beta-c}--\eqref{eq:mixing}; this is the first part
of the lemma.

For the second part, Lemma~\ref{lem:boundary-estimates} gives a unique
$z\in(0,1)$ with $\int_0^ze^{-st}\rho_1(t)\,dt=1$, shows that $z>3/4$,
and proves \eqref{eq:theta-identities}. The two functions differ by
\[
 \theta_+(t)-\theta_-(t)
 =be^{st}\Bigl[\int_0^t\rho_1(r)e^{-sr}\,dr-1\Bigr].
\]
The bracket is strictly increasing in $t$, because $\rho_1>0$ almost
everywhere, and it vanishes at $t=z$. Hence $\theta_+<\theta_-$ before
$z$ and $\theta_+>\theta_-$ after $z$, so they cross exactly
once, at $z$. Both are absolutely continuous, and therefore
so is their minimum $\theta$.

It remains to prove \eqref{eq:scalar}. Fix $0\le\tau\le T\le1$ and a
nonincreasing $q:[\tau,T]\to[1/2,1]$. If $\tau=T$, the left side of
\eqref{eq:scalar} is zero, and $\mathcal B(\tau)\ge0$ by
Lemma~\ref{lem:boundary-estimates}. Let $\tau<T$.

We first reduce to a constant profile. Let $\overline q$ be the
$W$-weighted average of $q$ on $[\tau,T]$, as in
Lemma~\ref{lem:ordered-scalar-reduction}; it lies in $[1/2,1]$. With
$\mathcal U_{\tau,T}=\int_\tau^TW$ and $L_{\tau,T}=\int_\tau^T\lambda$, that lemma and the
identity $\Lambda(\overline q)=\cprof(\overline q)(b+1-\overline q)$ give
\[
 \int_\tau^T\bigl\{\Lambda(q(t))W(t)+\cprof(q(t))\lambda(t)-\Gamma\bigr\}dt
 \le\Lambda(\overline q)\,\mathcal U_{\tau,T}+\cprof(\overline q)L_{\tau,T}-\Gamma(T-\tau)
 =\int_\tau^TF_{\overline q}(t)\,dt .
\]
It therefore suffices to prove
\begin{equation}\label{eq:constant-profile-goal}
 \int_\tau^TF_q\le\mathcal B(\tau)
 \qquad\text{for every constant }q\in[1/2,1].
\end{equation}
Below we use $\int_\tau^TF_q=P_q(T)-P_q(\tau)$ and consider five cases
according to $\tau$ and $q$.

Suppose first that $\tau\le t_0$ and $q\le q_0$. By
Lemma~\ref{lem:prefix-inequality}, $P_q(T)\le0$. Since $\tau\le t_0$,
$P_q(\tau)=-\tau(q-1/2)^2/b$. As $(q-1/2)^2\le(57/200)^2<b\mathfrak k_0$,
the lower bound in \eqref{eq:boundary-head-formula} gives
\[
 \int_\tau^TF_q\le\frac{(q-1/2)^2}{b}\tau\le\mathfrak k_0\tau
 \le\mathcal B(\tau).
\]

Next suppose that $\tau\le t_0$ and $q\ge q_0$. By
Lemma~\ref{lem:terminal-reduction}(i), Lemma~\ref{lem:head-terminal},
and $\mathcal B\ge0$,
\[
 \int_\tau^TF_q\le\max\{0,P_q(1)-P_q(\tau)\}\le\max\{0,\mathcal B(\tau)\}
 =\mathcal B(\tau).
\]

Now suppose that $t_0\le\tau\le t_1$ and $q\le q_0$. By
Lemma~\ref{lem:terminal-reduction}(ii) and $T-\tau\le1-t_0=3/4$,
followed by \eqref{eq:boundary-ramp-floor},
\[
 \int_\tau^TF_q\le\frac7{80}\cdot\frac34=\frac{21}{320}<\mathcal B(\tau).
\]

If $\tau\ge t_0$ and $q\ge q_0$, then by
Lemma~\ref{lem:terminal-reduction}(i) and~(iii), followed by
$\mathcal B\ge0$ and \eqref{eq:boundary-terminal},
\[
 \int_\tau^TF_q\le\max\Bigl\{0,\int_\tau^1F_q\Bigr\}
 \le\max\{0,\mathcal I(\tau)\}\le\mathcal B(\tau).
\]

Finally, if $\tau\ge t_1$, then by Lemma~\ref{lem:terminal-reduction}(iv),
$F_q\le F_1$ and $F_1>0$ on $[\tau,1]$. Hence, by
\eqref{eq:boundary-terminal},
\[
 \int_\tau^TF_q\le\int_\tau^TF_1\le\int_\tau^1F_1=\mathcal I(\tau)
 \le\mathcal B(\tau).
\]

These cases are exhaustive: the first two cover $\tau\le t_0$, the
third and fourth cover $t_0\le\tau\le t_1$, and the last covers
$\tau\ge t_1$. This
proves \eqref{eq:constant-profile-goal}, hence \eqref{eq:scalar}, and
completes the proof of Lemma~\ref{lem:scalar}.
\end{proof}

\section{Sharp geometric recovery}\label{sec:geometric}
This appendix determines the largest coefficient attainable by a
deterministic downward recovery map that depends only on the state.
The proof also gives bounded affine comparison vectors, including at
boundary points of the cube, by taking supporting slopes (supergradients) of the concave
closure. This is an existence statement: computing a suitable vector
from polynomially many objective queries is a separate requirement.
Sections~\ref{sec:structural} and~\ref{sec:online} meet that requirement
at coefficient $\alpha_0$ by constructing a field linear in the
objective and a conditionally unbiased one-query estimator.

\subsection{Recovery relative to the concave closure}
The relevant comparison value allows arbitrary distributions of
comparators with a prescribed mean. It therefore describes the
simultaneous requirement imposed by all comparators at a given state.

For a continuous function $f:[0,1]^n\to\R$, define its concave closure
on the full cube by
\begin{equation}
 f^+(x)=\sup\{\E f(O): O\in[0,1]^n\text{ almost surely},\ \E O=x\}.
 \label{gcon:envelope}
\end{equation}
Thus $f^+(x)$ compares all distributions with mean $x$; their support
may lie anywhere in the cube. The supremum is attained by a distribution
with at most $n+1$ support points, as proved in
Section~\ref{gcon:finite-normalization}.
Set
\begin{equation}
\begin{aligned}
 b_\star&=\frac{\sqrt3}{2},& s_\star&=1-b_\star,\\
 \alpha_\star&=\frac{b_\star}{1+b_\star}=2\sqrt3-3,&
 \Gamma_\star&=\alpha_\star^{-1}
       =1+b_\star+\frac{1}{4b_\star}.
\end{aligned}
\label{gcon:constant}
\end{equation}
These constants define the coordinatewise map
\begin{equation}
 (h_\star(x))_i=
 \begin{cases}
  1-e^{-x_i},&x_i\le s_\star,\\
  x_i/(b_\star+x_i),&x_i>s_\star.
 \end{cases}
 \label{gcon:map}
\end{equation}
At the threshold, the exponential branch specifies the value.
The map has an upward right-hand jump there. The following theorem
relates its value to the concave closure and determines its optimal
universal coefficient.

\begin{theorem}[Sharp recovery and bounded affine comparisons]
\label{gcon:main-existence}
Let $f:[0,1]^n\to\R_{\ge0}$ be $C^1$ and DR-submodular.
For every $x\in[0,1]^n$ the following statements hold.
\begin{enumerate}[label=(\roman*),leftmargin=*]
\item The map \eqref{gcon:map} satisfies
\begin{equation}
 0\le h_\star(x)\le x,\qquad
 f(h_\star(x))\ge\alpha_\star f^+(x).
 \label{gcon:main}
\end{equation}
In particular, $h_\star(x)\in K$ whenever $x$ belongs to a
down-closed set $K$.
\item There is a vector
$g_x\in[\nabla f(\one),\nabla f(\zero)]$, where the interval is
coordinatewise, such that for every $o\in[0,1]^n$,
\begin{equation}
 f(o)\le f^+(x)+\ip{g_x}{o-x}
 \le\Gamma_\star f(h_\star(x))+\ip{g_x}{o-x}.
 \label{gcon:field}
\end{equation}
Consequently $\norm{g_x}\le H(f)$, with $H(f)$ defined in
\eqref{eq:gradient-envelope}.
\item The coefficient $\alpha_\star$ is optimal uniformly over
dimensions among deterministic maps $T_n:[0,1]^n\to[0,1]^n$ that
are independent of $f$ and satisfy $T_n(x)\le x$. More precisely,
if
\[
 f(T_n(x))\ge c f^+(x)
\]
for every dimension $n$, every such $f$, and every $x$, then
$c\le\alpha_\star$. This includes maps whose coordinates depend
on the entire vector $x$ and maps that are discontinuous.
\end{enumerate}
\end{theorem}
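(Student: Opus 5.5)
The plan is to prove (ii) from (i) plus concavity of $f^+$, to prove (i) by rerunning the machinery of Section~\ref{sec:structural} at $b=b_\star$, and to prove (iii) with explicit hard instances.

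\textbf{Part (i), reduction.} The bounds $0\le h_\star(x)\le x$ hold because both branches satisfy $h(r)\le r$ for $b_\star<1$, exactly as after \eqref{eq:action-map}. For the value bound, first show that the supremum in \eqref{gcon:envelope} is attained by a finitely supported $O$; this follows from Carath\'eodory and compactness, by continuity of $f$. It then suffices to prove
\[
 \E f(O)\le\Gamma_\star f(h_\star(x))+\E\ip{g}{O-x}
\]
for some vector $g$ that depends on $f$ and $x$ but not on $O$. The linear term then vanishes because $\E O=x$.

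\textbf{Part (i), construction.} I would rerun Section~\ref{sec:structural} with $b=b_\star$ and $s=s_\star$. The ingredients are the normalization of Lemma~\ref{lem:normalization}, which needs $4b^2\ge3$ and so holds at the boundary case $b_\star$ with the strict inequality replaced by an equality, together with the path identities of Lemma~\ref{lem:two-comparisons}, the ordered representation of Lemma~\ref{lem:curvature}, and the bookkeeping \eqref{eq:weighted-sum}--\eqref{eq:intercept-reduced}. None of these depends on the numerical value of $b$ beyond $b\ge\sqrt3/2$ and $u_j\le s$. Everything therefore reduces to an analogue of Lemma~\ref{lem:scalar} at $(b_\star,\Gamma_\star)$ for some nonnegative averaging schedule $(w,\varpi,\kappa_0)$ satisfying \eqref{eq:mass-identities}.

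\textbf{Part (i), main obstacle.} The difficulty is that no query-efficiency constraint forces rational piecewise-polynomial weights, yet the inequality must hold with no room to spare at $b_\star$. The first thing I would try is a limiting argument. Choose schedules $(w^{(k)},\varpi^{(k)})$ valid for $b_k\downarrow b_\star$, with the corresponding $\Gamma_k\to\Gamma_\star$; the proof in Appendix~\ref{app:scalar} is stable under perturbation of the constants except at the endpoint $\mathcal Q(1/2)$. Then pass to the limit in the pointwise inequality $f(o)\le\Gamma_kf(h_{b_k}(x))+\ip{g_k}{o-x}$. Two facts make the limit work. First, $h_{b_k}(x)\to h_\star(x)$ coordinatewise; on the threshold coordinates we need care, and can use that coordinates with $x_i=s_\star$ lie in the exponential branch for all $b_k<1$ near $b_\star$ after a tiny perturbation of $x$ together with continuity of $f$. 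Second, the fields $g_k$ are bounded by $C_{\mathrm{int}}H(f)$, so a subsequence converges. If the limit argument fails at the threshold, the fallback is to average over the comparator distribution directly and drop the requirement that $g$ be linear in $f$.

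\textbf{Part (ii).} The closure $f^+$ is concave, finite, and satisfies $f^+\ge f$. At each $x$ I would take a supergradient of $f^+$ as $g_x$. To guarantee one exists on the boundary, and to obtain the bound $g_x\in[\nabla f(\one),\nabla f(\zero)]$, I would clip an interior supergradient coordinatewise. Clipping preserves the supporting inequality because along each coordinate $f^+$ has one-sided slopes trapped between $\partial_if(\one)$ and $\partial_if(\zero)$; this is proved via DR tangency at the support points of an optimal $O$. Combining this with (i) gives \eqref{gcon:field}, and $\norm{g_x}\le H(f)$ then follows from the coordinatewise interval.

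\textbf{Part (iii).} I would construct hard instances combining two pieces. The first is the cut of Remark~\ref{rem:cut-tightness} at large coordinates $1/2$; it penalizes $T_n(x)_i$ far from $1/(2b+1)$. The second is a many-coordinate coverage-type function $1-\prod_i(1-y_i)$ at small coordinates $\varepsilon$, with $n\to\infty$; it forces the recovery on small coordinates to be essentially $1-e^{-r}$. Given an arbitrary map $T_n$, I would symmetrize over coordinate permutations and pass to the limit $n\to\infty$ via the law of large numbers. This reduces the question to a one-parameter game in the recovered value $p$ at a large coordinate, whose max--min over $p$ equals $b_\star/(1+b_\star)$. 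Because $f(T_n(x))$ is not concave in $T_n(x)$, the symmetrization must be done at the level of the instance, averaging $f$ over relabelings, rather than at the level of the map.
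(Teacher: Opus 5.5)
Part (ii) matches the paper in substance. Parts (i) and (iii) each have a genuine gap.

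\emph{Part (i).} The limiting argument has nothing to take a limit of. In the framework you propose to rerun, the analogue of Lemma~\ref{lem:scalar} fails for every schedule once $b$ is near $b_\star$. Two quantities there do not depend on the schedule:
\begin{itemize}
\item $\theta(1)=\Gamma$, $\theta'=s\theta+b\psi$ and $\int_0^1e^{-st}\psi\,dt=1$ force $\theta(0)=\Gamma e^{-s}-b$ and $\theta\le\Gamma$;
\item $\int_0^1(\lambda+\kappa)=1$ and $\kappa=bW$ force $P_1(1)=c_1-\Gamma=3/(4b)-b$.
\end{itemize}
Taking $q\equiv1$ and $T=1$ in \eqref{eq:scalar} requires $G(\tau):=\mathcal B(\tau)+P_1(\tau)\ge3/(4b)-b$ for all $\tau$. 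One computes $G'=\theta+W-\Gamma-s\mathcal B$. Hence $G(0)=0$ and $G'(0^+)=s-\Gamma(1-e^{-s})\approx-0.136$, and also $G''\le\rho_1+s^2\mathcal B$. The case $\tau=T$ of \eqref{eq:scalar} forces $\mathcal B\ge0$, which caps $\int_0^\tau\rho_1$ by $\Gamma(e^{s\tau}-1)/s$. So $G$ dips to about $-0.004$ whatever the schedule, while $3/(4b)-b\to0$ as $b\downarrow b_\star$. The framework therefore needs roughly $b\ge0.868$.

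The term $\mathcal Q(1/2)$ you flagged is the binding constraint, with slack about $2\cdot10^{-6}$ at the paper's $b$; it is not a removable endpoint effect. A successful rerun would also give a query-efficient field at $\alpha_\star$, which the paper explicitly leaves open.

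Your fallback, working with the comparator law directly, is the actual proof, and it is absent. The paper proceeds as follows:
\begin{itemize}
\item it fixes a finite attaining law $(\pi_k,o^k)$;
\item it normalizes with the greedy minorant ordered by $r=(1+b_\star)h_\star(x)-b_\star x$;
\item it writes $F=e+R$ with $e=F(\cdot,\one_L)$ and pays the baseline through $C_{R_0}\le\alpha_\star^{-1}R_0(p)$, where the order by $\mu_j=v_j(1-v_j)$ and $b_\star+1/(4b_\star)=1/b_\star$ make this exact;
\item it builds a random small-block point with mean $a$ by selecting each label independently with probability $1-e^{-\pi_k}$.
\end{itemize}
A DR splitting/Jensen inequality, the threshold bound $1-e^{-m_j(\tau)}$ with $\phi_\delta\ge0$, and a retention bound for $e$ then give $F(a,p)\ge\alpha_\star\sum_k\pi_kF(o^k)$.

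\emph{Part (iii).} Your instances are not hard enough, and the roles you assign them are wrong.
\begin{itemize}
\item The symmetric cut at $(1/2,1/2)$ has $f^{+}=1$ and is increasing on $\{y\le x\}$. It rewards $y=x$, with ratio $1/2$, rather than penalizing deviation from $1/(2b+1)$.
\item Coverage is monotone with $f^{+}(x)=\min\{1,\sum_ix_i\}$. So $y=x$ gives ratio at least $1-1/e$, and nothing forces $1-e^{-r}$.
\end{itemize}
All states in your family lie in $[0,1/2]$, and $(f_1+f_2)^{+}\le f_1^{+}+f_2^{+}$. Hence the identity map keeps ratio at least $1/2>\alpha_\star$ on sums and relabeling averages of these pieces, and no limit $n\to\infty$ can produce $\alpha_\star$.

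Sharpness comes from the large coordinates alone. It needs a coordinate near one that one objective pushes down and another pushes up. The paper uses $x_\epsilon=(\epsilon,1-\epsilon,1/2,1/2)$:
\begin{itemize}
\item $o_1(1-o_2)$ forces $y_2\le1-c$;
\item the two cuts on coordinates $3,4$ force $c\le2\varrho(1-\varrho)$ with $\varrho=\sqrt{y_3y_4}$;
\item $o_2(1-o_3o_4)$ forces $c(1-\epsilon)\le y_2(1-y_3y_4)$, hence $c\le(1-\varrho^2)/(2-\epsilon-\varrho^2)$.
\end{itemize}
The two bounds cross at $\varrho_\star^2=s_\star$ with value $\alpha_\star$, and letting $\epsilon\downarrow0$ finishes the argument in dimension four.

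\emph{Part (ii).} Clipping is not what handles boundary points. Interior supergradients already lie in the box, by testing $x\pm te_i$ against the increment bounds. The supporting vector at a boundary point is obtained as a limit of interior ones, using compactness of the box and continuity of $f^{+}$.
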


The coefficient arises from the balance
$1+t+1/(4t)=(1+t)/t$, whose positive solution is $t=b_\star$;
four multilinear objectives show that no larger universal coefficient
is possible for these recovery maps. Averaging \eqref{gcon:field}
over a mean-$x$ comparator law gives \eqref{gcon:main}.
The reverse implication follows from a supporting hyperplane of $f^+$.

The exact intercept $f^+(x)$ also prevents an objective-linear choice
of these supporting vectors in general. For a direct example, fix
$x=(2/5,2/5,2/5)$ and let
$f_{ij}(z)=z_i+z_j-2z_iz_j$ for $1\le i<j\le3$.
Its concave closure is
$f_{ij}^+(z)=\min\{z_i+z_j,2-z_i-z_j\}$, so the unique supporting
slope at $x$ is $e_i+e_j$. In contrast,
$f_\triangle=f_{12}+f_{13}+f_{23}$ has
$f_\triangle^+(z)=2$ whenever
$z\in[0,1]^3$ and $1\le z_1+z_2+z_3\le2$:
its Boolean values are two on nonempty proper subsets, whose indicators
have precisely this convex hull, and zero on the other two subsets.
Multilinearity bounds $f_\triangle$ by two throughout the cube.
Since $x$ is interior to this constant region, its supporting slope
is uniquely zero. Writing $g_x(f)$ for any selection of the exact
supporting slope, these identities give
\begin{equation}\label{gcon:nonlinear-support-slopes}
 \sum_{1\le i<j\le3}g_x(f_{ij})=2\one,
 \qquad g_x(f_{12}+f_{13}+f_{23})=\zero.
\end{equation}
A supporting vector in the first inequality of~\eqref{gcon:field}
is also a supergradient of $f^+$, by averaging over comparator laws.
Consequently no choice satisfying that exact-intercept inequality
can be linear in $f$ for the whole class. The signed linear sampler
of Lemma~\ref{lem:one-query} therefore cannot implement these
particular supporting vectors. The relaxed intercept $\Gamma_\star f(h_\star(x))$ allows additional
choices of comparison vectors. Constructing such a field with a
polynomial-query oracle at coefficient $\alpha_\star$ remains a
separate algorithmic question.

For a nonnegative submodular set function
$\mathfrak f:2^{[n]}\to\R_{\ge0}$, let $\widetilde F$ denote its
multilinear extension. The Boolean concave closure of $\mathfrak f$ at
$x$ is the largest expected value of $\mathfrak f$ over distributions of
sets with elementwise marginals $x$. Independently rounding each
coordinate of a cube-valued comparator to zero or one preserves both its
mean and its expected value under $\widetilde F$; hence
$\widetilde F^+$ agrees with the Boolean concave closure of $\mathfrak f$.
\citet[Theorem~2]{ChekuriLivanos2024} establish the existence of a
point $y\le x$ satisfying $\widetilde F(y)\ge e^{-1}\widetilde F^+(x)$,
with stronger bounds in some parameter regimes.
Theorem~\ref{gcon:main-existence} provides the explicit
objective-independent choice $h_\star(x)$, achieves $\alpha_\star$, and
applies to all nonnegative $C^1$ DR-submodular functions. The ordinary
ratio $\widetilde F(x)/\widetilde F^+(x)$ can still approach zero: for
$\widetilde F(z)=z_1(1-z_2)$ and $x=(\epsilon,1-\epsilon)$,
$0<\epsilon<1$, one has $\widetilde F(x)=\epsilon^2$ and
$\widetilde F^+(x)=\epsilon$. The latter value is attained by assigning
probabilities $\epsilon$ and $1-\epsilon$ to $(1,0)$ and $(0,1)$,
respectively.

\subsection{Proof of Theorem~\ref{gcon:main-existence}}
\label{gcon:sec}
The recovery inequality follows by separating the contributions of the
small and large coordinates and comparing both with a common finite
distribution. A supporting-hyperplane argument then gives the affine
comparison, and four multilinear functions prove sharpness. Fix a
nonnegative $C^1$ DR-submodular function $f$ and $x\in[0,1]^n$.

\subsubsection{A finite comparator law and affine normalization}
\label{gcon:finite-normalization}
An attaining law with finite support allows the recovery inequality to
be proved by finite weighted sums. The graph
$\mathcal C_f=\{(o,f(o)):o\in[0,1]^n\}\subset\R^{n+1}$
is compact. Its convex hull is compact by Carath\'eodory's theorem
\citep[Theorems~17.1--17.2]{rockafellar1970}.
The attainable pairs $(\E O,\E f(O))$ form this convex hull.
Finite laws give its convex combinations. For an arbitrary law, suppose
its expectation pair lay outside the compact convex hull. Strict
separation would provide an affine functional $\Phi$ with
$\Phi(\E O,\E f(O))>\max_{(o,f(o))\in\mathcal C_f}\Phi(o,f(o))$.
But the affinity of $\Phi$ gives
$\Phi(\E O,\E f(O))=\E\Phi(O,f(O))$, and every integrand is at most
the displayed maximum, a contradiction. The points of this hull whose
first $n$ coordinates equal $x$ therefore form a nonempty compact set,
and its largest last coordinate is $f^+(x)$.
The function form of Carath\'eodory's theorem
\citep[Corollary~17.1.5]{rockafellar1970}, applied to $-f$ on the
cube and extended by $+\infty$ outside it, expresses the same optimum
using at most $n+1$ support points. Removing zero weights gives
\begin{equation}
 \begin{gathered}
 1\le N\le n+1,\qquad \pi_k>0,\qquad
 \sum_{k=1}^N\pi_k=1,\qquad
 \sum_{k=1}^N\pi_k o^k=x,\\
 \sum_{k=1}^N\pi_k f(o^k)=f^+(x),\qquad o^k\in[0,1]^n.
 \end{gathered}
 \label{gcon:law}
\end{equation}

The recovery map partitions the coordinates into
$S=\{i:x_i\le s_\star\}$ and $L=[n]\setminus S$.
Write $x=(u,v)$, $h_\star(x)=(a,p)$, and
$o^k=(o_S^k,o_L^k)$ in these two coordinate blocks. Thus
$a_i=1-e^{-u_i}$ for $i\in S$ and
$p_j=v_j/(b_\star+v_j)$ for $j\in L$.
Define
\begin{equation}
 r=\frac{h_\star(x)-\alpha_\star x}{1-\alpha_\star}
   =(1+b_\star)h_\star(x)-b_\star x.
 \label{gcon:residual}
\end{equation}
The two branches give $\alpha_\star x\le h_\star(x)\le x$.
On the small-coordinate block, the lower bound follows from
$1-e^{-t}\ge t-t^2/2\ge t/2\ge\alpha_\star t$
for $0\le t\le s_\star<1/2$, and $1-e^{-t}\le t$ gives the upper bound. On the large-coordinate block,
$b_\star+t\ge1$ and $(b_\star+t)^{-1}\ge(1+b_\star)^{-1}
\ge\alpha_\star$.
Thus $0\le r\le x$. On the small block, $r_i<u_i$ whenever $u_i>0$;
at $u_i=0$ one has $r_i=0$. Consequently,
\begin{equation}
 r_i<s_\star\quad(i\in S),\qquad
 r_j-s_\star=\frac{b_\star(v_j-s_\star)(1-v_j)}{b_\star+v_j}\ge0
       \quad(j\in L).
 \label{gcon:ordering}
\end{equation}
For the second relation, $r_j=(1+b_\star)p_j-b_\star v_j$ and, over the
denominator $b_\star+v_j$,
$(1+b_\star)v_j-b_\star v_j(b_\star+v_j)-(1-b_\star)(b_\star+v_j)
=b_\star(v_j-s_\star)(1-v_j)$.
The coordinates are ordered by decreasing $r_i$, with ties broken by
original index, so that the large block precedes the small block.
For the following formulas, the coordinates are
relabeled in this order. Thus $i=1,\ldots,n$ is both the order position
and the relabeled coordinate index, $C_i=\{1,\ldots,i\}$, and
$C_0=\varnothing$. Subscripts $<i$ and $>i$ restrict a vector to
coordinates before and after $i$ in this order. Put
\begin{equation}
 d_i=f(\one_{C_i})-f(\one_{C_{i-1}})=\int_0^1\partial_i f(\one_{<i},t,0_{>i})\,dt,\qquad
 \ell(y)=f(0)+\sum_{i=1}^n d_i y_i.
 \label{gcon:minorant}
\end{equation}
The greedy lower bound of \citet[Proposition~3.2]{bach2013},
extended from the vertices by coordinatewise concavity as in
Lemma~\ref{lem:normalization}, gives $f(y)\ge\ell(y)$ throughout
the cube. Hence $F=f-\ell$ is nonnegative and DR-submodular.
The vertices $\one_{C_i}$ satisfy
$F(0)=F(0_S,\one_L)=0$ by telescoping.
Set $r_{n+1}=0$. Summation by parts gives
\begin{equation}
 \ell(r)=(1-r_1)f(0)+\sum_{i=1}^n(r_i-r_{i+1})f(\one_{C_i})\ge0.
 \label{gcon:minorant-payment}
\end{equation}
The coefficients in this display are nonnegative, because $r$ is
sorted nonincreasingly and $r_1\le x_1\le1$, and they sum to one.

\subsubsection{The large-coordinate contribution}
Since $F(0_S,\one_L)=0$, subtracting
$e=F(\cdot,\one_L)$ leaves a nonnegative remainder. Define
\begin{equation}
 e(y)=F(y,\one_L),\quad
 R(y,l)=F(y,l)-e(y),\quad R_0(l)=R(0,l),\quad
 \Delta_y(l)=R(y,l)-R_0(l).
 \label{gcon:slices}
\end{equation}
Then $e\ge0$ is DR and $e(0)=0$. For each fixed $y$,
$R(y,\cdot)$ is DR and vanishes at $\one_L$. It is nonnegative:
$R(0,l)=F(0,l)\ge0$, and for $i\in S$,
\begin{equation}
 \partial_iR(y,l)=\partial_iF(y,l)
                  -\partial_iF(y,\one_L)\ge0.
 \label{gcon:residual-increasing}
\end{equation}
Further, $\Delta_y(\one_L)=0$ and
$\partial_j\Delta_y(l)=\partial_jF(y,l)-\partial_jF(0,l)$ is
nonpositive for $j\in L$ and nonincreasing in $y$.
If $S$ is empty, $e=0$; if $L$ is
empty, $R=R_0=0$. The terms involving an empty block then vanish.

The large-coordinate block is ordered increasingly by
$\mu_j=v_j(1-v_j)$, with ties broken by the original coordinate index. This order is independent of the order used to
define $\ell$. In the large-coordinate chain formulas below,
$j=1,\ldots,|L|$ denotes position in this order, and vector components retain their corresponding coordinate
values. The sums in the following comparisons are zero when $L$ is empty.
For a $C^1$ function $V$ on $[0,1]^L$ set
\begin{align}
 A_j^L(V)&=\frac1{p_j}\int_0^{p_j}
     \partial_jV(p_{<j},t,0_{>j})\,dt,\nonumber\\
 B_j^L(V)&=\frac1{1-p_j}\int_{p_j}^1
     \partial_jV(p_{<j},t,\one_{>j})\,dt.
 \label{gcon:secant-def}
\end{align}
Here $p_j,1-p_j>0$ even at boundary states of the large-coordinate block.
For DR-submodular $V$, the signed endpoint comparison
\eqref{eq:signed-endpoint}, with the secants in
\eqref{gcon:secant-def}, gives the following bound. Its derivation
uses only DR-submodularity of the function and the prefix structure of
the secants, not a particular order of the large coordinates, so it
applies with the order by $\mu_j$:
\begin{equation}
 V(l)\le V(p)+\sum_{j=1}^{|L|}
       \big[l_j(1-p_j)A_j^L(V)-p_j(1-l_j)B_j^L(V)\big].
 \label{gcon:secant-majorant}
\end{equation}

Define the constant
\begin{equation}
 C_{R_0}=R_0(p)+\sum_{j=1}^{|L|}
   \big[v_j(1-p_j)A_j^L(R_0)-p_j(1-v_j)B_j^L(R_0)\big].
 \label{gcon:baseline}
\end{equation}
If $L$ is empty, then $C_{R_0}=R_0(p)=0$, so
\eqref{gcon:baseline-payment} below is immediate. For the following
telescoping calculation, assume $|L|\ge1$.
Forward telescoping gives
$\sum_{j=1}^{|L|}p_jA_j^L(R_0)=R_0(p)-R_0(0)\le R_0(p)$.
For reverse telescoping let
$Q_j=R_0(p_{\le j},\one_{>j})$, so that
$Q_0=R_0(\one_L)=0$, $Q_{|L|}=R_0(p)$, and
$(1-p_j)B_j^L(R_0)=Q_{j-1}-Q_j$. Thus
\begin{align}
 \sum_{j=1}^{|L|}\mu_j(1-p_j)B_j^L(R_0)
 &=\sum_{j=1}^{|L|-1}(\mu_{j+1}-\mu_j)Q_j
      -\mu_{|L|}R_0(p)\nonumber\\
 &\ge-\tfrac14 R_0(p).
 \label{gcon:baseline-telescope}
\end{align}
The inequality follows from $Q_j\ge0$, the ordering of the $\mu_j$,
and $0\le\mu_j\le1/4$.
The identities $v_j(1-p_j)=b_\star p_j$ and
$p_j(1-v_j)=\mu_j(1-p_j)/b_\star$ both follow from
$p_j/(1-p_j)=v_j/b_\star$. Substituting them in \eqref{gcon:baseline}
and using the two telescoping bounds gives
\begin{equation}
 C_{R_0}=R_0(p)+b_\star\sum_jp_jA_j^L(R_0)-\frac1{b_\star}\sum_j\mu_j(1-p_j)B_j^L(R_0)
 \le\Bigl(1+b_\star+\frac1{4b_\star}\Bigr)R_0(p)=\alpha_\star^{-1}R_0(p).
 \label{gcon:baseline-payment}
\end{equation}
The last equality holds because $b_\star^2=3/4$ gives
$b_\star+1/(4b_\star)=1/b_\star$, so the coefficient is
$(1+b_\star)/b_\star=\alpha_\star^{-1}$.

For the remaining contribution, define
\begin{equation}
 Z_j(y)=-B_j^L(\Delta_y)\ge0.
 \label{gcon:stocks}
\end{equation}
They vanish at $y=0$ and are nondecreasing coordinatewise in $y$, by the
derivative properties established after \eqref{gcon:slices}.
Exact reverse telescoping, using $\Delta_y(\one_L)=0$, gives
\begin{equation}
 \Delta_y(p)=\sum_{j=1}^{|L|}(1-p_j)Z_j(y).
 \label{gcon:stock-identity}
\end{equation}
The comparator average of the residual function is the scalar
\begin{equation}\label{gcon:residual-average}
 \mathcal V_{\mathrm{res}}=\sum_{k=1}^N\pi_k R(o_S^k,o_L^k).
\end{equation}
For each $k$, \eqref{gcon:secant-majorant} with
$V=R(o_S^k,\cdot)=R_0+\Delta_{o_S^k}$ and $l=o_L^k$ gives
\[
\begin{aligned}
 R(o^k)\le{}&R_0(p)+\Delta_{o_S^k}(p)\\
 &+\sum_j\Bigl[(o_L^k)_j(1-p_j)A_j^L(R_0+\Delta_{o_S^k})
 -p_j\bigl(1-(o_L^k)_j\bigr)B_j^L(R_0+\Delta_{o_S^k})\Bigr].
\end{aligned}
\]
The secants are linear in the function. Averaged over $k$ with weights
$\pi_k$, the $R_0$-terms give $C_{R_0}$, because $\sum_k\pi_ko_L^k=v$. In the
$\Delta$-terms, $A_j^L(\Delta_{o_S^k})\le0$, $-B_j^L(\Delta_{o_S^k})=Z_j(o_S^k)$,
and $\Delta_{o_S^k}(p)=\sum_j(1-p_j)Z_j(o_S^k)$ by
\eqref{gcon:stock-identity}. Dropping the nonpositive forward terms
therefore yields
\begin{equation}
 \mathcal V_{\mathrm{res}}\le C_{R_0}+\sum_{j=1}^{|L|}\left[
 (1-p_j)\sum_{k=1}^N\pi_k Z_j(o_S^k)
 +p_j\sum_{k=1}^N\pi_k\left(1-(o_L^k)_j\right)Z_j(o_S^k)\right].
 \label{gcon:stock-cost}
\end{equation}

\subsubsection{A common average for the two coordinate blocks}
We call the index $k$ of the support point $o^k$ in \eqref{gcon:law}
its label. Independent random choices of labels produce a distribution
whose mean is the small-coordinate recovery point $a$. This distribution
controls both the terms $Z_j$ and the function $e$.

For each label $k$, define an inclusion probability
$\theta_k$ and a vector $t^k$ by
\begin{equation}
 \theta_k=1-e^{-\pi_k},\qquad
 t_i^k=\frac{1-e^{-\pi_k(o_S^k)_i}}{\theta_k}
 \quad(i\in S).
 \label{gcon:inflation}
\end{equation}
Concavity of $\omega\mapsto1-e^{-\pi_k\omega}$ implies
$o_S^k\le t^k\le\one_S$.
For $A\subseteq[N]$ put
\begin{equation}
 \nu_A=\prod_{k\in A}\theta_k
          \prod_{k\in[N]\setminus A}(1-\theta_k),\qquad
 y_A=\one_S-\prod_{k\in A}(\one_S-t^k),
 \label{gcon:finite-law}
\end{equation}
with coordinatewise products and the usual empty-product convention.
The weights describe independent label choices. They satisfy
\begin{equation}
 \begin{aligned}
 &\sum_{A\subseteq[N]}\nu_A=1,\\
 &\sum_{A\subseteq[N]}\nu_Ay_{A,i}
 =1-\sum_{A\subseteq[N]}\nu_A\prod_{k\in A}(1-t_i^k)
 =1-\prod_{k=1}^N(1-\theta_k t_i^k)
 =1-e^{-\sum_{k=1}^N\pi_k(o_S^k)_i}=a_i.
 \end{aligned}
 \label{gcon:mean}
\end{equation}
The second equality uses the independence of the label choices, and the
third uses $\theta_kt_i^k=1-e^{-\pi_k(o_S^k)_i}$.

The distribution just constructed satisfies the following comparison
for every $C^1$ DR function $V$ on $[0,1]^S$:
\begin{equation}
 V(a)\ge\sum_{A\subseteq[N]}\nu_AV(y_A).
 \label{gcon:jensen}
\end{equation}
The law \eqref{gcon:finite-law} consists of independent Bernoulli
label choices, each initially shared by all coordinates. Let
$c\mathbin\oplus t=c+(\one_S-c)\odot t$, where $\odot$ denotes
coordinatewise multiplication. Conditional on all other choices,
splitting one label's shared choice into independent choices on
disjoint coordinate blocks $I,J$ changes the expectation by
the following quantity. Here $\theta$ is that label's inclusion
probability, $c$ is the contribution from the other choices, and
$t_I,t_J$ are the restrictions of its vector $t^k$ to $I$ and $J$,
with zeros elsewhere:
\begin{equation}
 \theta(1-\theta)\big[
 V(c\oplus t_I)+V(c\oplus t_J)-V(c)
       -V(c\oplus t_I\oplus t_J)\big]\ge0.
 \label{gcon:rectangle}
\end{equation}
The inequality holds by DR-submodularity. Write
$\eta_I=(\one_S-c)\odot t_I\ge0$. Since $I$ and $J$ are disjoint,
$c\oplus t_I=c+\eta_I$ and $c\oplus t_I\oplus t_J=c\oplus t_J+\eta_I$. As
$\nabla V$ is coordinatewise nonincreasing and $c\oplus t_J\ge c$,
\[
 V(c+\eta_I)-V(c)=\int_0^1\ip{\nabla V(c+r\eta_I)}{\eta_I}\,dr
 \ge\int_0^1\ip{\nabla V(c\oplus t_J+r\eta_I)}{\eta_I}\,dr
 =V(c\oplus t_I\oplus t_J)-V(c\oplus t_J).
\] The comparison now follows in two steps. First, split every
label's shared choice coordinate by coordinate, one label and one
coordinate at a time. By \eqref{gcon:rectangle}, no split decreases the
expectation, and each coordinate's marginal law, hence its mean $a_i$,
is unchanged. After all splits, the coordinates of the random point are
independent. Second, for this product law, Jensen's inequality applied
to one coordinate at a time, conditionally on the others, gives
expected value at most $V(a)$, since $V$ is concave in every
coordinate. The original expectation is at most the product-law
expectation, which proves \eqref{gcon:jensen}.
For $V(y)=F(y,p)$, this gives
$F(a,p)\ge\sum_{A\subseteq[N]}\nu_AF(y_A,p)$. The comparison is applied to
$F(\cdot,p)$ itself: the residual $R(\cdot,p)$ need not be
DR-submodular in the small coordinates, and its average is bounded
separately below, using only the monotonicity properties established
after \eqref{gcon:slices}.

For the functions $Z_j$, monotonicity gives the following
threshold comparison. If
$Z_j(o_S^k)>\tau$ and $k\in A$, then
$y_A\ge t^k\ge o_S^k$ and hence $Z_j(y_A)>\tau$.
Writing
$m_j(\tau)=\sum_{k=1}^N\pi_k\mathbf1\{Z_j(o_S^k)>\tau\}$, the
probability that no label $k$ with $Z_j(o_S^k)>\tau$ is selected is
$\prod_{k:\,Z_j(o_S^k)>\tau}(1-\theta_k)=e^{-m_j(\tau)}$. Hence
\begin{equation}
 \sum_{\substack{A\subseteq[N]\\Z_j(y_A)>\tau}}\nu_A\ge1-e^{-m_j(\tau)}.
 \label{gcon:threshold}
\end{equation}
Layer-cake integration gives
\begin{align}
 \sum_{k=1}^N\pi_k Z_j(o_S^k)&=\int_0^\infty m_j(\tau)\,d\tau,
 \nonumber\\
 \sum_{k=1}^N\pi_k\left(1-(o_L^k)_j\right)Z_j(o_S^k)
 &\le\int_0^\infty\min\{1-v_j,m_j(\tau)\}\,d\tau.
 \label{gcon:layer-cake}
\end{align}
For the second inequality, each threshold set has total mass $m_j(\tau)$, and its value of
$\sum_k\pi_k\bigl(1-(o_L^k)_j\bigr)$ is at most
$\sum_{k=1}^N\pi_k\left(1-(o_L^k)_j\right)=1-v_j$.

The two integrals are controlled by
\begin{equation}
 (1-p_j)(1-e^{-\omega})\ge\alpha_\star
 \left[(1-p_j)\omega+p_j\min\{1-v_j,\omega\}\right]
 \qquad(0\le\omega\le1).
 \label{gcon:scalar}
\end{equation}
Multiplying \eqref{gcon:scalar} by
$(b_\star+v_j)/\alpha_\star=(b_\star+v_j)(1+b_\star)/b_\star$ and using
$1-p_j=b_\star/(b_\star+v_j)$ and $p_j=v_j/(b_\star+v_j)$ turns it into
$(1+b_\star)(1-e^{-\omega})\ge b_\star\omega+v_j\min\{1-v_j,\omega\}$.
Thus it suffices to prove
\begin{equation}
 \phi_\delta(\omega)
 =(1+b_\star)(1-e^{-\omega})-b_\star\omega
 -(1-\delta)\min\{\delta,\omega\}\ge0
 \qquad(\delta,\omega\in[0,1]),
 \label{gcon:phi}
\end{equation}
with $\delta=1-v_j$.
For $\omega\le\delta$, the bound $1-e^{-\omega}\ge\omega-\omega^2/2$ gives
$\phi_\delta(\omega)\ge\delta\omega-(1+b_\star)\omega^2/2
\ge(1-b_\star)\omega^2/2$, using $\delta\ge\omega$ in the last step.
On $[\delta,1]$, $\phi_\delta$ is concave, so its endpoint values suffice.
The left endpoint was checked, and
\begin{equation}
 \phi_\delta(1)=(1+b_\star)(1-e^{-1})-b_\star-\delta(1-\delta)
 \ge\tfrac34-(1+b_\star)e^{-1}
 \ge\tfrac34-2e^{-1}>0.
 \label{gcon:phi-endpoint}
\end{equation}
Here $\delta(1-\delta)\le1/4$, $b_\star<1$, and $e>8/3$ were used.
For each $j$, since $m_j(\tau)\le\sum_k\pi_k=1$, inequality
\eqref{gcon:scalar} applies with $\omega=m_j(\tau)$ for every $\tau\ge0$.
By \eqref{gcon:threshold},
\[
 \sum_A\nu_AZ_j(y_A)=\int_0^\infty\sum_{A:\,Z_j(y_A)>\tau}\nu_A\,d\tau
 \ge\int_0^\infty\bigl(1-e^{-m_j(\tau)}\bigr)\,d\tau .
\]
Multiplying by $1-p_j$, applying \eqref{gcon:scalar} under the integral,
and then using \eqref{gcon:layer-cake} gives
\begin{equation}\label{gcon:large-coordinate-payment}
\begin{aligned}
 (1-p_j)\sum_{A\subseteq[N]}\nu_AZ_j(y_A)
 \ge\alpha_\star\Big[&(1-p_j)\sum_{k=1}^N\pi_kZ_j(o_S^k)\\
 &+p_j\sum_{k=1}^N\pi_k\bigl(1-(o_L^k)_j\bigr)Z_j(o_S^k)\Big].
\end{aligned}
\end{equation}
The bound \eqref{gcon:baseline-payment} gives
$R_0(p)\ge\alpha_\star C_{R_0}$. Adding it to
\eqref{gcon:large-coordinate-payment} for all $j$ and then using
\eqref{gcon:stock-cost} proves
\begin{align}
 \sum_{A\subseteq[N]}\nu_AR(y_A,p)
 &=R_0(p)+\sum_{j=1}^{|L|}(1-p_j)\sum_{A\subseteq[N]}\nu_AZ_j(y_A)\nonumber\\
 &\ge\alpha_\star\mathcal V_{\mathrm{res}}.
 \label{gcon:large-block-payment}
\end{align}
The function $e$ is controlled by a second consequence of the
affine lower bound. When $S$ is empty, $e=0$ and its contribution
vanishes. Otherwise, for any nonnegative $C^1$ DR function
$V:[0,1]^S\to\R_{\ge0}$ and finite
probability law with points $z^1,\ldots,z^m$, weights
$\lambda_1,\ldots,\lambda_m$, and mean
$\bar z=\sum_{\iota=1}^m\lambda_\iota z^\iota$, one has
\begin{equation}
 \sum_{\iota=1}^m\lambda_\iota V(z^\iota)\ge
       (1-\|\bar z\|_\infty)V(0).
 \label{gcon:retention-ineq}
\end{equation}
To prove it, let $n_S=|S|$, relabel the coordinates of $S$ so that
$\bar z_1\ge\cdots\ge\bar z_{n_S}$, put
$C'_j=\{1,\ldots,j\}$ and $\bar z_{n_S+1}=0$, and let
$\ell_V$ be the greedy affine minorant \eqref{gcon:minorant} of $V$ for
this order. Then
\begin{align*}
 \sum_{\iota=1}^m\lambda_\iota V(z^\iota)
 &\ge\sum_{\iota=1}^m\lambda_\iota\ell_V(z^\iota)=\ell_V(\bar z)\\
 &=(1-\bar z_1)V(0)
   +\sum_{j=1}^{n_S}\bigl(\bar z_j-\bar z_{j+1}\bigr)
     V(\one_{C'_j})\\
 &\ge(1-\|\bar z\|_\infty)V(0).
\end{align*}
The equality in the second line is the summation by parts of
\eqref{gcon:minorant-payment}, and the last step uses that every
coefficient and every vertex value is nonnegative.

Relabel the comparators so that $M_1\ge M_2\ge\cdots\ge M_N$, where
$M_k=e(t^k)$, with a fixed tie rule; the law \eqref{gcon:finite-law} is
invariant under this relabeling. The nonempty subsets $A\subseteq[N]$ are
partitioned by their first label $k$. For a fixed $k$, their total weight is
$\theta_k\prod_{k'=1}^{k-1}(1-\theta_{k'})$.
Conditionally on that first label,
$y_A=t^k\oplus y_{A\setminus\{k\}}$,
and the later labels retain their independent product law.
Their coordinate mean is
$1-e^{-\sum_{k'=k+1}^N\pi_{k'}(o_S^{k'})_i}\le a_i<s_\star$.
The composed function $y\mapsto e(t^k\oplus y)$ is nonnegative
and DR, since its coordinate derivatives are those of $e$ times
nonnegative constant factors. Inequality~\eqref{gcon:retention-ineq} therefore
gives a lower bound of $b_\star=1-s_\star$ times $M_k$. Hence
\begin{align}
 \sum_{A\subseteq[N]}\nu_Ae(y_A)
 &\ge b_\star\sum_{k=1}^N\theta_k
             \prod_{k'=1}^{k-1}(1-\theta_{k'})M_k\nonumber\\
 &\ge b_\star(1-e^{-1})\sum_{k=1}^N\pi_k e(t^k).
 \label{gcon:terminal-inflated}
\end{align}
To justify the last inequality, note that since
$M_1\ge\cdots\ge M_N$, the first selected label carries the largest
value; so the middle sum is the expected maximum $M_k$ among selected
labels, with value zero if none is selected. Its layer-cake integral is
$\int_0^\infty(1-e^{-\sum_{\substack{1\le k\le N\\M_k>\tau}}\pi_k})\,d\tau$.
Since $\sum_{k:M_k>\tau}\pi_k\le1$, the scalar inequality
$1-e^{-\omega}\ge(1-e^{-1})\omega$ for $0\le\omega\le1$ gives the last bound in \eqref{gcon:terminal-inflated}.

It remains to compare the values at $t^k$ with those at the original
comparators. For $0\le y\le1$
and $0<\chi\le1$,
\begin{equation}
 0\le1-e^{-\chi y}-y(1-e^{-\chi})
       \le\frac{\chi^2}{2}y(1-y).
 \label{gcon:inflation-error}
\end{equation}
Both differences vanish at the endpoints. The first is concave;
the difference between the upper bound and the middle expression
is also concave, with second derivative
$\chi^2(e^{-\chi y}-1)\le0$. This proves both inequalities.
Applying \eqref{gcon:inflation-error} with $\chi=\pi_k$ and
$y=(o_S^k)_i$, and dividing by $\theta_k\ge\pi_k(1-e^{-1})$ (by concavity of
$\chi\mapsto1-e^{-\chi}$ on $[0,1]$), gives
\[
 t_i^k-(o_S^k)_i\le
 \frac{\pi_k(o_S^k)_i\bigl(1-(o_S^k)_i\bigr)}{2(1-e^{-1})}.
\]
Since $\pi_k(o_S^k)_i\le u_i\le s_\star$, it follows that
\begin{equation}
 t_i^k-(o_S^k)_i\le\rho\left(1-(o_S^k)_i\right),\qquad
 \rho=\frac{s_\star}{2(1-e^{-1})}<1.
 \label{gcon:inflation-relative}
\end{equation}
Thus $t^k=(1-\rho)o_S^k+\rho d^k$ with
$d^k=o_S^k+\rho^{-1}(t^k-o_S^k)$, which satisfies
$o_S^k\le d^k\le\one_S$ by \eqref{gcon:inflation-relative}.
Concavity of $e$ in this nonnegative direction, together with
$e\ge0$, gives $e(t^k)\ge(1-\rho)e(o_S^k)$.
Consequently \eqref{gcon:terminal-inflated} implies
\begin{equation}
 \sum_{A\subseteq[N]}\nu_Ae(y_A)\ge
 b_\star(1-e^{-1}-s_\star/2)\sum_{k=1}^N\pi_k e(o_S^k)
 =\gamma\sum_{k=1}^N\pi_k e(o_S^k),
 \label{gcon:terminal-payment}
\end{equation}
where $\gamma=b_\star(1-e^{-1}-s_\star/2)$, using $\rho(1-e^{-1})=s_\star/2$.
Since $e^{-1}<3/8$ and $s_\star=1-b_\star$,
$\gamma>b_\star\bigl(\tfrac58-\tfrac{1-b_\star}2\bigr)=\tfrac{b_\star}8+\tfrac{b_\star^2}2
=\tfrac{3+b_\star}8$. Moreover
$(3+b_\star)(1+b_\star)-8b_\star=3-4b_\star+b_\star^2=\tfrac{15}4-4b_\star>0$,
because $4b_\star=2\sqrt3<15/4$; hence $\gamma>b_\star/(1+b_\star)=\alpha_\star$.

The comparison \eqref{gcon:jensen}, followed by
\eqref{gcon:large-block-payment} and \eqref{gcon:terminal-payment},
gives
\begin{align}
 F(a,p)&\ge\sum_{A\subseteq[N]}\nu_A\big[e(y_A)+R(y_A,p)\big]\nonumber\\
 &\ge\gamma\sum_{k=1}^N\pi_k e(o_S^k)
      +\alpha_\star\mathcal V_{\mathrm{res}}\nonumber\\
 &\ge\alpha_\star\sum_{k=1}^N\pi_k F(o^k).
 \label{gcon:normalized-recovery}
\end{align}
The last step uses $e\ge0$ and $\gamma>\alpha_\star$.
These inequalities also cover empty blocks: if $S$ is empty,
$F(a,p)=R_0(p)$ and \eqref{gcon:baseline-payment} suffices; if $L$ is empty,
only the $e$ terms remain.
By \eqref{gcon:law},
$f^+(x)=\sum_k\pi_k[F(o^k)+\ell(o^k)]=\sum_k\pi_kF(o^k)+\ell(x)$. Since
$\ell$ is affine and $h_\star(x)=(1-\alpha_\star)r+\alpha_\star x$, we
have $\ell(h_\star(x))=(1-\alpha_\star)\ell(r)+\alpha_\star\ell(x)$. Subtracting
gives the identity below, and the inequality follows from
\eqref{gcon:normalized-recovery} and \eqref{gcon:minorant-payment}:
\begin{align}
 f(h_\star(x))-\alpha_\star f^+(x)
 &=F(a,p)-\alpha_\star\sum_{k=1}^N\pi_kF(o^k)
       +(1-\alpha_\star)\ell(r)\nonumber\\
 &\ge0.
 \label{gcon:gap}
\end{align}
This proves assertion (i).

\subsubsection{Bounded supporting slopes on the closed cube}
The concave closure has coordinate increment bounds inherited from
the endpoint derivatives of $f$. These bounds control its supporting
slopes uniformly and allow them to extend to the boundary.

Mixtures of comparator laws show that $f^+$ is concave. It is finite
by compactness and has the coordinate increment bounds
\begin{equation}
 \partial_i f(\one)t\le f^+(x+te_i)-f^+(x)
 \le\partial_i f(\zero)t
 \qquad(0\le t\le1-x_i).
 \label{gcon:envelope-lipschitz}
\end{equation}
For $t>0$, replace $O_i$ in an attaining mean-$x$ law by
$O_i+\delta$ with $\delta=t(1-O_i)/(1-x_i)$. The new coordinate lies in
$[O_i,1]$ because $t\le1-x_i$, and $\E\delta=t$, so the mean increases by
$te_i$. Since $\partial_if\ge\partial_if(\one)$ on the cube,
$f(O+\delta e_i)-f(O)\ge\partial_if(\one)\delta$; taking expectations
proves the lower bound. Conversely, replace $O_i$ in an attaining
mean-$(x+te_i)$ law by $O_i-\delta$ with $\delta=tO_i/(x_i+t)\ge0$,
which has mean $t$, so the mean decreases by $te_i$. Since
$\partial_if\le\partial_if(\zero)$, the old expected objective minus the
new one is at most $\partial_if(\zero)t$, proving the upper bound. The case $t=0$ is immediate. Changing
coordinates one at a time in \eqref{gcon:envelope-lipschitz} gives
\begin{equation}\label{gcon:envelope-continuity}
 |f^+(y)-f^+(x)|\le\sum_{i=1}^n H_i(f)|y_i-x_i|
                    \le H(f)\norm{y-x}.
\end{equation}
In particular, $f^+$ is continuous on the closed cube.

At an interior point $x$, the supporting-hyperplane theorem for finite
concave functions gives a supergradient $g_x$ satisfying
$f^+(o)\le f^+(x)+\ip{g_x}{o-x}$ throughout the cube.
Testing $x\pm te_i$ in this inequality and using
\eqref{gcon:envelope-lipschitz} gives
$\partial_i f(\one)\le(g_x)_i\le\partial_i f(\zero)$.
For a boundary point $x$, interior points approaching $x$ have
supergradients that lie in the same compact box; a convergent
subsequence and \eqref{gcon:envelope-continuity} give the supporting inequality
at $x$ for every comparator $o$. Since $f\le f^+$,
assertion (i) proves \eqref{gcon:field}.
The endpoint box gives the stated norm bound.

This establishes assertion (ii).

\subsubsection{Sharpness from four multilinear objectives}
An objective-independent map must satisfy the recovery inequality for
several objectives at the same state and output. The following four
functions impose incompatible requirements above $\alpha_\star$.

Suppose deterministic maps $T_n:[0,1]^n\to[0,1]^n$, independent
of the objective, satisfy $T_n(x)\le x$ and
$f(T_n(x))\ge c f^+(x)$ for every nonnegative $C^1$ DR function
and every dimension. It suffices to use dimension four. Fix
$0<\epsilon<1/2$ and write
\begin{equation}
 x_\epsilon=(\epsilon,1-\epsilon,1/2,1/2),\qquad
 y=T_4(x_\epsilon)\le x_\epsilon.
 \label{gcon:sharp-state}
\end{equation}
The following nonnegative multilinear DR functions, which we call test
functions, give necessary inequalities at this same state and output;
the first three are cuts $o_i(1-o_j)$:
\begin{equation}
\begin{array}{c|c|c}
 f(o)&f^+(x_\epsilon)&\text{necessary inequality}\\ \hline
 o_1(1-o_2)&\epsilon&c\le1-y_2\\
 o_3(1-o_4)&1/2&c\le2y_3(1-y_4)\\
 o_4(1-o_3)&1/2&c\le2y_4(1-y_3)\\
 o_2(1-o_3o_4)&1-\epsilon&c(1-\epsilon)\le y_2(1-y_3y_4).
\end{array}
\label{gcon:sharp-witnesses}
\end{equation}
For a cut $o_i(1-o_j)$, its concave closure at $x$ equals
$\min\{x_i,1-x_j\}$. The pointwise inequalities
$o_i(1-o_j)\le o_i,1-o_j$ give the upper bound. To attain it, draw
$U\sim\Unif[0,1]$ and set $O_i=\one\{U\le x_i\}$ and
$O_j=\one\{U>1-x_j\}$, with the remaining coordinates sampled
independently at their prescribed means. Then
$\E[O_i(1-O_j)]=\min\{x_i,1-x_j\}$. For the cubic test function, the upper bound is
$o_2$; making coordinates $3$ and $4$ mutually exclusive attains
it while supplying the other means independently.
The cubic is DR because its nonzero mixed derivatives are
$-o_4,-o_3,-o_2$, and its pure second derivatives vanish.
For the first row, the guarantee is
$c\epsilon\le y_1(1-y_2)\le\epsilon(1-y_2)$; division by
$\epsilon>0$ gives the displayed inequality.

Put $\varrho=\sqrt{y_3y_4}\in[0,1/2]$, so that
$\min\{y_3,y_4\}\le\varrho$ and $y_3y_4=\varrho^2$. The middle two
test functions imply
\begin{equation}
 c\le2\left(\min\{y_3,y_4\}-y_3y_4\right)\le2\varrho(1-\varrho).
 \label{gcon:sharp-increasing}
\end{equation}
The first test function gives $y_2\le1-c$. Since $1-\varrho^2>0$,
substituting into the last test function gives
$c(1-\epsilon)\le(1-c)(1-\varrho^2)$. Collecting the terms in $c$
and dividing by $2-\epsilon-\varrho^2>0$ yields
\begin{equation}
 c\le\frac{1-\varrho^2}{2-\epsilon-\varrho^2}.
 \label{gcon:sharp-decreasing}
\end{equation}
At $\epsilon=0$, $2\varrho(1-\varrho)$ is increasing on $[0,1/2]$ and
$(1-\varrho^2)/(2-\varrho^2)$ is decreasing. Their unique crossing satisfies
\begin{equation}
 (\varrho-1)^2(2\varrho^2+2\varrho-1)=0,\qquad
 \varrho_\star=\frac{\sqrt3-1}{2}.
 \label{gcon:sharp-crossing}
\end{equation}
Indeed,
$2\varrho(1-\varrho)(2-\varrho^2)-(1-\varrho^2)=(\varrho-1)^2(2\varrho^2+2\varrho-1)$.
At $\varrho_\star$, $\varrho_\star^2=1-\sqrt3/2=s_\star$ and
$2\varrho_\star(1-\varrho_\star)=(\sqrt3-1)-(2-\sqrt3)=\alpha_\star$, so both
values there equal $\alpha_\star$.
The two monotonicity assertions follow from derivatives
$2-4\varrho\ge0$ and $-2\varrho/(2-\varrho^2)^2\le0$, respectively.
For every $\epsilon$, if $\varrho\le \varrho_\star$, then
\eqref{gcon:sharp-increasing} already gives $c\le\alpha_\star$.
Otherwise $\varrho^2>\varrho_\star^2=s_\star$. For fixed
$\epsilon\in(0,1/2)$, the map $q\mapsto(1-q)/(2-\epsilon-q)$ has derivative
$(\epsilon-1)/(2-\epsilon-q)^2<0$, so \eqref{gcon:sharp-decreasing} gives
$c\le(1-s_\star)/(2-\epsilon-s_\star)=b_\star/(1+b_\star-\epsilon)$, that is,
\begin{equation}
 c\le\frac{\alpha_\star}{1-\epsilon/(1+b_\star)}.
 \label{gcon:sharp-limit}
\end{equation}
Letting $\epsilon\downarrow0$ proves $c\le\alpha_\star$.
This proves assertion (iii) and completes the theorem.\qed

\setlength{\bibsep}{2pt}
\bibliography{references}
\end{document}